\def\ps@pprintTitle{%
 \let\@oddhead\@empty
 \let\@evenhead\@empty
 \def\@oddfoot{}%
 \let\@evenfoot\@oddfoot}
\DeclareMathOperator*{\argmin}{arg\,min}
\newcommand{\E}{\mathbb{E}}
\newcommand{\var}{\mathbb{V}ar}
\newcommand{\M}{\mathcal{M}}
\renewcommand{\P}{\mathcal{P}}
\definecolor{applegreen}{rgb}{0.55, 0.71, 0.0}
\definecolor{ao(english)}{rgb}{0.0, 0.5, 0.0}
\newcommand{\norm}[1]{{\left\lVert#1\right\rVert}}
\newcommand{\bA}{\boldsymbol A}
\newcommand{\bB}{\boldsymbol B}
\newcommand{\bb}{\boldsymbol b}
\newcommand{\bC}{\boldsymbol C}
\newcommand{\bD}{\boldsymbol D}
\newcommand{\be}{\boldsymbol e}
\newcommand{\bF}{\boldsymbol F}
\newcommand{\bG}{\boldsymbol G}
\newcommand{\bI}{\boldsymbol I}
\newcommand{\bM}{\boldsymbol M}
\newcommand{\bP}{\boldsymbol P}
\newcommand{\bQ}{\boldsymbol Q}
\newcommand{\bR}{\boldsymbol R}
\newcommand{\br}{\boldsymbol r}
\newcommand{\bS}{\boldsymbol S}
\newcommand{\bu}{\boldsymbol u}
\newcommand{\bV}{\boldsymbol V}
\newcommand{\bv}{\boldsymbol v}
\newcommand{\bW}{\boldsymbol W}
\newcommand{\bw}{\boldsymbol w}
\newcommand{\bX}{\boldsymbol X}
\newcommand{\bx}{\boldsymbol x}
\newcommand{\bY}{\boldsymbol Y}
\newcommand{\by}{\boldsymbol y}
\newcommand{\bZ}{\boldsymbol Z}
\newcommand{\bz}{\boldsymbol z}
\newcommand{\bzero}{\boldsymbol 0}
\newcommand{\abs}[1]{{\left\lvert#1\right\rvert}}
\newcommand{\bbeta}{\boldsymbol \beta}
\newcommand{\bdelta}{\boldsymbol \delta}
\newcommand{\bzeta}{\boldsymbol \zeta}
\newcommand{\boeta}{\boldsymbol \eta}
\newcommand{\bkappa}{\boldsymbol \kappa}
\newcommand{\bnu}{\boldsymbol \nu}
\newcommand{\bxi}{\boldsymbol \xi}
\newcommand{\bLambda}{\boldsymbol \varLambda}
\newcommand{\bPi}{\boldsymbol \varPi}
\newcommand{\bSigma}{\boldsymbol \varSigma}
\newcommand{\bPhi}{\boldsymbol \varPhi}
\newcommand{\bOmega}{\boldsymbol \varOmega}
\newcommand{\la}{-(p+d)}
\renewcommand{\S}{\mathcal{S}}
\newcommand{\hS}{\hat{\S}}
\newcommand{\hSmj}{\hS_j^+}
\newcommand{\Smj}{\S_j^+}
\newcommand{\Wald}{\text{Wald}}
\newcommand{\LM}{\text{LM}}
\newcommand{\pref}[1]{(\ref{#1})}
\newcommand{\partref}[2]{\ref{#1}(\ref{#2})}
\theoremstyle{definition}
\newtheorem{assumption}{Assumption}
\newtheorem{remark}{Remark}
\newtheorem{theorem}{Theorem}
\newtheorem{lemma}{Lemma}
\theoremstyle{plain}
\def\adl@drawiv#1#2#3{%
        \hskip.5\tabcolsep
        \xleaders#3{#2.5\@tempdimb #1{1}#2.5\@tempdimb}%
                #2\z@ plus1fil minus1fil\relax
        \hskip.5\tabcolsep}
\newcommand{\cdashlinelr}[1]{%
  \noalign{\vskip\aboverulesep
           \global\let\@dashdrawstore\adl@draw
           \global\let\adl@draw\adl@drawiv}
  \cdashline{#1}
  \noalign{\global\let\adl@draw\@dashdrawstore
           \vskip\belowrulesep}}
\begin{document}
\bibliographystyle{apa}
\begin{frontmatter}

\title{Inference in Non-stationary High-Dimensional VARs}
\author{Hecq, A.$^{\dagger}$, Margaritella, L.$^{\ddagger}$\footnote{The second and third author thank NWO for financial support under grant number 452-17-010. Previous versions of this paper have been presented at the (EC)$^2$ 2020, ICEEE 2021, IAAE 2021 and NESG 2022 conferences as well as the 2022 Maastricht Dimensionality Reduction and Inference in High-dimensional Time Series and the 2022 Maastricht-York Economics workshops. We gratefully acknowledge the comments by participants at these conferences. In addition, we thank Anders B. Kock, Joakim Westerlund, Gianluca Cubadda, Etienne Wijler for their valuable feedback. All remaining errors are our own.
Address correspondence to: Luca Margaritella, Department of Economics, Lund University, P.O. Box 7082, S-220 07 Lund, Sweden. E-mail: \href{mailto:luca.margaritella@nek.lu.se}{\textcolor{blue}{luca.margaritella@nek.lu.se}}}, Smeekes, S.$^{\dagger}$}
\address{$^{\dagger}$Maastricht University, $^{\ddagger}$Lund University\bigbreak \normalsize{\today}}
\date
\maketitle
\begin{abstract}
In this paper we construct an inferential procedure for Granger causality in high-dimensional non-stationary vector autoregressive (VAR) models. Our method does not require knowledge of the order of integration of the time series under consideration. We augment the VAR with at least as many lags as the suspected maximum order of integration, an approach which has been proven to be robust against the presence of unit roots in low dimensions. We prove that we can restrict the augmentation to only the variables of interest for the testing, thereby making the approach suitable for high dimensions. We combine this lag augmentation with a post-double-selection procedure in which a set of initial penalized regressions is performed to select the relevant variables for both the Granger causing and caused variables. We then establish uniform asymptotic normality of a second-stage regression involving only the selected variables. Finite sample simulations show good performance, an application to investigate the (predictive) causes and effects of economic uncertainty illustrates the need to allow for unknown orders of integration.

\bigbreak
\noindent \textit{Keywords:} Granger causality, Non-stationarity, 
Post-double-selection, 
Vector autoregressive models, 
High-dimensional inference.\\
\textit{JEL codes:} C55, C12, C32.
\end{abstract}
\end{frontmatter}

\doublespacing

\section{Introduction}
In this paper we construct an inferential procedure for Granger causality in high-dimensional non-stationary vector autoregressive (VAR) models. Investigating causes and effects in time series models has a long and rich history, dating back to the seminal work of \citet{granger1969investigating}. The statistical assessment of the directional predictability among two (or blocks of) time series can have important consequences for decision making processes. Applications of Granger causality range over from macroeconomics, finance, network theory, climatology and even the neuroscience. For instance, the evidence of causality between money and gross domestic product is a  long-debated issue in the macroeconomic literature. This was first discussed in \citet{sims1990inference} and \citet{stock1989interpreting} and it is still argued to this day, see e.g., \citet{miao2020high}. Financial applications of Granger causality include, among others, \citet{billio2012econometric} who study the connectedness among monthly returns of hedge funds, banks, broker/dealers, and insurance companies. \citet{hecq2021granger} build Granger causality networks from high-dimensional vector autoregressive models, describing the dynamic volatility spillovers among a large set of stock returns. Many applications are also found in climate science, for instance in trying to understand and disentangle the causes of climate change. Among others, \citet{stern2014anthropogenic} investigate causality between greenhouse gas emissions and temperature. In neuroscience, Granger causality is employed to understand mechanisms underlying complex brain function and cognition, with examples in the field of functional neuroimaging \citep[see e.g.,][for some reviews]{seth2015granger, friston2013analysing}.

With the increased availability of larger and richer datasets, these causality concepts have recently been extended to a high-dimensional setting where they can benefit from the inclusion of many more series within the information set. In the original concept of Granger causality \citep{granger1969investigating}, conditioning on a given information set plays a central role.\footnote{Throughout the paper we focus on Granger causality in mean. While there are clearly many other forms of causal or predictive relations possible, Granger causality in mean is the most prominent and therefore the focus of our attention. It should therefore be understood that whenever we mention Granger causality in the paper, we refer to Granger causality in mean.} After all, Granger causality is a study of predictability. Only by considering predictability given a specific conditioning set, is possible to attach some sort of causal meaning to the outcome (where the nature of that causality is still up for debate, of course). To identify direct `truly causal' links between variables would require one to condition upon all possible variables that may be related to the variables of interest. Otherwise, any discovered relation may simply be an artifact of omitted variable bias which would invalidate a causal interpretation. Indeed, Granger himself envisioned the information set as ``all the knowledge in the universe available at that time" \citep[][p.330]{granger1980testing}. While this concept cannot be operationalized with any finite amount of data, the availability of increasingly high-dimensional datasets, along with the econometric techniques to analyze them, provide a great opportunity for turning Granger causality from `just predictability' into a concept to which at least some form of causal interpretation can be attached. 

Recently, \citet{hecq2021granger} (HMS henceforth) proposed a test for Granger causality in high-dimensional vector autoregressive models. This combines dimensionality reduction techniques based on penalized regression such as the lasso of \citet{tibshirani1996regression}, with the post-double selection procedure of \citet{belloni2014inference} designed to guarantee uniform asymptotic validity of the post-selection least squares estimator. However, HMS assumed stationarity of all the time series considered. This is a typical assumption in much of the literature on regularization methods, in particular when inference is considered. While the literature on estimation and forecasting with high-dimensional non-stationary processes is growing \citep[see e.g.][and the references therein]{SmeekesWijler20}, this is not the case for inference due to the complexities arising with unit roots and cointegration, which already have severe effects in low-dimensional settings.

Working with (data possibly transformed to yield) stationary time series avoids these complications in the asymptotic analysis and allows to invoke standard (Gaussian) limit theory, thereby enabling the use of standard inferential procedures such as $t$ and $F$-tests. On the other hand, it assumes prior knowledge of the order of integration of all the time series entering the model. This prior knowledge is usually acquired via unit root tests such as the augmented Dickey-Fuller (ADF) test \citep[see][]{dickey1979distribution}. However, these tests are sensitive to their specifications, such as the inclusion of a deterministic time trend in the regression equation and the choice of the lag length. Even when performing what seems to be the same test in, for example, different R packages, it may actually lead to different outcomes \citep{SmeekesWilms20}. That does not even take into account that there are many different unit root tests, none of which is clearly preferred to the others, as well as how specific data features (and their treatment) such as seasonality (adjustments), structural breaks and outliers may affect these tests. As such, the outcome of a unit root test is ambiguous at best; even more so if one also takes into account their well-documented low power \citep{cochrane1991critique} on the one hand and the accumulation of the probability of obtaining false positives through multiple testing -- especially in high dimensions -- on the other hand.

But even if one did know the true orders of integration, transformations such as differencing to achieve stationarity are not innocuous. In particular, information about long-run relations such as cointegration is deleted from the series. Yet the error correction mechanisms at play for the movement towards long-run equilibria may well induce (Granger) causal relations that are not apparent anymore in the transformed data. While this is not necessarily a problem if one is interested in the transformed data as such (for example the growth rate of economic series may be an object of interest in themselves), in many applications we are interested in the level of the series and the transformations are only done to avoid issues with non-stationarity. It would therefore be very beneficial to practitioners to have high-dimensional inference methods available that are robust to (unknown) orders of (co)integration.

In this paper we develop a method which allows for testing Granger causality in high-dimensional VAR models, irrespective of the (co)integration properties of the time series in the VAR. We avoid any bias coming from unit root and cointegration pre-testing and instead use the VAR in levels directly to perform inference on the (Granger causality) parameters of interest. The procedure we design builds on the work of \citet{toda1995statistical} who first considered a simple lag augmentation of the system, which they showed provides asymptotically normal estimators irrespective of potential unit roots and cointegration \citep[see also][]{dolado1996making}. The same approach has also recently been used to develop inference on impulse response functions \citep{inoue2020uniform,montiel2021local} that is robust to unit roots. We modify the approach outlined above by confining lag augmentation to only the variable(s) tested as Granger causing, instead of adding lags of all variables. We show that this modification, which makes it suitable for application to high-dimensional VARs, does not affect the asymptotic properties. In addition, we modify the post-double selection procedure of HMS to prevent the possibility of spurious regression, thereby extending its uniform validity to data with potential (co)integrated time series.

The remainder of the paper is organized as follows: Section \ref{sec:model} introduces the model, the Granger causality testing and our lag-augmented post-double-selection approach. The theoretical properties of our method are studied in Section \ref{sec:theory}. In Section \ref{sec:sim} we investigate finite-sample performance through a simulation study, while Section \ref{sec:choicep} proposes a data-driven method to find a sensible upper bound for the lag length in the VAR. Section \ref{sec:emp_appl} uses the proposed testing framework to investigate the causes and effects of economic uncertainty in the context of the FRED-MD dataset. Section \ref{sec:conclusion} concludes. In Appendix \ref{sec:app_theory} the theory underlying the lag augmentation and the asymptotic properties is developed. Introductory Lemmas and complementary results to the following two appendices are also given. Appendix \ref{sec:verifyUR} is devoted to the validation of high-level assumptions, while additional empirical results are presented in Appendix \ref{sec:vix}.

A few words on notation. For any $n$-dimensional vector $\bx$, we let $\norm{\bx}_p = \left(\sum_{i=1}^n |x_i|^p \right)^{1/p}$ denote the $\ell_p$-norm. For any index set $S \subseteq \{1, \ldots, n\}$, let $\bx_{S}$ denote the sub-vector of $\bx_t$ containing only those elements $x_i$ such that $i \in S$. $|S|$ denotes the cardinality of the set $S$. We use $\xrightarrow{p}$ and $\xrightarrow{d}$ to denote convergence in probability and distribution, respectively.

\section{Granger Causality Tests for Nonstationary High-Dimensional Data}\label{sec:model}
In this section we propose our model, our strategy for lag augmentation of the Granger causality tests, and the post-double-selection procedure needed to achieve uniformly valid inference. Section \ref{sec:GClag} first presents the model and lag-augmented Granger causality test. Next, \ref{sec:pds} sets out the post-double-selection procedure.

\subsection{Lag-Augmented Granger Causality Testing}\label{sec:GClag}
Let $\bz_1,\ldots,\bz_T$ be a $K$-dimensional multiple time series process, where $\bz_t=(y_t, x_t, \bw_t)^{'}$. Here $y_t$ is the series we would like to test being Granger caused, $x_t$ the potentially Granger causing series and $\bw_t$ is a $K-2$ dimensional vector of controls constituting the \textit{information set}. We allow for $K$ to be large, potentially larger than (and growing with) the sample size $T$. We assume $\bz_t$ is generated by a VAR($p$) process as
\begin{equation} \label{eq:dgp}
\bz_t=\bA_1 \bz_{t-1}+\cdots+\bA_{p}\bz_{t-{p}}+\bu_t, \qquad t=p+1,\ldots,T\;,
\end{equation}
where $\bA_1,\ldots,\bA_{p}$ are $K\times K$ parameter matrices and $\bu_t$ is a $K\times 1$ vector of error terms.
\begin{assumption} \label{ass:basic}
The VAR model in \eqref{eq:dgp} satisfies:
\begin{enumerate}[(a)]
\item\label{ass:basic1} $\{\bu_t\}_{t=1}^T$ is an mds with respect to  the filtration $\mathcal{F}_t = \sigma(\bz_t, \bz_{t-1}, \bz_{t-2}, \ldots)$ $\bu_t$ such that $\E (\bu_t| \mathcal{F}_{t-1}) = \bzero$ for all $t$; the $K\times K$ covariance matrix $\bSigma_{u} = \E (\bu_t \bu_t^\prime)$ is positive definite and $\mathbb{E}|\boldsymbol{u}_{t}|^{2+\delta}\leq \infty$, for $\delta>0$. 
\item\label{ass:basic2} The roots of $\det(\bI_{K}-\sum_{j=1}^{p} \bA_j z^j)$ can either lie on the unit disc or outside.
\end{enumerate}
\end{assumption}

Note that Assumption \ref{ass:basic}(\ref{ass:basic2}) allows for the time series to have unit roots and be cointegrated. We specifically allow elements of $\bz_t$  to be integrated of order $d$: $I(d)$ for $d=0,1,2$ and possibly cointegrated of order $d,b$: $CI(d,b)$ with $0<b\leq d$. We formulate more specific assumptions on (co)integration properties in Section \ref{sec:theory} and Appendix \ref{sec:verifyUR}.

We are interested in testing the null hypothesis of Granger non-causality in mean between the Granger causing series $x_t$ and the Granger caused $y_t$, conditional on all the series in $\bw_t$. For the moment we assume the lag-length $p$ in \eqref{eq:dgp} to be known; we shall further elaborate on data-driven ways to select $p$ in Section \ref{sec:choicep}. Also, in order to ease the notation, we omit both the intercept and any polynomial time trend from the model; the results we derive easily extend to those cases as well. It is convenient to introduce the following stacked notation. Let $\bX_{-p} = (\bx_{-1}, \ldots, \bx_{-p})$ denote the $T\times p$ matrix containing the $p$ lags of the Granger causing variable $x_t$, where $\bx_{-j}$ is the vector containing the observations corresponding to its $j$-th lag.\footnote{To keep $T$ observations for the lags, one can replace the missing values for $t\leq 0$ with zeros. Alternatively, the vectors are shortened to contain $T-p$ observations only. Both approaches are equivalent asymptotically, and for notational simplicity in the following we do not explicitly distinguish between them.} Similarly, we define $\bY_{-p}$ containing the $p$ lags of the Granger caused variable $y_t$ and $\bW_{-p}$ for the conditioning set such that $\bW_{-p}$ is a $T\times (K-2)p$ matrix. Then, our equation of interest for the Granger causality testing can be stated as
\begin{equation}\label{eq:dgp_beta}
\by = \bX_{-p} \bbeta + \bY_{-p} \bdelta_1 + \bW_{-p} \bdelta_2 + \bu = \bX_{-p} \bbeta + \bV \bdelta + \bu,
\end{equation} 
where for notational simplicity we define $\bV = (\bY_{-p}, \bW_{-p})$ as the matrix containing all control variables and $\bdelta = (\bdelta_1', \bdelta_2')'$ its coefficients.

Testing for no Granger causality is then equivalent to testing the following null hypothesis: 
\begin{equation}\label{eq:H0}
H_0: \bbeta=\bzero\quad \text{against}\quad H_1: \bbeta\neq\bzero.
\end{equation}

To account for the potential unit roots in the system, we follow the approach pioneered by \citet{toda1995statistical} and \citet{dolado1996making} of augmenting the regression of interest with redundant lags of the variables. However, in contrast to the existing approaches, we only augment the lags of the Granger causing series $x_t$. That is, we consider the lag-augmented regression
\begin{equation}\label{eq:lag_augmented}
\by = \bX_{-p} \bbeta + \sum_{j=1}^{d} \beta_{p+j} \bx_{-(p+j)} + \bV \bdelta + \bu = \bX_{\la} \bbeta_{+} + \bV \bdelta + \bu,
\end{equation}
where $\bX_{\la} = (\bX_{-p}, \bx_{-(p+1)}, \ldots, \bx_{-(p+d)})$ contains the $p+d$ lags of $x_t$. Here $d$ represents the maximum order of integration one suspects the series are having. Note that in fact $\beta_{p+j} = 0$ for all $j\geq 1$ such that $\bbeta_{+}=(\bbeta', \bzero_d')'$, as we are adding redundant variables. It should be clear that by testing whether the first $p$ elements of $\bbeta_{+}$ -- those corresponding to $\bbeta$ in \eqref{eq:dgp} -- we can perform the same test of Granger non-causality as in the original setup.

By adding ``free'' lags of $x_t$, we in essence allow this variable to ``difference itself'' into the correct order to remove the unit roots. As a simple illustration, consider the following DGP where $x_t$ may be $I(1)$:
\begin{equation} \label{eq:simple}
y_t = \beta x_{t-1} + u_{1,t}, \qquad x_t = \rho x_{t-1} + u_{2,t}, \quad \abs{\rho} \leq 1.
\end{equation}
Testing whether $\beta=0$ by estimating this regression directly, is complicated as the limit distribution of the least squares estimator and the corresponding test changes depending on whether $x_t$ is $I(1)$ or $I(0)$. By adding a redundant lag we can write \eqref{eq:simple} as
\begin{equation} \label{eq:simple2}
y_t = \beta x_{t-1} + 0 x_{t-2} + u_{1,t} = \beta u_{2,t-1} + \beta \rho x_{t-2} + u_{1,t},
\end{equation}
suggesting that regressing on $x_{t-1}$ and $x_{t-2}$ is equivalent to regressing on $u_{2,t-1}$ and $x_{t-2}$. Moreover, if $u_{2,t}$ were observed, we could test $\beta=0$ directly using its regression coefficient. In Appendix \ref{sec:LA} we show formally that not only the equivalence in \eqref{eq:simple2} continues to hold in the presence of higher order lags as well as control variables, but also that testing for Granger causality in the lag-augmented regression is in fact equivalent to testing on the appropriately transformed variables. This, in turn, allows us to retrieve the asymptotic normality of the least squares estimator regardless of the order of integration, provided that the lag order $p$ of the VAR and \textit{maximum} order of integration $d$ are correctly specified.

The main difference compared to the original results of \citet{toda1995statistical} is that we only augment the Granger causing series $x_t$. While this difference is not of importance when $K$ is small, it opens the door for high-dimensional applications where $K$ is large, and even larger than $T$. As we cannot estimate such regressions with least squares anymore, we combine lag-augmentation with the post-double-selection framework of HMS to construct Granger causality tests in high-dimensional models that are robust to unknown orders of integration. We describe the method in the next section.

\begin{remark}\label{rmk:bivariate}
For ease of exposition we confine our attention to the study of bivariate Granger causality relations, conditional on a large information set. At the cost of more involved notation and algebra, our approach can be extended to Granger causality between multiple variables. In that case, lag augmentation is needed for all Granger causing variables, which is only feasible if the block of Granger causing variables is not too large. HMS provide details about this in the stationary setting; the same approach can be adapted here.
\end{remark}

\begin{remark}\label{rmk:overdiff}
One important aspect of the current framework is that the lag-length $p$ of the VAR is necessarily larger than, or at least equal to, the suspected maximum order of integration $d$. It is therefore important to specify $p$ correctly in practice. We return to this issue in Section \ref{sec:choicep}. One might also worry that specifying $d$ too high, or having mixed orders of integration among multiple Granger causing variables would lead to over-differencing, i.e., moving average unit roots being introduced by differencing stationary time series \citep[see e.g.,][]{chang1994recognizing}. This, however, does not happen here as the additional lags of the Granger causing and Granger caused variables are used ``at convenience"; if they are not needed because the variables are already stationary, inclusion of these redundant will at most marginally decrease the power of the test given the small over-specification of the lag length.
\end{remark}

\begin{remark}\label{rmk:d2}
Even if the variables in a particular dataset are thought to be at most of order $I(1)$, it may still pay off to take $d=2$. The choice of $d=2$ is supported by simulations reported later in Section \ref{sec:sim}. It is well known that when one or more roots of the characteristic polynomial are close to unity, the distribution of the least squares estimator becomes skewed, yielding estimators which tend to underestimate the true autoregressive parameters \citep[see e.g.][]{fuller2009introduction}. This also causes difficulties in performing inference on these parameters. By augmenting with $d=2$ lags, one can avoid any issues with near unit roots. The simulations reported in Section \ref{sec:sim} confirm that in the presence of substantial autocorrelation beyond the first lag, augmenting with $d=2$ lags is an easy way to improve the finite sample behaviour of the test.
\end{remark}

\begin{remark}\label{rmk:coint}
If the $I(d)$ series object of the Granger causality test are also actually cointegrated, then the lag-augmentation does not serve any purpose as no spurious relation would be estimated \citep{dolado1996making}. This causes a slight loss in power which however is in practice minimal, as shown in Section \ref{sec:sim}. If, however, the series in object are not cointegrated, then the lag-augmentation becomes paramount to render the series difference-stationary before the estimation and testing. Note that also all sorts of situations in between are allowed: some variables may be cointegrated, others contain `pure' stochastic trends, and others may be stationary. In fact, in our high-dimensional setup where we apply the test to a large dataset containing many variables, such mixed properties seem likely. 
\end{remark}

\subsection{Inference after selection by the lasso} \label{sec:pds}
Assuming that $\bdelta$ is sparse, in the sense that many elements of the vector are equal to zero, one might consider estimating \eqref{eq:lag_augmented} with a technique that does variable selection such as the lasso, and then re-estimating the equation including only the selected control variables. However, doing so we run into the problems with inference after model selection \citep{leeb2005model}, where even if selection is done consistently, the (diminishing) probability of omitting a relevant variable causes a sufficiently large bias to prevent uniform convergence of the post-selection least squares estimator to a Gaussian limit distribution. To circumvent this problem \citet{belloni2014inference} introduced the post-double-selection (PDS) method that not only selects relevant variables on the outcome variable, but also on the treatment variable. This double selection reduces the probability of admitting relevant variables to such an extent that it does not affect the asymptotic distribution anymore.

The PDS framework was used by HMS to develop high-dimensional tests for Granger causality in sparse VAR models. In this section we show how to adapt the post-double selection framework of HMS to the unit root non-stationary framework with lag augmentation. In this framework we first perform a set of initial regressions -- to be estimated with variable selection techniques such as the lasso -- of the dependent variable plus the explanatory variables of interest (here the $p$ lags of $x_t$) on all other variables. To properly account for potential unit roots and avoid spurious regression in these initial regressions, we need to slightly adapt the setup of HMS.

Let $\bX_{-p,\backslash\{j\}}$ denote the matrix $\bX_{-p}$ from which the $j$-th column, corresponding to the $j$-th lag of $x_t$, has been removed. Let $\bZ_{-p} =(\bX_{-p}, \bY_{-p}, \bW_{-p})$ denote the matrix of all (non-augmented) variables and $\bZ_{-p,\backslash\{j\}}=(\bX_{-p,\backslash\{j\}}, \bY_{-p}, \bW_{-p})$.
Then we consider the following regressions in the first step:
\begin{equation} \label{eq:lasso_steps}
\by = \bZ_{-p} \boeta_0 + \bu,\qquad
\bx_{-j} = \bZ_{-p,\backslash\{j\}} \boeta_j + \be_j, \quad j=1,\ldots,p.
\end{equation}
In contrast to HMS, who follow \citet{belloni2014inference} by only regressing $\by$ and $\bx_{-j}$ on the controls $\bY_{-p}$ and $\bW_{-p}$, we also add all other lags of $x_t$. This is done to avoid that the error terms in \eqref{eq:lasso_steps} contain unit roots and the regressions become spurious.

\begin{remark} \label{rem:p=d}
Note that in the situation where $p=d$, there may still be a risk of spurious regression in the first step. E.g.,~consider $p=d=1$ where the regression of $x_{t-1}$ on the remaining variables will not contain any lags (or leads) of $x_{t-1}$. This yields a spurious regression if $x_t$ is a non-cointegrated unit root process. We therefore recommend to always take $p \geq d+1$ in such cases. If this is not desirable, e.g.~if the sample size is too small to sustain such a large $p$, it is also possible to augment the first-stage regressions with an additional lag of $x_t$ to eliminate the possibility of spurious regression. In the following we work under the condition that the error terms $\be_j$ are $I(0)$, thereby implicitly assuming that one of the two measures suggested above have been taken if needed.
\end{remark}

Although the concept is tricky with integrated variables, the coefficients $\boeta_j$ in \eqref{eq:lasso_steps} can be thought of as best linear predictors. Indeed, as the error terms $\be_j$ are $I(0)$, these coefficients are ensured to exist and can be defined as the probability limits of the respective least squares estimators while keeping the dimension $K$ fixed.

As will be formalized in Assumption \ref{as:hlev}, we assume that the coefficients $\boeta_j$ are sparse. Let $\S_j = \{m>p: \eta_{m,j} \neq 0\}$, $j=0,1, \ldots, p$, be the sets of active variables in \eqref{eq:lasso_steps}.\footnote{We exclude the variables corresponding to the lags of $x_t$ from $\S_j$ as these will always be included in the second stage.} Then, we need that the cardinality of these sets is small; that is, smaller than $T$ and at most growing at a slow rate of $T$. In fact, the union over all these sets, $\S := \bigcup_{j=0}^{p} \S_j$ is our set of interest. This set represents all variables needed to control for when regressing $\by$ on $\bX_{-p}$, as they either have non-zero coefficients in \eqref{eq:dgp_beta} \emph{or} are correlated with $x_t$. In fact, variables would need to have both properties to cause omitted variable bias if they were not included in the final regression. By aiming to recover $\S$, we therefore have two opportunities to select relevant variables. This is sufficient to reduce the probability of missing them to be asymptotically negligible.

Let $\bV_{\S}$ denote the matrix consisting of only those columns of $\bV$ that corresponds to the selected variables in $\S$, and $\bdelta_{\S}$ the corresponding parameter vector. Our goal of the PDS regression is then to recover the regression
\begin{equation} \label{eq:sec_stage_trueS}
\by = \bX_{\la} \bbeta_+ + \bV_{\S} \bdelta_{\S} + \bu,
\end{equation}
in the second stage, and base inference on this.

To obtain an estimate for the set $\S$, one can use the lasso or any of its relatives that similarly ensure variable selection. The lasso \citep[see][]{tibshirani1996regression} simultaneously performs variable selection and estimation of the parameters in \eqref{eq:lasso_steps} by solving the following minimization problems 
\begin{equation}
\begin{split}\label{eq:lasso}
&\hat{\boeta}_{0}= \underset{\boeta}{\arg\min}\bigg(T^{-1}\norm{\by -  \bZ_{-p} \boeta}_2^2 + \lambda \norm{\boeta}_1 \bigg),\\
&\hat{\boeta}_{j}=\underset{\boeta}{\arg\min}\bigg(T^{-1}\norm{\bx_{-j} - \bZ_{-p,\backslash\{j\}} \boeta}_2^2 + \lambda \norm{\boeta}_1 \bigg), \quad j=1,\ldots,p,
\end{split}
\end{equation}
where $\lambda$ is a non-negative tuning parameter determining the strength of the penalty. Here, we follow the framework of HMS and use the Bayesian information criterion (BIC) in selecting the tuning parameter, coupled with a penalty lower bound ensuring a maximum of selected variables per estimated equation (see Remark \ref{rmk_5} for details). Minimizing an information criterion (IC) in order to determine an appropriate data-driven $\lambda$ is one way to deal with dependent data (see HMS for an overview of other methods and their finite sample behaviors).

The first step of our PDS method is then to estimate each of the regressions in \eqref{eq:lasso_steps} using a penalized regression technique such as in \eqref{eq:lasso}. Let $\hS_j = \{m>p: \hat\eta_{m,j} \neq 0\}$ represent the corresponding sets of active variables retained in each of these regressions, and let $\hS = \bigcup_{j=0}^{p} \hS_j$ denote the set of all active variables. Then, we base our inference on the second-stage regression
\begin{equation} \label{eq:sec_stage_hatS}
\by = \bX_{\la} \bbeta_+ + \bV_{\hS} \bdelta_{\hS} + \bu,
\end{equation}
which may now be treated as if no selection took place, and can simply be estimated by least squares. As we will show in the next section, the OLS estimator converges uniformly to a normal distribution, which in turn makes standard tests such as the Wald test or the LM test applicable with their regular limiting distributions.

Algorithm \ref{alg:pdslm} describes the main steps of our post-double selection lag-augmented (PDS-LA) Granger causality test implemented through the Lagrange Multiplier (LM) or Wald test.

\begin{algorithm}[H]
\caption{Post-double selection lag-augmented (PDS-LA) Granger causality test} \label{alg:pdslm}
\begin{itemize}
\item[\textbf{[1]}]
Estimate the initial partial regressions in \eqref{eq:lasso_steps} by an appropriate sparsity-inducing estimator such as the (adaptive) lasso, and let $\hat{\boeta}_0, \ldots, \hat{\boeta}_{p}$ denote the resulting estimators. Collect the selected variables in the sets $\hS_j = \{m>p: \hat\eta_{m,j} \neq 0\}$ for $j=0,1,\ldots,p$.

\item[\textbf{[2]}] Let $\hS = \bigcup_{j=0}^p \hS_j$ denote the set of all variables selected in at least one regression at Step [1]. Denote its cardinality by $\hS=\abs{\hS}$.

\medskip
Then, consider one of the following two tests:
\begin{center}
\hrule
PDS-LA-LM\medskip
\hrule
\end{center}

\item[\textbf{[3]}] Regress $\by$ on $\bV_{+,\hS} := (\bx_{-(p+1)}, \ldots, \bx_{-(p+d)}, \bV_{\hS})$ by OLS and obtain the residuals $\hat{\bxi}$.

\item[\textbf{[4]}] Regress $\hat{\bxi}$ onto $\bV_{+,\hS}$ plus the Granger causing variables $\bX_{-p}$, retaining the residuals $\hat{\bnu}$. Then obtain $R^2 = 1 - \hat{\bnu}^\prime \hat{\bnu} / \hat{\bxi}^\prime \hat{\bxi}$

\item[\textbf{[5a]}] 
Reject $H_0$ if $\LM = TR^2 > q_{\chi_{p}^2}(1-\alpha)$, where $q_{\chi_{p}^2}(1-\alpha)$ is the $1-\alpha$ quantile of the $\chi^2$ distribution with $p$ degrees of freedom.

\item[\textbf{[5b]}] 
Reject $H_0$ if \small $\bigg(\frac{T - \hS-(p+d)}{p} \bigg)\bigg(\frac{R^2}{1 - R^2} \bigg)>q_{F_{p, T - \hS-(p+d)}} (1-\alpha)$\normalsize, where $q_{F_{p, T - \hS-(p+d)}}$ is the $1-\alpha$ quantile of the $F$ distribution with $p$ and $T - \hS-(p+d)$ degrees of freedom.

\begin{center}
\hrule
PDS-LA-Wald\medskip
\hrule
\end{center}

\item[\textbf{[3]}] Regress $\by$ on $\bX_{-p}$ and $\bV_{+,\hS} := (\bx_{-(p+1)}, \ldots, \bx_{-(p+1)}, \bV_{\hS})$ by OLS, obtaining $\hat{\bbeta}_p$, the estimated coefficients corresponding to $\bX_{-p}$, and the residuals $\hat{\bu}_+$.

\item[\textbf{[4]}] Reject $H_0$ if $\Wald = \hat{\bbeta}_p' \left(\widehat{\var} \hat{\bbeta}_p \right)^{-1} \hat{\bbeta}_p > q_{\chi_{p}^2}(1-\alpha)$,
where $\widehat{\var} \hat{\bbeta}_p$ is the OLS standard error of $\hat{\bbeta}_p$.
\end{itemize}
\end{algorithm}

\begin{remark}
In Algorithm \ref{alg:pdslm}, the choice among Step [5a] or [5b] does not affect the finite sample results of the test whenever the sample size $T$ is large enough. The small sample correction in [5b] \citep[see][]{kiviet1986rigour} has a wider practical applicability since [5a] suffers from size distortion in small samples, therefore in Section \ref{sec:sim} we always use [5b] for the Monte-Carlo simulations of the PDS-LA-LM test. Unreported simulations show that the Wald test performs very similarly to the LM test, if with slightly bigger size distortions.
\end{remark}

\begin{remark}\label{rmk_5}
The algorithm designed in HMS employs a lower bound on the penalty to ensure that in each selection regression at most $cT$ terms gets selected, for some $0<c<1$. Similarly, we also employ a $c=0.5$ lower bound on the selected variables in Algorithm \ref{alg:pdslm}. This ensures that in each equation the lasso does not select too many variables, as this would render the union too large and hence infeasible for post-least-squares estimation. Our PDS procedure does not require consistent model selection but only consistency (see Assumption \partref{as:hlev}{as:cons} below). Mistakes are allowed to occur in the selection: variables might be incorrectly included as long as the estimator remains sufficiently sparse and consistency is guaranteed. Note that even with the lower bound it remains possible that the lasso selects sufficiently distinct variables at every selection step. Should such a case occur where the number of selected variables $\hS$ is larger than the sample size, post-OLS would be infeasible. One way to avoid this would be to impose an ad-hoc increase on the tightness of the bound on the selected variables. Alternatively one could switch to a sparser estimator, such as the adaptive lasso.
\end{remark}

\begin{remark} \label{rmk:include_y}
Although it is not necessary for the theory to hold, we advocate to always include the $p$ lags of $y_t$ in the second stage regression. Erroneously omitting them might induce spurious regression, which is to be avoided. We achieved this by including them without penalty in the first stage regressions, such that they will be included in the active set by default. Additionally, we do not penalise the lags of $x_t$ in the first-stage regressions, to further reduce the probability of spurious regression and improving finite sample behaviour.
\end{remark}

\section{Theoretical Properties}\label{sec:theory}
In this section we present the main theoretical result. We show the post-selection, lag-augmented, least squares estimator $\hat \bbeta_{p}$ is asymptotically Gaussian uniformly over the parameter space. Therefore, tests for Granger causality are $\chi^2$ distributed. For the PDS-LA method to deliver uniformly valid inference, a set of assumptions is necessary. Among others, sparsity needs to be assumed on the high-dimensional vector $\bdelta$. Also, a restricted (sparse) eigenvalue condition needs to be assumed on the (scaled) Gram matrix, bounding away its smallest eigenvalue over a subset (cone) of $\mathbb{R}^K$. This condition essentially guarantees that over a sufficiently large subset of the parameter space the Gram matrix is well behaved and thus invertible. An empirical process bound is also required. This states that the (scaled) process $\bZ_{-p}'\bu$ uniformly concentrates around zero. 

These conditions are standard in the literature to prove the consistency of the lasso. However, when nonstationary processes are considered, more refined results are required. In particular, scaling becomes more complicated as time series of different orders need to be scaled with different rates. In addition, when time series are cointegrated, we first need to separate the stochastic trends from the stationary components before the appropriate scaling can be applied.

For example, suppose that $\bz_{t}$ can be written as the cointegrated system
\begin{equation*}
\Delta \bz_t = \bA \bB' \bz_{t-1} + \bu_t,
\end{equation*}
where $\bu_t$ is $I(0)$ and $\bA$ and $\bB$ are $n\times r$ matrices with $r < n$. Then define $\bzeta_t := \bz_t \bQ$ where $\bQ := (\bB, \bA_{\perp})'$ and $\bA_\perp$ is the $n \times n - r$ orthognal complement of $\bA$, such that $\bA_\perp' \bA = \bzero$. Then we can write
\begin{equation*}
\Delta \bzeta_t = \bQ \bA \bB' \bQ^{-1} \bzeta_{t-1} + \bQ \bu_t
=
\begin{bmatrix}
\bB' \bA & \bzero \\ \bzero & \bzero
\end{bmatrix} 
\bzeta_{t-1} + \bQ \bu_t.
\end{equation*}
Now the first $r$ variables in $\bzeta_t$ are $I(0)$, while the remaining ones are $I(1)$. Further details on constructing such a matrix $\bQ$ can be found in, e.g., \citet[Chapter 7]{lutkepohl2005new}.

Given the existence of such a matrix $\bQ$, we can without loss of generality assume that we may partition $\bzeta_t$ as $\bzeta_t = (\bzeta_{0,t}, \bzeta_{1,t}, \bzeta_{2,t})'$, where $\bzeta_{i,t}$ contains the $I(i)$ variables. We then consider a scaling matrix
\begin{equation} \label{eq:D_T}
\bD_{T} :=
\begin{bmatrix}
T^{1/2} \bI_{n_0} & \bzero & \bzero \\
\bzero & T^{-1} \bI_{n_1} & \bzero \\
\bzero & \bzero & T^{-2} \bI_{n_2}
\end{bmatrix},
\end{equation}
where $n_i$ corresponds to the number of $I(i)$ variables in $\bzeta_t$. Finally, we let $\bG_T := \bQ \bD_T$ as the final `scaling and rotation' matrix.

\begin{assumption}\label{as:hlev}
Let $\delta_T$ and $\Delta_T$ denote sequences such that $\delta_T, \Delta_T \rightarrow 0$ as $T \rightarrow \infty$. For $\bD_T$ as defined in \eqref{eq:D_T}, assume there exists a $Kp \times Kp$ matrix $\bG_{T,\bZ} = \bQ \bD_{T}$ conformably with $\bZ_{-p}$ such that the following holds:
\begin{enumerate}[(a)]
\item\label{as:ep} \textbf{Deviation Bound: } With probability at least $1-\Delta_T$ we have that the empirical process satisfes $\norm{\bG_T^{-1\prime} \bZ_{-p}'\bu}_{\infty}\leq \bar{\gamma}_T$, for the deterministic sequence $\bar{\gamma}_T$. 

\item\label{as:bound} \textbf{Boundedness:} Let $\bbeta$ in \eqref{eq:dgp_beta} be in the interior of a compact parameter space $\mathbb{B}\subset \mathbb{R}^K$.

\item\label{as:cons} \textbf{Consistency:} 
With probability $1-\Delta_T$, the following error bounds for $\boeta_j$ and $\hat{\boeta}_j$, defined in \eqref{eq:lasso_steps} and \eqref{eq:lasso}, hold:
\begin{equation*}
\begin{split}
&\norm{\bZ_{-p} \left(\hat{\boeta}_0 - \boeta_0 \right)}_2^2 \leq  \delta_T^2 T^{1/2}, \qquad \norm{\hat{\boeta}_0 - \boeta_0}_\infty \leq \delta_T T^{-1/4},\\ &\norm{\bZ_{-p,\backslash\{j\}} \left(\hat{\boeta}_j - \boeta_j \right)}_2^2 \leq  \delta_T^2 T^{1/2}, \quad j=1,\ldots,p.
\end{split}
\end{equation*}

\item\label{as:spar} \textbf{Sparsity:} Let $s = \abs{\S}$ and $\hS = \abs{\hS}$ denote the number of active variables in population and sample, respectively. Then with probability at least $1 - \Delta_T$, we have that $\max(s, \hS) \leq \bar{s}_T$ for the deterministic sequence $\bar{s}_T$. 

\item\label{as:rsev} \textbf{Restricted Sparse Eigenvalues:} 
For any $\boeta \in \mathbb{R}^{Kp}$ with $\norm{\boeta}_0 \leq \bar{s}_T$, we have with probability at least $1 - \Delta_T$
\begin{equation*}
\norm{\boeta }_1 \leq \sqrt{\bar{s}_T} \norm{\bZ_{-p} \bG_{T, \bZ}^{-1} \boeta}_2 / \kappa_{T,\min},
\end{equation*}
for the deterministic sequence $\kappa_{T,\min} > 0$.
\item\label{as:rates}\textbf{Rate Conditions:} the deterministic sequences bounding sparsity ($\bar{s}_T$), thickness of empirical process tails ($\bar{\gamma}_T$) and minimum eigenvalue ($\kappa_{T,\min}$) must satisfy
\begin{equation} \label{rates_interplay}
T\frac{\bar{s}_T\bar{\gamma}_T}{\kappa_{T,\min}}\leq \delta_T.
\end{equation}
\end{enumerate}
\end{assumption}

Condition (\ref{as:ep}) bounds the empirical process with high probability. Such deviation bounds for non-stationary time series can be found in, among others, \citet[Lemma A.3]{smeekes2021automated}, \citet[Proposition 1]{mei2022lasso} and \citet[Lemma 2]{wijler2022restricted}. Condition \pref{as:bound} is standard and assumes compactness of the parameter space of the vector $\boldsymbol{\beta}$ which in turn implies the boundedness.

Condition \pref{as:cons} supposes consistency of the first-stage estimator. Consistency is thereby established using an error bound or ``oracle inequality". Such inequality gives the rate of convergence of the estimator as a function of: the tuning parameter $\lambda$, the deviation bound, the cardinality $s$ of the active set and the restricted (sparse) eigenvalue. As such, this condition is intimately related to \pref{as:ep}, \pref{as:spar} and \pref{as:rsev}. For stationary time series many results exist under a variety of settings; see e.g.~\citet[Theorem 1]{masini2022regularized} for VAR models or \citet[Corollary 1]{adamek2022lasso} for general time series models. Nonstationary time series are treated in \citet[Theorem 2]{smeekes2021automated}, \citet[Theorem 1 and 2]{mei2022lasso} and \citet[Theorem 2]{wijler2022restricted}.

Condition \pref{as:spar} requires sparsity of the population parameters and the estimator. Sparsity of the first-stage estimator is needed in our framework as we perform OLS on the selected variables from the first-stage regressions. If the selected variables are not sparse enough, too many variables will be selected for OLS to be feasible. For simplicity we work under the assumption of exact sparsity; however, at the expense of more complicated notation and proofs this can be relaxed to approximate sparsity following \citet{belloni2014inference}, where it is assumed that the exact sparse model is a (good) approximation to the true DGP, or weak sparsity following \citet{adamek2022lasso}, where many non-zero but small coefficients are allowed.

Condition \pref{as:rsev} requires that for sufficiently sparse vectors, the eigenvalues of the subset of the Gram matrix corresponding to their nonzero support do not decrease to zero too fast. While this is a standard (and easily verifiable) assumption for stationary time series (see e.g., \citealp[Lemma A.3 and A.5]{adamek2022lasso} and \citealp[Lemma 3 and Proposition 2]{masini2022regularized}), time series with unit roots require more care. Indeed,  \citet[Theorem B.2]{smeekes2021automated}, \citet[Lemma 2]{mei2022lasso} and \citet[Theorem 1]{wijler2022restricted} show for unit root regressors that the standard rate of $T^{-2}$ coming from the $\bD_T$ scaling still results in at least a factor of $\bar{s}_T^{-1}$ in $\kappa_{T,\min}$. As we allow $\kappa_{T,\min}$ to depend on the sample size, this can be accommodated, as long as the interplay of the rates in condition \pref{as:rates} is satisfied. This condition links the sparsity $\bar{s}_T$, the tails thickness of the empirical process $\bar{\gamma}_T$ and the minimum eigenvalue $\kappa_{T,\min}$. The exact rates allowed for are a compromise between the number of moments existing, the strength of the dependence, the growth rate of the dimension $K$ and the sparsity of the parameters.\footnote{For an illustration of the complexities of these relations, we refer to Figure C.1 in \citet{adamek2022lasso} which visualises the feasible combinations with regards to lasso consistency.}

We can now state our first, and main, theoretical result. This establishes that the first stage regressions allow for sufficiently accurate variable selection and estimation such that the second stage is not affected by the variable selection performed.
\begin{theorem}\label{th:pds_cons}
Let $\hat{\bbeta}_p$ denote the OLS estimator of the first $p$ elements of $\bbeta_+$ (equal to $\bbeta$) in \eqref{eq:sec_stage_hatS}, and $\tilde{\bbeta}_p$ the corresponding estimator in \eqref{eq:sec_stage_trueS}. Then uniformly over a parameter space $\mathcal{B}$ for which Assumptions \ref{ass:basic} and \ref{as:hlev} holds for all elements in $\mathcal{B}$, we have that
\begin{equation*}
\sqrt{T} (\hat{\bbeta}_{+} - \bbeta) = \sqrt{T} (\tilde{\bbeta}_{+} - \bbeta) + o_p(1).
\end{equation*}
\end{theorem}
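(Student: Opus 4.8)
The plan is to follow the post-double-selection template of \citet{belloni2014inference}, as specialized to VAR Granger-causality tests in HMS, but carrying the mixed integration orders through the rotation-and-scaling matrices $\bG_{T,\bZ}=\bQ\bD_T$. I would fix an arbitrary parameter value in $\mathcal{B}$: since the bounds in Assumption \ref{as:hlev} are stated with deterministic, parameter-free rates that hold with probability at least $1-\Delta_T$, uniformity over $\mathcal{B}$ is obtained at the end by intersecting these events. By the Frisch--Waugh--Lovell theorem both estimators are the coefficient of $\bX_{\la}$ in a regression of $\by$ on $\bX_{\la}$ after annihilating the control block; writing $\bM_{\hS}=\bI-\bV_{\hS}(\bV_{\hS}'\bV_{\hS})^{-1}\bV_{\hS}'$ and $\bM_{\S}$ analogously, and using $\by-\bX_{\la}\bbeta_+=\bV_{\S}\bdelta_{\S}+\bu$ (valid because $\bdelta$ is sparse and the augmentation coefficients vanish),
\[
\hat{\bbeta}_+-\bbeta_+=(\bX_{\la}'\bM_{\hS}\bX_{\la})^{-1}\bX_{\la}'\bM_{\hS}(\bV_{\S}\bdelta_{\S}+\bu),\qquad
\tilde{\bbeta}_+-\bbeta_+=(\bX_{\la}'\bM_{\S}\bX_{\la})^{-1}\bX_{\la}'\bM_{\S}\bu,
\]
the second simplifying since $\bM_{\S}\bV_{\S}=\bzero$. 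It therefore suffices to show that $\sqrt{T}$ times the difference of the two right-hand sides is $o_p(1)$.

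Next I would control the (scaled) Gram matrices. Inserting $\bG_{T,\bX}\bG_{T,\bX}^{-1}$ around each $\bX_{\la}$, where $\bG_{T,\bX}$ is the restriction of the $\bG_{T,\bZ}$ construction to the $p+d$ columns of $\bX_{\la}$, the lag-augmentation identity of Appendix \ref{sec:LA} re-expresses the $p$ relevant lags (once all $p+d$ free lags are in the regression) as $I(0)$ combinations of innovations and lagged differences, so that $\bG_{T,\bX}^{-1\prime}\bX_{\la}'\bM_{\hS}\bX_{\la}\bG_{T,\bX}^{-1}$ is, up to $o_p(1)$, a fixed positive-definite $(p+d)\times(p+d)$ matrix bounded away from singularity; the same holds with $\bM_{\S}$. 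This uses the restricted sparse eigenvalue condition \ref{as:hlev}(\ref{as:rsev}) applied to vectors supported on $\hS$ (or $\S$) together with the lags of $x_t$, a set of cardinality at most $\bar{s}_T+p+d$, and the sparsity bound \ref{as:hlev}(\ref{as:spar}). The relevant coordinate of $\bD_T$ is the $T^{1/2}$ block, which is exactly the normalization that renders $\sqrt{T}(\hat{\bbeta}_p-\bbeta)$ an $O_p(1)$ quantity; the $T^{-1}$ and $T^{-2}$ blocks are what the augmentation lags of $x_t$ ``absorb''.

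The core of the argument is the usual double-selection bias cancellation. The only term present in $\hat{\bbeta}_+$ but not in $\tilde{\bbeta}_+$ is $(\bX_{\la}'\bM_{\hS}\bX_{\la})^{-1}\bX_{\la}'\bM_{\hS}\bV_{\S}\bdelta_{\S}$, and because $\bM_{\S}\bV_{\S}=\bzero$ only columns of $\bV_{\S}$ indexed by $\S\setminus\hS$ contribute. For each omitted-but-relevant variable the ``double'' structure applies: its coefficient in the $\by$-equation is reached only at the cost of a first-stage prediction error controlled by $\norm{\bZ_{-p}(\hat{\boeta}_0-\boeta_0)}_2^2\le\delta_T^2 T^{1/2}$ in \ref{as:hlev}(\ref{as:cons}), while its partial correlation with a lag $\bx_{-j}$ is controlled by the analogous bound on $\hat{\boeta}_j-\boeta_j$. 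Combining the two first-stage rates with the deviation bound \ref{as:hlev}(\ref{as:ep}), the sparsity bound \ref{as:hlev}(\ref{as:spar}) and the restricted eigenvalue \ref{as:hlev}(\ref{as:rsev}), the scaled bias is of order $\sqrt{T}\,\bar{s}_T\bar{\gamma}_T/\kappa_{T,\min}$, so that $\sqrt{T}$ times it is $O_p\!\big(T\bar{s}_T\bar{\gamma}_T/\kappa_{T,\min}\big)=O_p(\delta_T)=o_p(1)$ by the interplay condition \eqref{rates_interplay}. An entirely parallel estimate handles the difference of the ``noise'' terms: $\bM_{\hS}-\bM_{\S}$ is, up to sign, a difference of projections onto the span of the controls in $\hS\triangle\S$, a space of dimension at most $2\bar{s}_T$, so $\bX_{\la}'(\bM_{\hS}-\bM_{\S})\bu$ is a low-rank-filtered cross-product between $\bX_{\la}$ (scaled by $\bD_T$) and $\bu$, bounded via the deviation bound on $\bG_{T,\bZ}^{-1\prime}\bZ_{-p}'\bu$ and the restricted eigenvalue for the invertibility of the associated sparse Gram block; this is again $O_p(\sqrt{T}\,\bar{s}_T\bar{\gamma}_T/\kappa_{T,\min})$ in the relevant norm, hence $o_p(1)$ after the $\sqrt{T}$ normalization. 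Summing the two pieces and combining with the Gram-matrix control gives the claim.

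The main obstacle I anticipate is the interaction of the data-dependent annihilator $\bM_{\hS}$ with the non-stationary columns. In the stationary HMS setting every cross-product scales with a single power of $T$, whereas here one must show that partialling out a possibly $I(1)$/$I(2)$ (and possibly cointegrated with $x_t$) control block $\bV_{\hS}$ leaves the residualized relevant regressors still admitting the $\bG_T$-decomposition into $I(0)$, $I(1)$, $I(2)$ blocks with the $\bD_T$ rates, so that the restricted sparse eigenvalue condition and the deviation bound can be applied to the \emph{scaled} objects. This is precisely where the lag-augmentation identity of Appendix \ref{sec:LA} is indispensable: with all $p+d$ free lags of $x_t$ present, the $p$ lags being tested enter effectively as $I(0)$ linear combinations, so the $T^{1/2}$ block of $\bD_T$ is the correct normalization for $\hat{\bbeta}_p$ and the cross terms between the relevant block and the (scaled) nuisance block are asymptotically negligible. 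Once this is in place, uniformity over $\mathcal{B}$ requires no extra work, as Assumption \ref{as:hlev} furnishes deterministic, parameter-free sequences throughout.
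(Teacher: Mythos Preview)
Your proposal is in the right spirit and invokes the same key ingredients (first-stage prediction errors \partref{as:hlev}{as:cons}, deviation bound \partref{as:hlev}{as:ep}, restricted sparse eigenvalue \partref{as:hlev}{as:rsev}, sparsity \partref{as:hlev}{as:spar}, and the rate interplay \eqref{rates_interplay}), but the organisation differs from the paper's in a way that matters, and you also import a step that belongs to Theorem~\ref{th:asyGC}, not Theorem~\ref{th:pds_cons}.

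The paper does \emph{not} work with the block annihilator $\bM_{\hS}$ acting only on $\bV_{\hS}$. Instead it proceeds coordinate-by-coordinate: for each $j=1,\ldots,p+d$ it writes $\sqrt{T}(\hat\beta_{j,+}-\beta_{j,+})=A_{T,j}^{-1}B_{T,j}$ with the residual maker $\M(\bZ_{\hSmj})$ that partials out \emph{all} other regressors, i.e.\ the remaining $p+d-1$ lags of $x_t$ \emph{together with} $\bV_{\hS}$. This is not a cosmetic choice. In this paper's PDS design (unlike Belloni et~al.\ or the stationary HMS), the first-stage regression for $\bx_{-j}$ is run on $\bZ_{-p,\backslash\{j\}}$, so that $\boeta_j$ has generically nonzero loadings on the \emph{other lags of $x$}; this is precisely the anti-spurious-regression modification. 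The paper's annihilator $\M(\bZ_{\Smj})$ then kills $\bZ_{-p}\boeta_j$ exactly, and the three pieces $A_{T,j,1},A_{T,j,2},A_{T,j,3}$ (and the analogous $B$-pieces) can be bounded directly via \partref{as:hlev}{as:cons}--\partref{as:hlev}{as:rsev}. In your block formulation, $\bM_{\hS}$ does \emph{not} annihilate the other lags of $x_t$, so the step ``its partial correlation with a lag $\bx_{-j}$ is controlled by the analogous bound on $\hat{\boeta}_j-\boeta_j$'' does not follow as written: $\bM_{\hS}\bx_{-j}$ is not the first-stage residual $\be_j$ (nor close to it), because the component of $\bx_{-j}$ explained by the other $x$-lags survives the projection. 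Of course, by Frisch--Waugh the block and coordinate-wise estimators coincide, so your route can be rescued---but only by effectively unpacking it into the paper's coordinate-wise argument.

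Second, you lean on the lag-augmentation identity of Appendix~\ref{sec:LA} to argue that the relevant Gram block is well-behaved; you call it ``indispensable'' here. In the paper's proof of Theorem~\ref{th:pds_cons} this identity is \emph{not} used at all. The theorem only asserts that replacing $\hS$ by $\S$ costs $o_p(T^{-1/2})$, and the paper establishes this with a generic scaling $T^{-2a}$ (with $a\in\{1/2,1,2\}$ depending on the integration order of $x_t$), showing the $A$- and $B$-differences are of order $\delta_T^2 T^{1/2-2a}$ and $\delta_T T^{1/2-a}$ respectively. The $\bP_d$-transformation and the resulting $I(0)$ representation enter only in the proof of Theorem~\ref{th:asyGC}, where the asymptotic distribution is derived. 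Bringing it into Theorem~\ref{th:pds_cons} is harmless but unnecessary, and it blurs which assumption does which work.
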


Theorem \ref{th:pds_cons} establishes the asymptotic equivalence of the estimator in the feasible second-stage regression \eqref{eq:sec_stage_hatS} based on the estimated active set $\hS$, and the estimator in the infeasible second-stage regression \eqref{eq:sec_stage_hatS} based on the true, unobserved, active set $\S$. Note that the equivalence is only established for the estimators of the coefficients of the lags of the Granger causing variable -- minus the augmented lags -- this is however all that is needed to establish the validity of the Granger causality tests. Before stating the result about the asymptotic distribution, we need another assumption on the asymptotic behaviour of the relevant variables. For a finite number of relevant variables -- measured by $s$ in Assumption \partref{as:hlev}{as:spar} -- this assumption follows directly from well-known results in the unit root and cointegration literature; see Appendix \ref{sec:verifyUR}. We here state the assumption more generally to also accommodate sparsity that increases with the sample size. 

\begin{assumption} \label{ass:URproperties}
Let $\{\bzeta_t\}_{t=1}^T$ denote an $n$-dimensional process, where $n=n_T$ may increase with $T$. Let $n_0$, $n_1$ and $n_2$ denote integers such that $n = n_0 + n_1 + n_2$ and partition $\bzeta_t = (\bzeta_{0,t}, \bzeta_{1,t}, \bzeta_{2,t})'$ comformably with $(n_0, n_1, n_2)$, and assume that $\E [\bzeta_t u_t] = \bzero$ for all $t=1,\ldots,T$. Define 
\begin{equation*}
\bD_{T} :=
\begin{bmatrix}
T^{1/2} \bI_{n_0} & \bzero & \bzero \\
\bzero & T \bI_{n_1} & \bzero \\
\bzero & \bzero & T^{2} \bI_{n_2}
\end{bmatrix}
=:
\begin{bmatrix}
\bD_{T, 0} & \bzero & \bzero \\
\bzero & \bD_{T, 1} & \bzero \\
\bzero & \bzero & \bD_{T, 2}
\end{bmatrix},
\end{equation*}
and let $\Delta_T, \delta_T$ denote deterministic sequences such that $\Delta_T, \delta_T\to 0$ as $T\to\infty$. For varying subscripts `$\cdot$' given below, let $\phi_{T,\cdot}$, $\kappa_{T,\cdot}$ and $\gamma_{T,\cdot}$ denote sequences of positive real numbers, such that with probability $1 - \Delta_T$ the following statements hold jointly:
\begin{enumerate}[(a)]
\item $\norm{\bD_{T, i}^{-1} \sum_{t=1}^T \bzeta_{i,t} \bzeta_{j,t}' \bD_{T,j}^{-1}}_2 \leq \phi_{T,ij}$ for $i,j = 0,1, 2$ and $i < j$;
\item $\lambda_{\min} \left(\bD_{T, 0}^{-1} \sum_{t=1}^T \bzeta_{0,t} \bzeta_{0,t}' \bD_{T,0}^{-1}\right) \geq \kappa_{T,0}$ and $\lambda_{\min} \left(\bD_{T, -0}^{-1} \sum_{t=1}^T \bzeta_{-0,t} \bzeta_{-0,t}' \bD_{T,-0}^{-1}\right) \geq \kappa_{T,12}$, where $\bzeta_{-0,t} = (\bzeta_{1,t}', \bzeta_{2,t}')'$ and $\bD_{t,-0} = \text{diag}(\bD_{T,1}, \bD_{T,2})$;
\item $\norm{\bD_{T, i}^{-1} \sum_{t=1}^T \bzeta_{i,t} u_t}_{2} \leq \gamma_{T,u,i}$ for $i = 0,1, 2$;
\item $\norm{T^{-1} \sum_{t=1}^T \left[\bzeta_{0,t} \bzeta_{0,t}' - \E \left(\bzeta_{0,t} \bzeta_{0,t}' \right) \right]}_2 \leq \delta_{T}$.
\end{enumerate}
Assume that the rates above are bounded as
\begin{enumerate}[(i)]
\item $\kappa_{T,12} - (\phi_{T,01}^2 + \phi_{T,02}^2) / \kappa_{T,0} \geq \mu_{T,(i)}$;
\item $\phi_{T,01} + \phi_{T,02} + 2(\phi_{T,01}^2 + \phi_{T,02}^2) / \kappa_{T,0} \leq \mu_{T,(ii)}$;
\item $\gamma_{T,u,1} + \gamma_{T,u,2} + (\phi_{T,01} + \phi_{T,02}) \gamma_{T,u,0} / \kappa_{T,0} \leq \mu_{T,(iii)}$;
\end{enumerate}
for sequences $\mu_{T,(i)}, \mu_{T,(ii)}, \mu_{T,(iii)}$ that satisfy
the condition
\begin{equation} \label{eq:rates_mu}
\mu_{T,(ii)} (\mu_{T,(ii)} + \mu_{T,(iii)}) \leq \delta_T \mu_{T,(i)}.
\end{equation}
In addition, let $\bR_m$ denote a deterministic $m \times n_0$ matrix where $m<\infty$ is not depending on $T$. Then we have that
\begin{equation*}
T^{-1/2} \sum_{t=1}^T \bR_m \bzeta_{0,t} u_t \xrightarrow{d} N(\bzero, \bOmega),
\end{equation*}
where $\bOmega = \lim_{T\rightarrow\infty} T^{-1} \bR_m \E(\bzeta_{0,t} u_t u_t' \bzeta_{0,t}') \bR_m' = \sigma_u^2 \bR_m \E(\bzeta_{0,t} \bzeta_{0,t}') \bR_m'$.
\end{assumption}

Assumption \ref{ass:URproperties} appears rather abstract with the sequences $\phi_{T,\cdot}$, $\kappa_{T,\cdot}$ and $\gamma_{T,\cdot}$ bounded in a complicated nonlinear way. However, note that the actual assumption that is required is that if we combine the blocks in (a)-(d) appropriately -- first through (i)--(iii), then via \eqref{eq:rates_mu} -- the terms become negligible. If we assume that the number of variables $n$ is finite, which in our setting follows from the sparsity $s$ not growing with the sample size, we can express $\phi_{T,\cdot}$, $\kappa_{T,\cdot}$ and $\gamma_{T,\cdot}$ in well-known rates needed to obtain limiting distributions of sums of products of integrated variables, see e.g.~\citet{hamilton1994time} for an overview. We show in Appendix \ref{sec:verifyUR} how Assumption \ref{ass:URproperties} can be verified in this case. 

One way to extend the result to a growing number of variables would be through a Gaussian approximation theorem; such an approach is considered in, e.g., \citet[Remark 3.4]{zhang2019identifying} and \citet[Theorem B.3]{smeekes2021automated}. This is a relatively crude approach that puts significant limitations on the allowed growth rate of $s$. However, as the growth rate of $s$ is anyway only allowed to be limited compared to the sample size $T$, such an approach would be feasible here without imposing strong additional restrictions (unlike in the general case such as for establishing a result like the deviation bound \partref{as:hlev}{as:ep} where it would significantly restrict the allowed growth rate of $K$). With this assumption in place we can now state our second theoretical result, which establishes the limit distribution of the Granger causality tests.

\begin{theorem}\label{th:asyGC}
Let $\Wald$ and $\LM$ be as defined in Algorithm \ref{alg:pdslm}. Assume that Assumption \ref{ass:URproperties} holds for $\bzeta_t = \bQ_{\S} \bv_{+,\S,t}$, where $\bv_{+,\S,t} = (x_{t-p-1}, \ldots, x_{t-p-d}, \bv_{\S,t})$. Then, uniformly over a parameter space $\mathcal{B}$  on which Assumptions \ref{ass:basic} and \ref{as:hlev} holds for all elements in $\mathcal{B}$, we have that
\begin{align*}
&\LM, \Wald \xrightarrow{d} \chi_p^2, \qquad \text{as } T \rightarrow \infty,
\end{align*}
under the null hypothesis that $\bbeta = \bzero$.
\end{theorem}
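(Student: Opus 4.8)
The plan is to derive Theorem \ref{th:asyGC} from Theorem \ref{th:pds_cons} together with the lag-augmentation equivalence established in Appendix \ref{sec:LA} and the high-level limit theory bundled into Assumption \ref{ass:URproperties}. The first move is to use Theorem \ref{th:pds_cons} to pass from the feasible second-stage regression \eqref{eq:sec_stage_hatS}, based on the estimated active set $\hS$, to the infeasible oracle regression \eqref{eq:sec_stage_trueS}, based on the true active set $\S$. Since $\sqrt{T}(\hat{\bbeta}_{+}-\bbeta)=\sqrt{T}(\tilde{\bbeta}_{+}-\bbeta)+o_p(1)$, and since the same arguments underlying Theorem \ref{th:pds_cons} also yield asymptotic equivalence of the residual sums of squares and of the OLS standard errors computed from \eqref{eq:sec_stage_hatS} and \eqref{eq:sec_stage_trueS}, it suffices to prove the $\chi_p^2$ limit for the versions of $\Wald$ and $\LM$ formed from the oracle regression \eqref{eq:sec_stage_trueS}. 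We are then left with a regression on $\bX_{\la}$ and a finite (or only slowly growing) block of nuisance regressors $\bv_{\S,t}$, i.e.\ essentially the lag-augmented Toda--Yamamoto problem with a sparse conditioning set.

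The second step brings in the rotation. The statistics $\tilde{\bbeta}_p$, $\widehat{\var}\tilde{\bbeta}_p$, and the residuals are invariant under any nonsingular linear reparametrization of the regressor matrix, so we may analyse the oracle regression after replacing $\bv_{+,\S,t}=(x_{t-p-1},\ldots,x_{t-p-d},\bv_{\S,t})$ by $\bzeta_t=\bQ_{\S}\bv_{+,\S,t}$, with $\bQ_{\S}$ chosen as in the cointegrated-system display preceding Assumption \ref{as:hlev} so that $\bzeta_t$ splits into its $I(0)$, $I(1)$, and $I(2)$ blocks $(\bzeta_{0,t},\bzeta_{1,t},\bzeta_{2,t})$; the test statistics do not depend on the unobserved $\bQ_{\S}$, only the asymptotic analysis does. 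By the lag-augmentation equivalence of Appendix \ref{sec:LA}, under $H_0$ the (partialled-out) score for $\bbeta_p$ is a linear combination of innovation-type quantities attached to the \emph{stationary} rotated directions, so the correct normalization of the relevant block is $T^{1/2}$ throughout, while the augmented lags $x_{t-p-1},\ldots,x_{t-p-d}$ and any integrated controls are absorbed into $\bzeta_{1,t}$ and $\bzeta_{2,t}$. Concretely, I would apply the Frisch--Waugh--Lovell decomposition to isolate the $\bX_{-p}$ block, express the residualized Gram matrix and score as $\bD_T$-scaled sums of $\bzeta_{i,t}\bzeta_{j,t}'$ and $\bzeta_{i,t}u_t$, and use parts (a)--(d) of Assumption \ref{ass:URproperties} together with the nonlinear rate bounds (i)--(iii) and \eqref{eq:rates_mu} to show that every contribution of the $I(1)$ and $I(2)$ directions to the $I(0)$ block is asymptotically negligible; what survives is $\bD_{T,0}^{-1}\sum_t \bzeta_{0,t}\bzeta_{0,t}'\bD_{T,0}^{-1}$, which by parts (b) and (d) is bounded away from zero and concentrates around its deterministic limit.

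Third, combining this block-inversion estimate with the CLT at the end of Assumption \ref{ass:URproperties} (applied with a fixed $\bR_m$ that extracts the $\bbeta_p$-coordinates) gives $\sqrt{T}\,\tilde{\bbeta}_p\xrightarrow{d}N(\bzero,\bSigma_{\beta})$ under $H_0$ with $\bSigma_{\beta}$ positive definite, while the same ingredients plus $\hat{\sigma}_u^2\xrightarrow{p}\sigma_u^2$ (from part (d) and consistency of the oracle regression) give $\widehat{\var}\tilde{\bbeta}_p\xrightarrow{p}\bSigma_{\beta}$. The continuous mapping theorem then yields $\Wald\xrightarrow{d}\chi_p^2$. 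For the LM statistic I would invoke the standard algebraic relation $\LM=\Wald+o_p(1)$ under $H_0$: the quantity $TR^2$ from Steps [3]--[4] is the squared length of $\hat{\bxi}$ projected on the space spanned by $\bX_{-p}$ after partialling out $\bV_{+,\hS}$, divided by $\hat{\bxi}'\hat{\bxi}/T$, and once that denominator is replaced by its probability limit $\sigma_u^2$ it coincides up to $o_p(1)$ with the Wald form; hence $\LM\xrightarrow{d}\chi_p^2$ as well. Uniformity over $\mathcal{B}$ is obtained in the usual way: all bounds in Assumptions \ref{ass:basic}, \ref{as:hlev}, and \ref{ass:URproperties} hold with sequences not depending on the underlying parameter, so a subsequence argument combined with the Cram\'er--Wold device upgrades the pointwise limits to uniform ones.

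The main obstacle is the block-inversion bookkeeping in the nonstationary oracle regression: showing that, after partialling out a nuisance set whose cardinality may grow with $T$, the cross-rate terms linking the $I(0)$ component to the $I(1)$/$I(2)$ components vanish at the right speed. This is precisely what the nonlinear conditions (i)--(iii) and \eqref{eq:rates_mu} of Assumption \ref{ass:URproperties} are designed to control, but threading them cleanly through a Frisch--Waugh argument -- while simultaneously verifying that the redundant augmented lags, which carry zero coefficients but are themselves integrated, do not contaminate the $I(0)$ block -- is where most of the work lies. A secondary subtlety is that $\bQ_{\S}$ depends on the unknown cointegrating structure and on $\S$; invariance of OLS removes this from the statistics, but one must still argue that such a rotation can be chosen measurably and that the limit $\bSigma_{\beta}$ is independent of the particular choice.
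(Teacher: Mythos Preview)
Your proposal is correct and follows essentially the same route as the paper: pass to the oracle regression via Theorem \ref{th:pds_cons}, invoke the numerical equivalences of Lemmas \ref{lem:Ho_equiv}--\ref{lem:OLSequiv} to replace $\bX_{-p}$ by its stationary transform $\bX_{-p}^*$, rotate the nuisance block by $\bQ_{\S}$, and then carry out exactly the Frisch--Waugh block-inversion you describe, using the rate conditions of Assumption \ref{ass:URproperties} to kill the $I(1)/I(2)$ cross-terms before applying the CLT. The paper makes the $\bP_d$-transformation of $\bX_{-p}$ explicit rather than subsuming it under a general ``lag-augmentation equivalence'', and handles $\hat\sigma_u^2$ and $\tilde\sigma_u^2$ by a direct term-by-term decomposition rather than a blanket residual-sum-of-squares argument, but these are presentational rather than substantive differences.
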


\begin{remark} \label{rmk:hetrobust}
Though we focus on homoskedastic error terms for simplicity, heteroskedasticity-robust versions of the test can easily be constructed and shown to be valid, as this would only affect the low-dimensional part of our results in Theorem \ref{th:asyGC}. Standard techniques can therefore be used for constructing heteroskedasticity-robust tests: for the Wald test, the OLS standard errors can be replaced by Eicker-White standard errors, while the LM test can be modified as in \citet{wooldridge1987regression}. We refer to HMS, Algorithm 2 for a full treatment. 
\end{remark}

\section{Monte-Carlo Simulations}\label{sec:sim}
We now evaluate the finite-sample performance of our proposed PDS-LA-LM Granger causality test. We consider the following Data Generating Processes (DGPs) in first differences inspired by \citet{kock2015oracle}:
\small
\begin{align*}
&\text{DGP1:}\qquad \Delta \bz_t=\begin{bmatrix}
0.5 & 0 & \ldots &0 \\
0 & 0.5&\ldots &0\\
\vdots &\vdots & \ddots & \vdots  \\
0&0&\ldots & 0.5 
\end{bmatrix}\Delta \bz_{t-1} + \bu_t,\\
&\text{DGP2:}\qquad \Delta \bz_t=\begin{bmatrix}
(-1)^{|i-j|}a^{|i-j|+1}  & \ldots &(-1)^{|i-j|}a^{|i-j|+1} \\
(-1)^{|i-j|}a^{|i-j|+1} &\ldots &(-1)^{|i-j|}a^{|i-j|+1}\\
\vdots  & \ddots & \vdots  \\
(-1)^{|i-j|}a^{|i-j|+1}&\ldots & (-1)^{|i-j|}a^{|i-j|+1} 
\end{bmatrix}\Delta \bz_{t-1} +\bu_t,
\end{align*}
\normalsize
with $a=0.3$.
The diagonal VAR(1) for DGP1 allows the sparsity assumption to be met. Instead, for DGP2 the coefficients decrease with exponential pace departing from the main diagonal and hence although the farthest coefficients are small, the exact sparsity assumption is not met. We report simulations for Granger causality tests from the first variable to the second variable. Therefore, also when integrated out to nonstationary, DGP1 automatically satisfies the null of no Granger causality from unit 2 to 1, however DGP2 does not. Therefore, for the power analysis of both DGP1 and DGP2 we set the coefficient in position $(2,1)$ equal to 0.2. We set the same coefficient equal to zero for DGP2 for the size analysis. 

Simulations are reported for different types of covariance matrices of the error terms. We employ a Toepliz-version for calculating the covariance matrix as $\Sigma_{i,j}=\rho^{|i-j|}$, where $(i,j)$ refer to row $i$, column $j$ of the matrix $\Sigma_{u}$. We cover two scenarios of correlation: $\rho=(0,0.7)$.

The lag length is fixed to $p=2$, while we employ a double ($d=2$) augmentation of the  Granger causing variable. Having $p\geq 2$ guarantees that no spurious results occur in the selection steps. Following the recommendation in HMS, we employ the BIC in selecting the tuning parameter $\lambda$ for the lasso.

Table \ref{tab:1} reports the size and power of the PDS-LA-LM test out of 1000 replications. We use different combinations of time series length $T=(50,100,200,500,1000)$ and number of variables in the system $K=(10,20,50,100)$. All the rejection frequencies are reported using a burn-in period of fifty observations.


Our PDS-LA-LM test shows good performance in terms of size and (unadjusted) power for all DGPs considered. The setting of no correlation is handled remarkably well by all DGPs and only moderate size distortion is visible in large systems for small samples. Whenever high correlation of errors is present, sizes are still in the vicinity of 5\% for DGP1 where the sparsity assumption is met. However, we notice how for DGP2, for which the sparsity assumption is not met, some residual size distortion remains visible even in large systems. However, the power of the test is always increasing with the sample size $T$ for all the considered cases.

\begin{remark}\label{rmk_bound}
As mentioned in Remark \ref{rmk_5}, in order to obtain the results for the size and power when $T\leq Kp$ we need to impose a lower bound on the lasso penalty $\lambda$ which guarantees to select at most $c\, T$ variables in each relevant equation of the VAR, for some $0 < c < 1$. The bound should be set as strict as the system requires and often there is not a universal constant $c$ that works in all settings, therefore this choice needs to be adaptive. For instance, if the lag length is $p=2$, this implies 3 selection steps plus $d$ augmented lags of Granger causing variables to be added. In some cases this might lead to too many variables being selected in order to perform least squares in the second step. In these cases we tighten the bound using either $c=0.33$ or $c=0.25$.
\end{remark}

\begin{center}
\begin{threeparttable}
\caption{Simulation results for the PDS-LA-LM Granger causality test}\label{tab:1}
\centering\small\singlespacing
\begin{tabular}{ccrccccccccccc}
\toprule
& & \multicolumn{1}{r}{$T$} & 50 & 100 & 200 & 500 & 1000 & 50 & 100 & 200 & 500 & 1000\\
\cmidrule(l){3-3} \cmidrule(l){4-8} \cmidrule(l){9-13} 
DGP & $\rho$ & \multicolumn{1}{l}{$K$} & \multicolumn{5}{c}{Size} & \multicolumn{5}{c}{Power}  \\
\cmidrule(l){1-3} \cmidrule(l){4-8} \cmidrule(l){9-13} 
1 &  0 &  10 & 10.4 & 7.2 & 5.5 & 4.2 & 5.5 & 23.3 & 41.0 & 74.2 & 99.4 & 100 \\
  &    &  20 & 12.5 & 9.5 & 7.3 & 6.3 & 5.9 & 19.7 & 40.1 & 74.6 & 99.5 & 100 \\
  &    &  50 & 10.7 & 7.1 & 7.2 & 5.4 & 4.7 & 17.7 & 38.9 & 66.5 & 99.2 & 100 \\
  &    & 100 &  9.9 & 8.4 & 6.8 & 6.1 & 5.3 & 16.0 & 34.1 & 68.1 & 98.5 & 100 \\
\cmidrule(l){1-3} \cmidrule(l){4-8} \cmidrule(l){9-13} 
2 &    &  10 &  9.0 & 6.2 & 4.7 & 4.2 & 5.4 & 19.7 & 36.0 & 69.3 & 99.2 & 100\\
  &    &  20 & 13.3 & 8.1 & 5.6 & 5.5 & 6.1 & 17.3 & 36.9 & 69.4 & 98.5 & 100\\
  &    &  50 &  8.6 & 7.3 & 6.9 & 5.9 & 4.7 & 18.2 & 34.4 & 63.9 & 98.8 & 100\\
  &    & 100 &  8.6 & 8.2 & 7.5 & 6.4 & 5.4 & 13.9 & 31.4 & 62.4 & 98.0 & 100\\
\cmidrule(l){1-3} \cmidrule(l){4-8} \cmidrule(l){9-13} 
1 & .7 &  10 & 13.6 &  9.4 &  6.2 &  5.0 & 5.5 & 20.9 & 20.9 & 44.1 & 87.5 & 99.3 \\
  &    &  20 & 12.8 & 11.4 &  9.1 &  6.2 & 5.8 & 17.8 & 23.7 & 41.2 & 84.9 & 99.4 \\
  &    &  50 & 11.7 &  9.6 & 10.3 &  7.2 & 5.2 & 15.5 & 22.6 & 39.8 & 80.5 & 99.2 \\
  &    & 100 & 12.6 & 12.4 &  7.2 & 10.1 & 6.3 & 13.3 & 22.2 & 37.6 & 72.7 & 98.7 \\
\cmidrule(l){1-3} \cmidrule(l){4-8} \cmidrule(l){9-13} 
2 &    &  10 & 12.1 &  9.0 & 6.0 & 4.8 & 6.1 & 17.3 & 19.7 & 40.1 & 84.1 & 98.3 \\
  &    &  20 & 12.4 &  8.8 & 8.0 & 6.4 & 5.9 & 15.9 & 19.9 & 39.0 & 81.6 & 98.8 \\
  &    &  50 & 11.7 &  9.6 & 9.3 & 7.4 & 5.6 & 14.6 & 19.4 & 34.0 & 74.9 & 98.2 \\
  &    & 100 & 11.1 & 10.1 & 9.1 & 9.5 & 5.7 & 12.8 & 21.0 & 33.3 & 66.0 & 97.9 \\
\bottomrule
\end{tabular}
\begin{tablenotes} 
\scriptsize 
\item Notes: Size and Power for the different DGPs are reported for 1000 replications. $T=(50,100,200,500)$ is the time series length, $K=(10,20,50,100)$ the number of variables in the system, the lag length is fixed to $p=2$ and BIC is used to select the tuninig parameter for the lasso. $\rho$ indicates the correlation employed to simulate the time series with the Toeplitz covariance matrix.  
\end{tablenotes}
\end{threeparttable}
\end{center}

\section{Lag Length Selection}\label{sec:choicep}
Up until this point, we considered the lag length $p$ as given. In reality, this is typically not the case. In this section we propose a simple, data-driven method to estimate $p$. Standard techniques for tuning the lag length such as information criteria or sequential testing fail when applied directly to the high-dimensional VAR. The standard approach with penalized regression methods would be to set $p$ as a generous upper bound, and let the method decide which lags are needed. This however provides complications for our approach as a large $p$ means many first-stage regressions need to be done, with the potential of selecting too many variables. On top of that, we need to augment the second stage with $d$ additional lags.

It is therefore important for our approach to have a reasonable data-driven selection of the lag length. We are essentially looking for an `informative upper bound' on the lag length; while mild over-specification is not a problem, under-specification breaks the lag augmentation, and must be avoided. We therefore base selection on univariate autoregressions for all time series in our dataset, and we apply an information criterion to those. This approach is motivated through the final equations representation of a VAR model, which implies that a VAR($p$) with $K$ variables generates individual ARMA models with maximal orders $(Kp,(K-1)p)$. As such, lag length selection based on individual autoregressions is likely to yield lag lengths larger than $p$ \citep{zellner1974time,cubadda2009studying}. On the other hand, \cite{cubadda2009studying} observed that individual autoregressions often need much smaller lag lengths than the large orders implied by the final equations representation, thus making it plausible that the orders found are not overly conservative.

We implement this method as follows. First, we estimate the autoregressions
\begin{equation*}
z_{i,t} = \sum_{j=1}^p \gamma_{j,p} z_{i,t-j} + \varepsilon_{i,p,t}, \qquad i = 1, \ldots, K,
\end{equation*}
by OLS. Then, letting $\hat\omega_i = \frac{1}{T} \sum_{t=1}^T \hat{\varepsilon}_{i,p,t}^2$, we set $\hat{\bOmega}_p = \text{diag}(\hat\omega_1, \ldots, \hat\omega_K)$ and choose the $p$ that minimizes
\begin{equation*}
\text{IC}^*(p) = \ln\left(\det \hat{\bOmega}_p \right) + C_T \frac{p K}{T} = \sum_{i=1}^N \log \hat\omega_i + C_T \frac{p K}{T},
\end{equation*}
where $C_T$ takes the standard values for well-known information criteria, e.g.~$C_T = \ln T$ yields BIC$^*$, while $C_T = 2$ yields AIC$^*$. Note that next to the dimensionality reduction offered by running individual autoregressions instead of a VAR, further reduction is achieved by letting $\hat{\bOmega}$ be diagonal.

We investigate the performance of this method in a small simulation study. We simulate VAR models using DGP2 with $\rho=0$ as in Section \ref{sec:sim} for the combinations of $K=(10,20,50,100)$ and $T=(50,100,200,500,1000)$.\footnote{We investigated DGP1 and $\rho=0.7$ as well. The method showed similar or better performance there. Results are available on request.} Knowing the true value of $p=2$, we apply the model selection procedure on a grid of 10 values for $p$. 

\begin{figure}
\centering
\includegraphics[width = \textwidth]{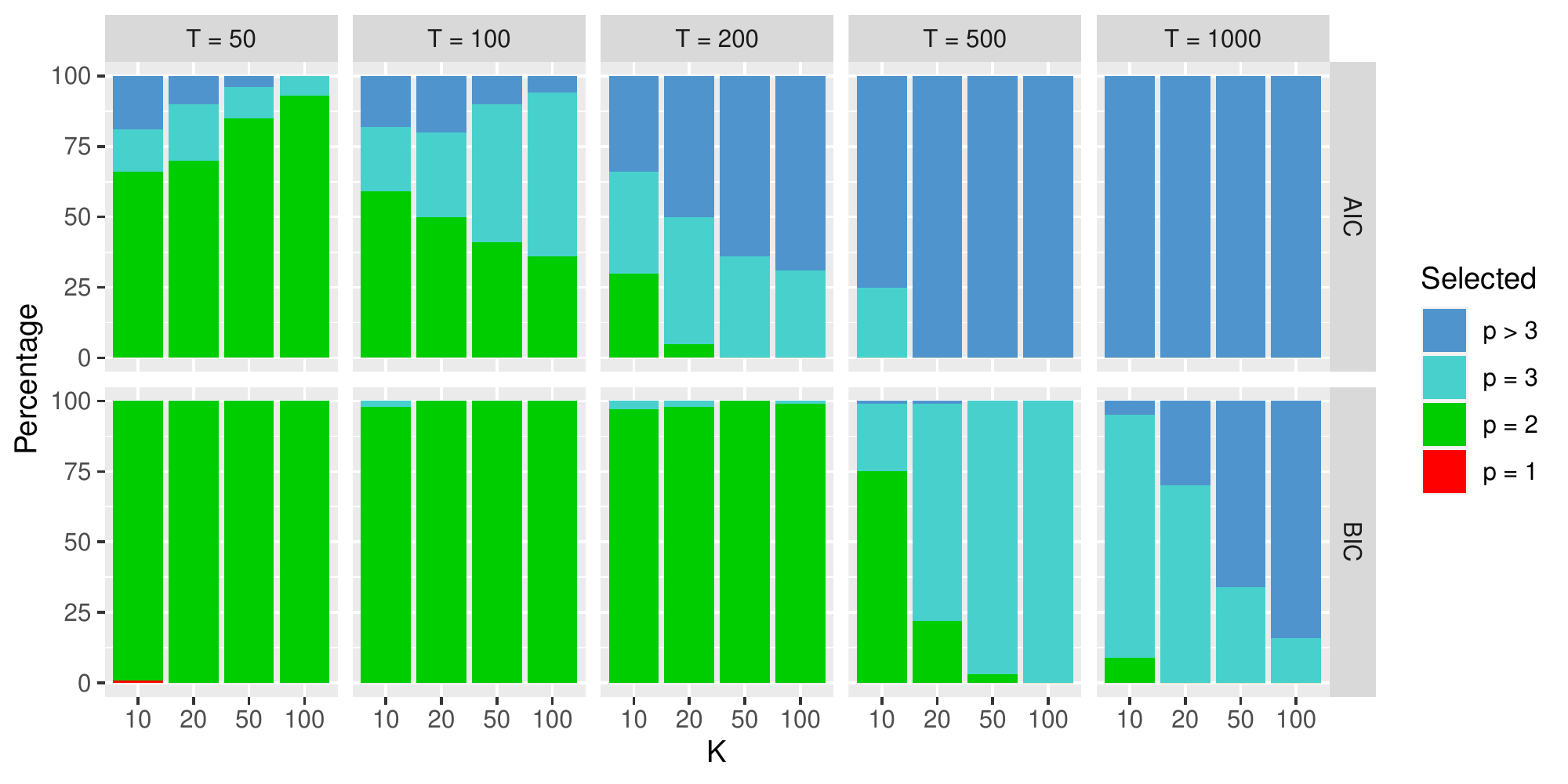} 
\caption{Lag length selection frequencies for AIC$^*$ and BIC$^*$}\label{fig:lagselection}
\end{figure}

We show the results in Figure \ref{fig:lagselection} for AIC$^*$ and BIC$^*$. Both criteria succeed in barely ever underestimating the lag length. Moreover, BIC$^*$ is quite accurate, only overestimating the lag length by more than one when $T=1000$, where larger values of $p$ are not problematic. AIC$^*$ is more liberal but still performs well in smaller samples, where it matters most. This simple way of selecting lag lengths therefore works remarkably well in finite examples, and is therefore recommended in combination with the PDS-LA method.

\section{Empirical Application: Causes and Effects of Economic Uncertainty}\label{sec:emp_appl}

The role of economic uncertainty is a hot topic in macroeconomics. In particular, it is a matter of much debate whether economic uncertainty should be seen as an exogenous shock, causing economic conditions such as business cycle fluctuations, or an endogenous response to economic fundamentals \citep{ludvigson2021uncertainty}. A complicating issue is that there is no universal definition or measurement of uncertainty. In this application we address both issues: we investigate whether uncertainty (Granger) causes or is caused by other variables in the economy, and whether different measurements contain the same information.

Since the seminal paper of \citet{bloom2009impact} economic uncertainty has been at the forefront of macroeconomic debate. \citet{bloom2009impact} built a measure of uncertainty from the Chicago Board Options Exchange (CBOE) S\&P 100 Volatility Index (VXO), arguing that stock market volatility expectations are a good proxy for overall economic uncertainty. The FRED-MD dataset of \citet{mccracken2016fred} has introduced the series VXOCLSx in its September 2015 vintage. This series is constructed by splicing a synthetic historical VXO series obtained from Nicholas Bloom’s website and VXOCLS from FRED.\footnote{In its December 2021 vintage, FRED-MD removed VXOCLSx and replaced it with VIXCLSx as the former has been discontinued from the source. VIX is a similar measure of implied volatility based on the S\&P 500. While I made analysis focus on a pre-Covid vintage incorporating VXOCLSx, in Appendix \ref{sec:vix} We extend our analysis to a recent post-Covid vintage with VIXCLSx included.}

An alternative, very popular measure of uncertainty was proposed by  \citet{baker2016measuring}, who constructed the Economic Policy Uncertainty (EPU) index based on newspaper coverage frequency. This accounts for the frequency with which certain strings of keywords related to economic uncertainty appear in the 10 leading U.S.~newspapers. The Federal Reserve Economic Division (FRED) itself issues a Global Economic Policy Uncertainty Index (GEPUCURRENT) which is a GDP-weighted average of national EPU indices for 20 countries. However, the FRED-MD dataset does not contain any EPU uncertainty index. This naturally raises the question whether EPU can serve as a better measure of economic uncertainty within FRED-MD or VXOCLSx is ``good enough". If EPU and VXOCLSx measure the same content, one would expect that adding EPU to FRED-MD would not lead to much predictive power for EPU if VXOCLSx is accounted for in the information set.

Our PDS-LA method can thus be used in this context with a twofold purpose. First, by estimating Granger causal relationships between FRED macroeconomic variables and the uncertainty index and counting the significant number of outgoing (index$\to$FRED) and incoming (FRED$\to$index) links, we contribute in the understanding of whether such indexes are respectively exogenous sources of business cycle (index$\to$FRED) or rather endogenous responses to economic fundamentals (FRED$\to$index). Second, by investigating the links between a second index (EPU) and the FRED variables \emph{conditional} on VXOCLSx in the dataset, we can investigate whether EPU measures information not already present in VXOCLSx.

In our analysis we are particularly going to pay attention to the effect of how to treat potential nonstationarity in the data. The FRED-MD dataset comes with a detailed appendix \citep{mccracken2016fred} and Matlab/R routines which allow not only to clean the data from missing values and outliers, but also to take the appropriate transformations to render all the time series stationary. This presents the practitioner with an easy and popular solution to deal with non-stationarity, however there are two potential issues with this approach. First, not every time series can be clearly categorized regarding the order of integration, and there are several variables for which a case can be made for two different orders. Indeed, as illustrated by \citet{SmeekesWijler20} and \citet{SmeekesWilms20}, unit root tests may give ambiguous results and depending on the type of test employed, a different classification may arise. Second, even with a correct classification, differenced time series loose long-run information on any cointegrating relations, which may change the Granger causal relations. Our method provides an alternative way of performing the tests directly on the levels of the data, thereby avoiding the issue of transformations altogether.\footnote{To run these analyses we used the authors R package \textit{HDGCvar} available at \url{https://github.com/Marga8/HDGCvar}}.

For the Economic Policy Uncertainty index we gather the data directly from \citet{baker2016measuring}.\footnote{The data is freely available on the website of Scott R. Baker, Nick Bloom and Steven J. Davis at \url{https://www.policyuncertainty.com/index.html}.} Specifically, we use the three component index which combines (i) news coverage about policy-related economic uncertainty, (ii) tax code expiration data and (iii) economic forecaster disagreement.\footnote{We also repeated the analysis using the News-based Policy Uncertainty index as described in \citet{baker2016measuring} but the results are nearly identical to those reported here.} For details on how the index is computed we refer to the given reference. As the EPU index is only available starting January 1985, we accordingly use the FRED-MD data from January 1985. We split the analysis considering two different endpoints of the sample. First we consider the series until November 2019, thus intentionally excluding from the sample both the Covid pandemic and the war in Ukraine, but including the 2008 financial crisis. This makes for a total of 117 variables (when EPU is included) and 419 observations. In Appendix \ref{sec:vix} we extend the analysis to a recent vintage which includes the aforementioned crises and spans until September 2022.  

First, we do not add yet US-EPU to the information set and in Figure \ref{figura_vox1} and \ref{figura_vox2} we instead just investigate the predictive relations to and from VXOCLSx with all the macroeconomic series of FRED-MD using PDS-LA-LM.
\begin{figure}
\begin{minipage}{0.5\textwidth}
\centering
\includegraphics[width=0.95\textwidth, trim = {3.4cm 3cm 2.2cm 2.6cm},clip]{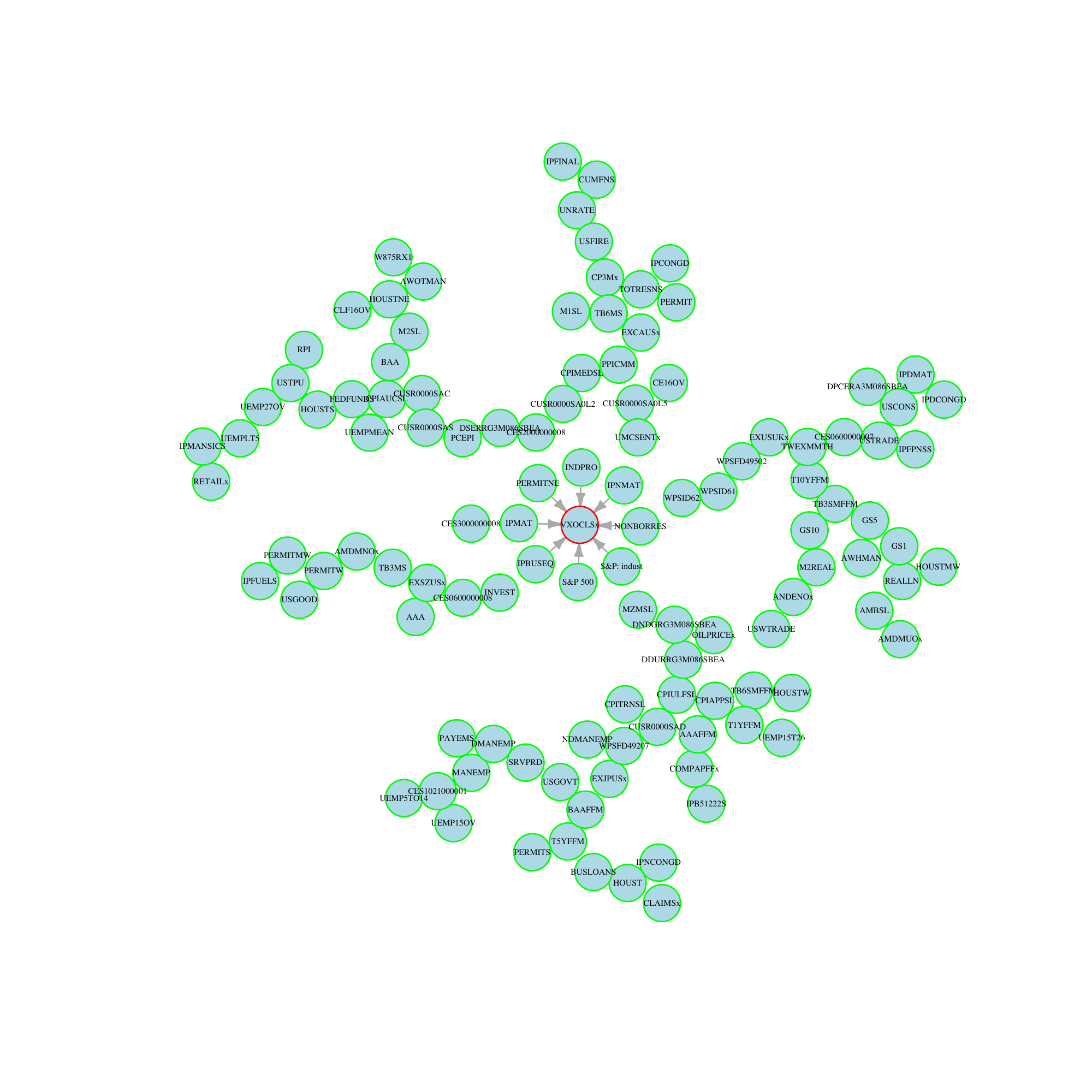}
\caption{PDS-LA-LM, Causes of VXOCLSx, $\alpha=0.01$}
\label{figura_vox1}
\end{minipage}
\begin{minipage}{.5\textwidth}
        \centering
\includegraphics[width=0.95\textwidth, trim = {3.4cm 3cm 2.2cm 2.6cm},clip]{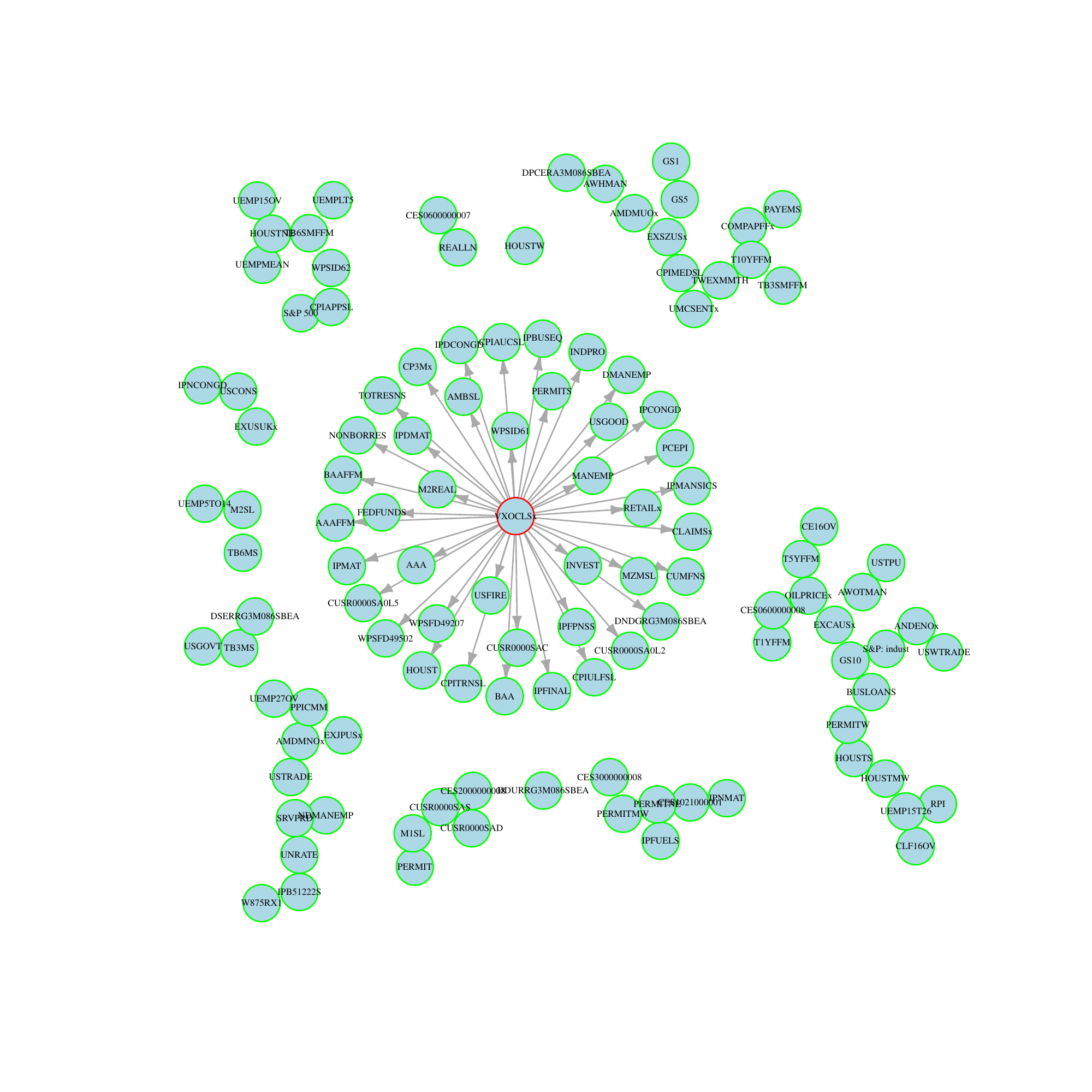}
\caption{PDS-LA-LM, Caused by VXOCLSx, $\alpha=0.01$}
\label{figura_vox2}
\end{minipage}
\end{figure}

At significance level $\alpha=0.01$, eight macroeconomic series are found to Granger cause VXOCLSx while the other way around VXOCLSx Granger-causes $41$ macroeconomic series. Before adding US-EPU to the dataset and investigate the changes, we repeat the analysis, this time applying the recommended stationary transformations (ST) from FRED-MD and we use the PDS-LM test of HMS instead of PDS-LA-LM to investigate the same relations.
\begin{figure}
\begin{minipage}{0.5\textwidth}
\centering
\includegraphics[width=0.95\textwidth, trim = {3.4cm 3cm 2.2cm 2.6cm},clip]{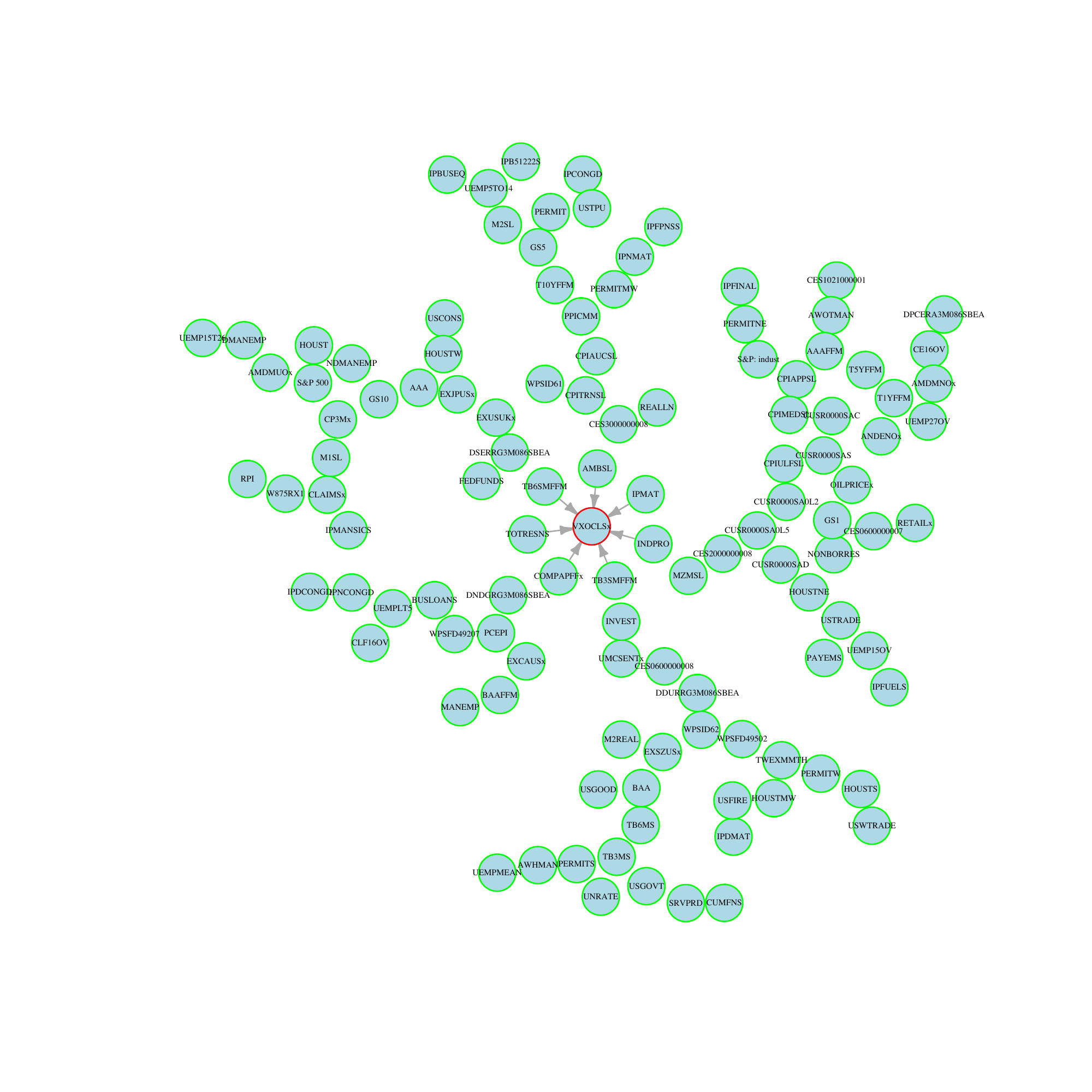}
\caption{PDS-LM, Causes of VXOCLSx, $\alpha=0.01$}
\label{figura_vxostat1}
\end{minipage}
\begin{minipage}{.5\textwidth}
        \centering
\includegraphics[width=0.95\textwidth, trim = {3.4cm 3cm 2.2cm 2.6cm},clip]{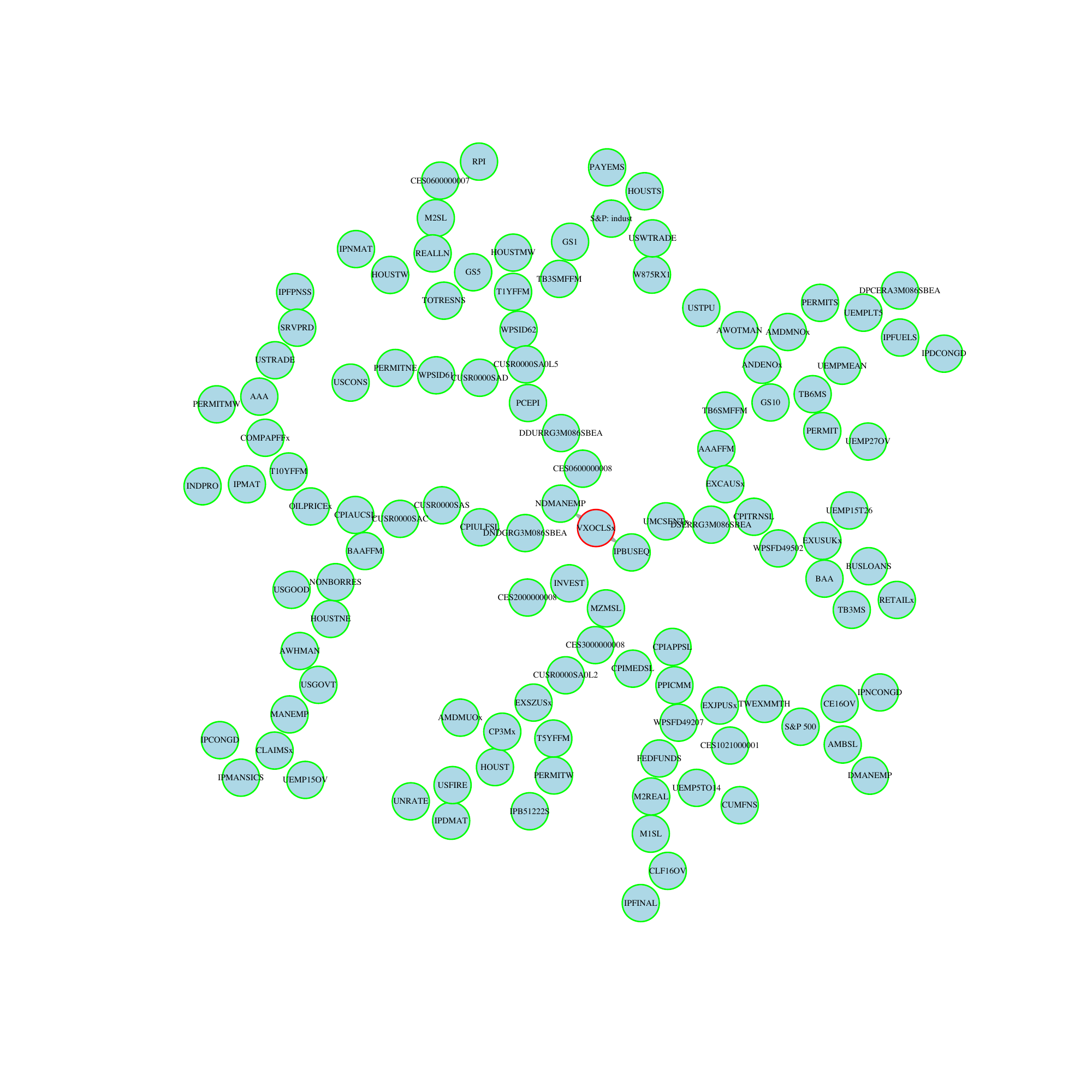}
\caption{PDS-LM, Caused by VXOCLSx, $\alpha=0.01$}
\label{figura_vxostat2}
\end{minipage}
\end{figure}

The difference between Figure \ref{figura_vox1}, \ref{figura_vox2} and Figure \ref{figura_vxostat1}, \ref{figura_vxostat2} is quite striking, especially for the case VXOCLSx$\rightarrow$  FRED. In fact, the number of significant connections drops dramatically from $39$ to just two and some differences can be found in the connections for the FRED$\rightarrow$VXOCLSx too. To zoom in on the connections and the difference between the lag-augmented (LA) and stationary transformed (ST) cases, in Figure \ref{grouped_0} we loosen the p-value threshold up to $10\%$ and group the variables by their FRED-MD sector classification. The bars now indicate the $p$-values of the test, with a full bar equaling a $p$-value of 0, and an empty bar indicating a $p$-value above 10\%.

\begin{figure}
\centering
\includegraphics[width = \textwidth]{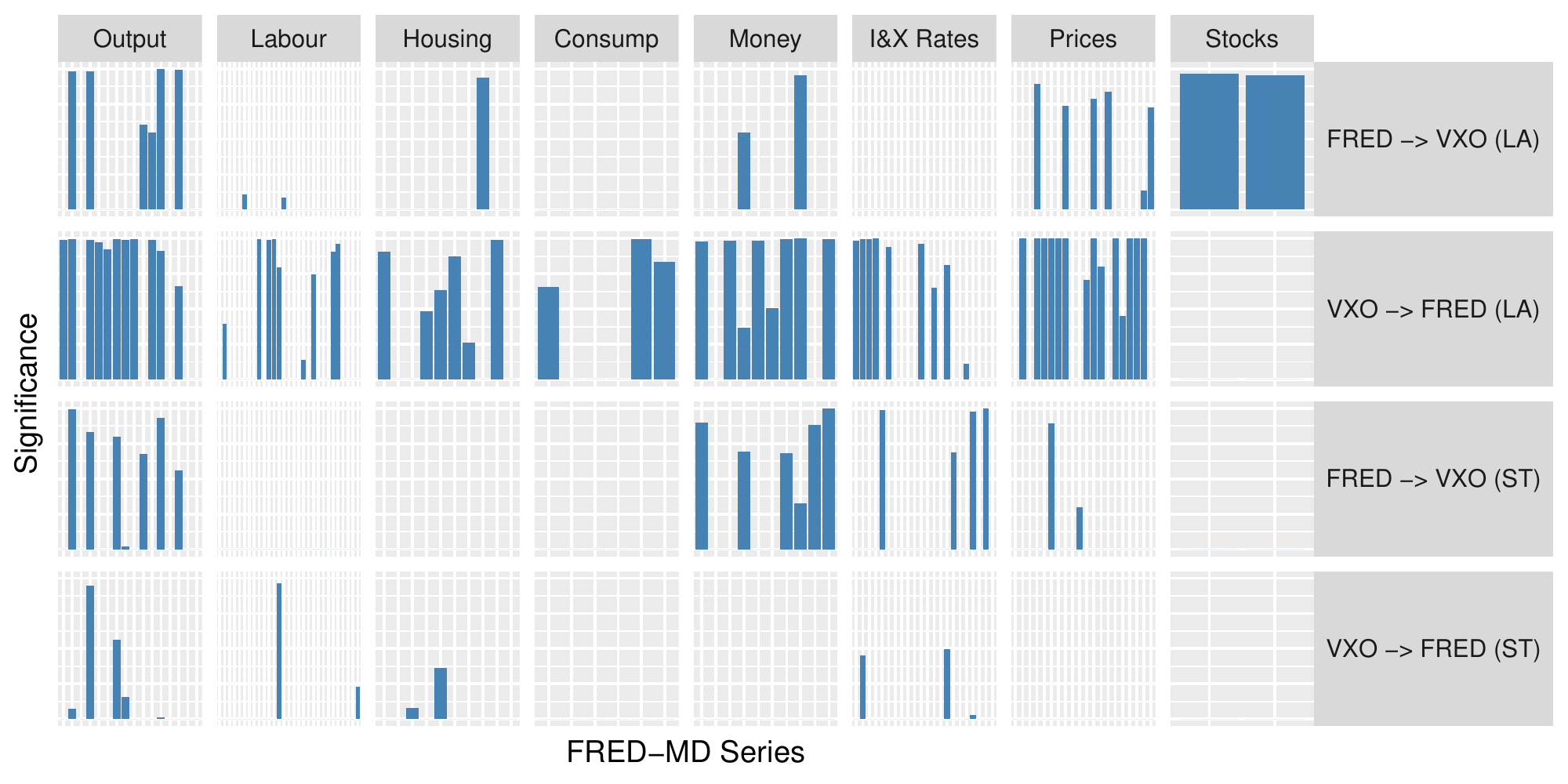}
\caption{ $p$-values for lag-augmented (LA) and stationary transformed (ST) Granger causality tests between FRED-MD and VXOCLSx. Full bars equal a $p$-value of 0, empty bars a $p$-value above 10\%}\label{grouped_0}
\end{figure}

The results again suggest how stationary transforming variables can have a profound impact on the inference performed, leading to very different pictures of Granger causality. It is particularly noticeable that sectors thought to be highly affected by uncertainty, such as the Output sector, barely has any significant connections left when testing for Granger causality from VXOCLSx. Similarly, the absence of causality from the Stocks category to VXOCLSx is surprising, given the latter's construction.

Next, we add US-EPU to the dataset and again repeat the analysis where now the focus is connections from and to US-EPU with all the macroeconomic series of FRED-MD using PDS-LA-LM. Importantly, now the results will be conditioned on VXOCLSx which remains in the information set. 

\begin{figure}
\begin{minipage}{0.5\textwidth}
\centering
\includegraphics[width=0.95\textwidth, trim = {3.4cm 3cm 2.2cm 2.6cm},clip]{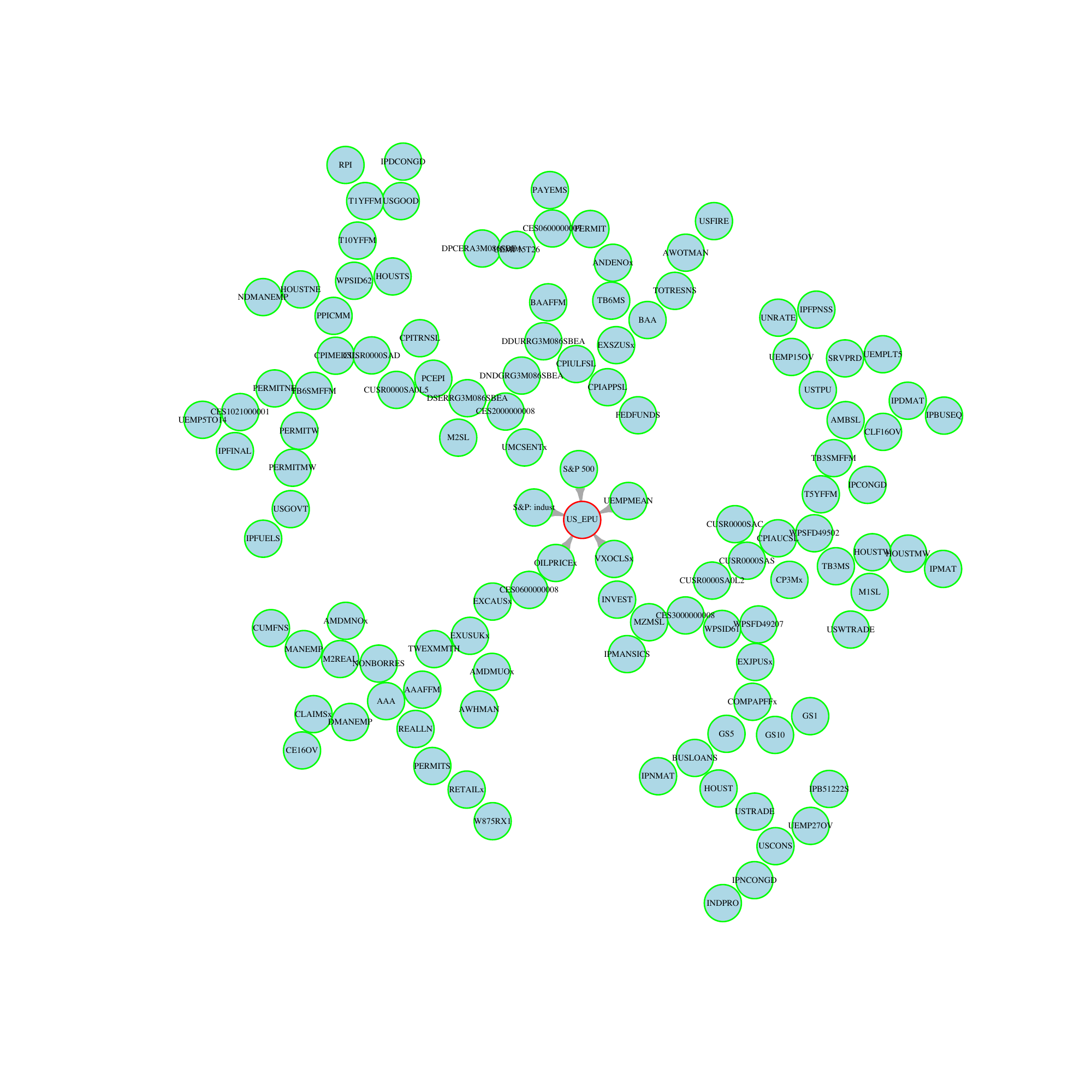}
\caption{PDS-LA-LM, Causes of US-EPU, $\alpha=0.01$}
\label{figura_epu3}
\end{minipage}
\begin{minipage}{.5\textwidth}
        \centering
\includegraphics[width=0.95\textwidth, trim = {3.4cm 3cm 2.2cm 2.6cm},clip]{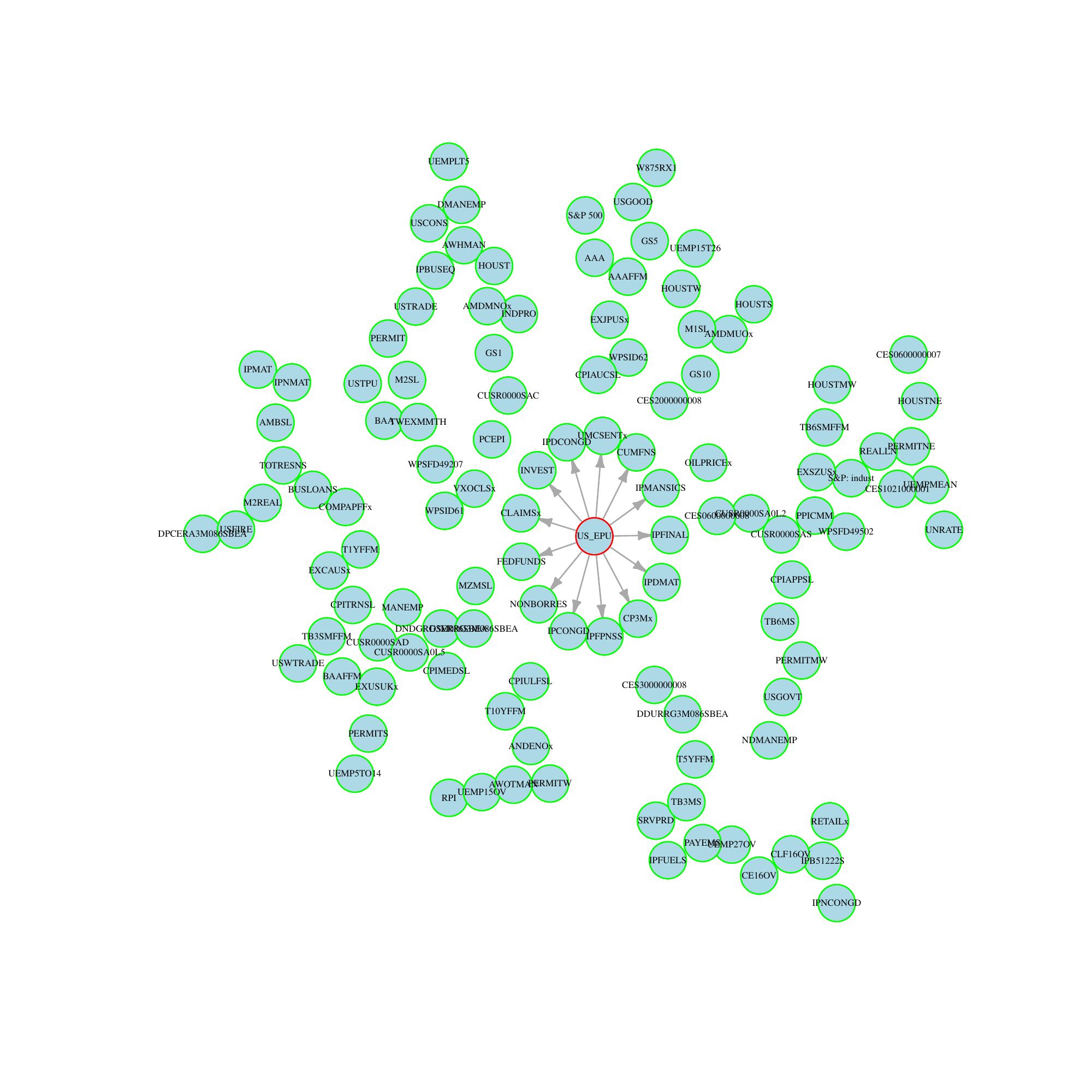}
\caption{PDS-LA-LM, Caused by US-EPU, $\alpha=0.01$}
\label{figura_epu4}
\end{minipage}
\end{figure}

In Figure \ref{figura_epu3} we only find five connections when it come to Granger causality relations from FRED-MD series to US-EPU. Vice-versa, in Figure \ref{figura_epu4} there are 13 causal paths from US-EPU to the other macroeconomic series. The results are quantitatively similar to FRED$\leftrightarrow$VXOCLSx in Figure \ref{figura_vox1}, \ref{figura_vox2}, though less pronounced. Both US-EPU and VXOCLSx are Granger caused by the S\&P500 and S\&P:indust but the remaining connections are not equal. Importantly, VXOCLSx is found to be Granger causal for US-EPU. Given the network US-EPU$\rightarrow$FRED is conditional on VXOCLSx, the non-overlapping connections found in Figure \ref{figura_epu4} that are not occurring in Figure \ref{figura_vox2} are new connections that could make US-EPU an interesting uncertainty index to add to FRED-MD. A cross-check reveals that only UMCSENTx is new, all others connections were already uncovered using VXOCLSx. This would seem to give credit to FRED-MD in using VXOCLSx as main index of economic uncertainty without the need to further include US-EPU. On the other hand, finding the connections from US-EPU despite the presence of VXOCLSx in the information set, indicates that US-EPU may still add additional predictive information about the variables predictable by VXOCLSx, suggesting that could still be worth including it in an empirical analysis.

\begin{figure}
\begin{minipage}{0.5\textwidth}
\centering
\includegraphics[width=0.95\textwidth, trim = {3.4cm 3cm 2.2cm 2.6cm},clip]{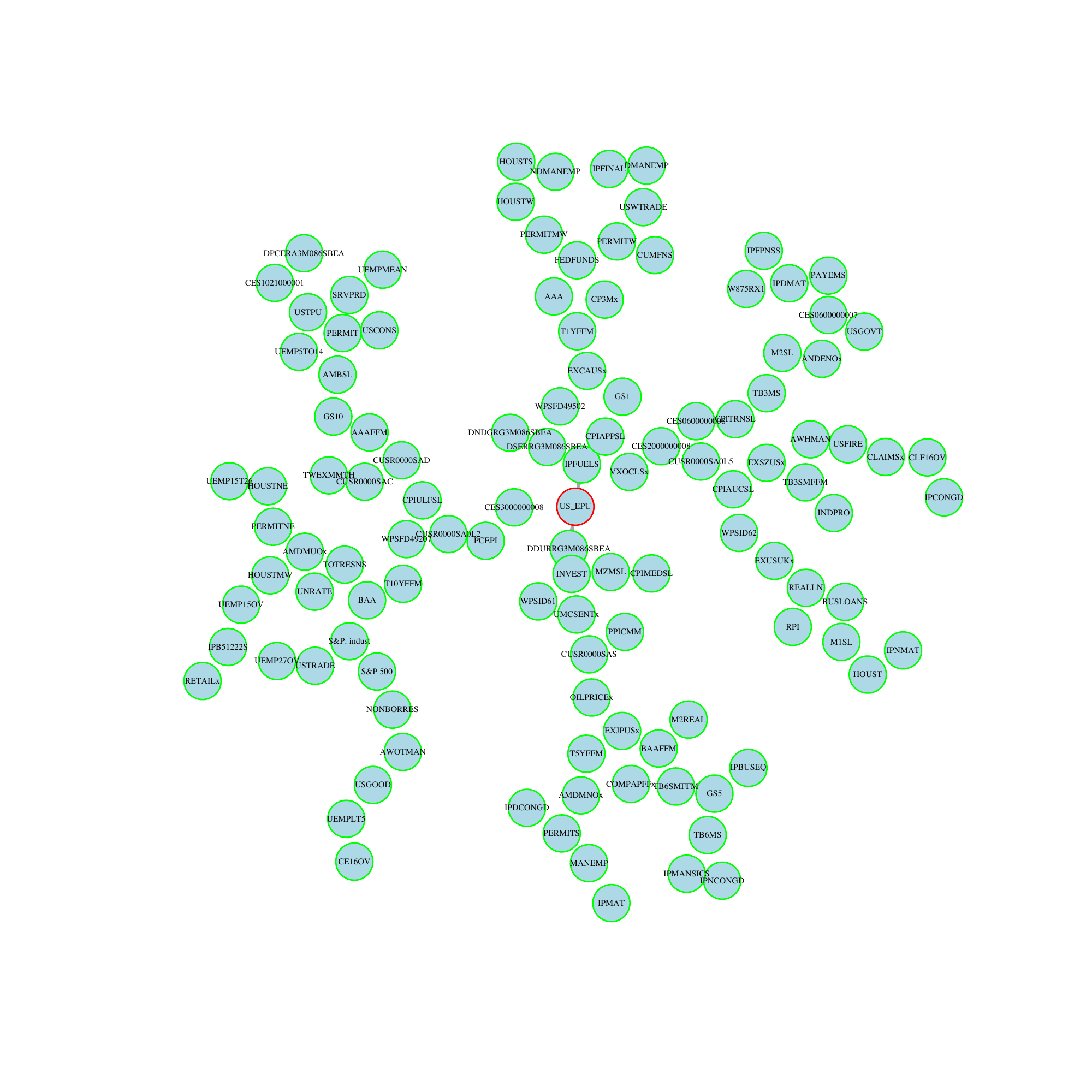}
\caption{PDS-LM, Causes of US-EPU, $\alpha=0.01$}
\label{figura_epustat1}
\end{minipage}
\begin{minipage}{.5\textwidth}
        \centering
\includegraphics[width=0.95\textwidth, trim = {3.4cm 3cm 2.2cm 2.6cm},clip]{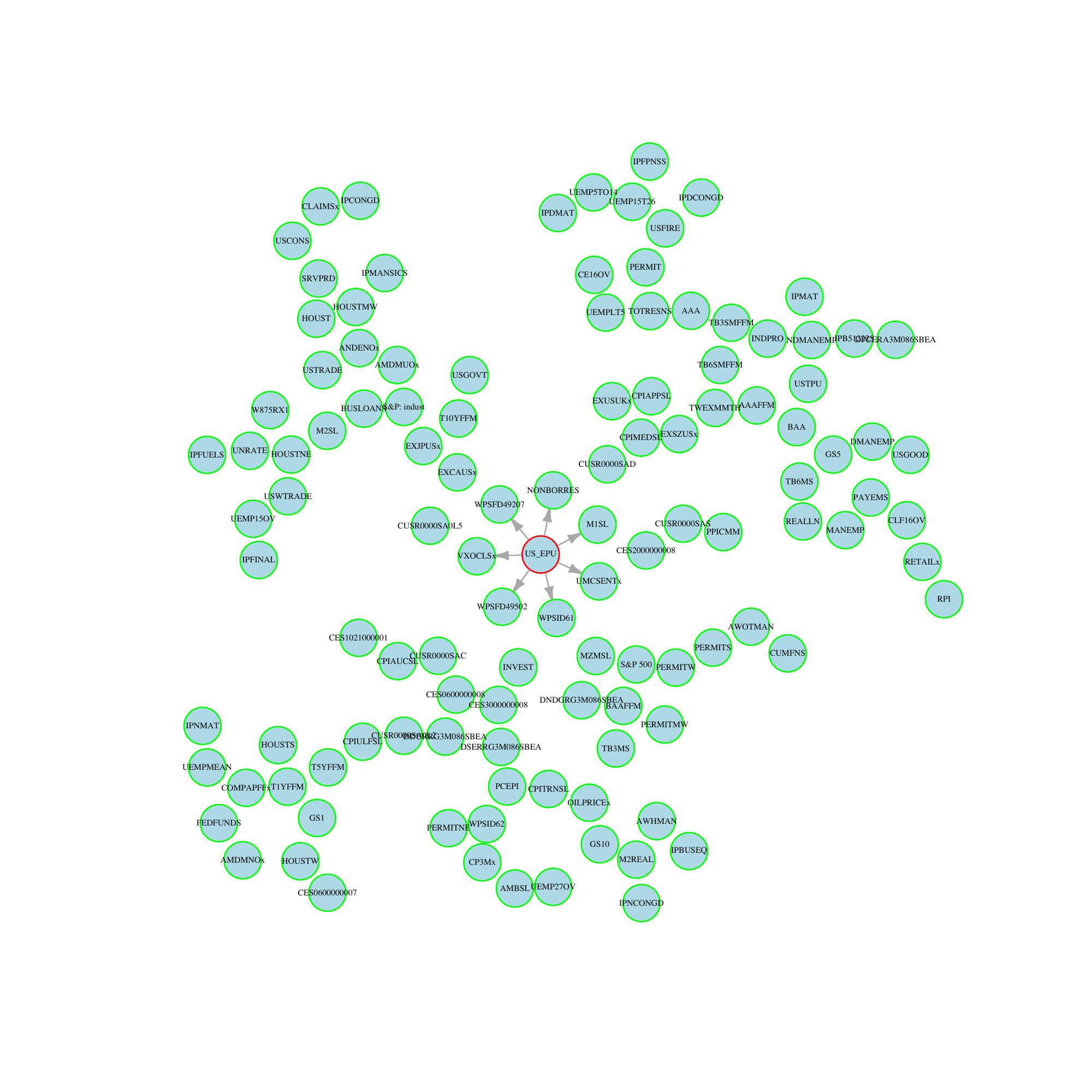}
\caption{PDS-LM, Caused by US-EPU, $\alpha=0.01$}
\label{figura_epustat2}
\end{minipage}
\end{figure}

As before, in Figure \ref{figura_epustat1}, \ref{figura_epustat2} we repeat the analysis, this time around however we apply the recommended stationary transformations (ST) from FRED-MD and we use \citet{hecq2021granger} PDS-LM test to investigate the same relations. Again, a decrease in the number of connections in both directions is evident from the network results. In Figure \ref{grouped_1} we again zoom in on the $p$-values and group the results by sector. The red bar refers to the VXOCLSx series which is found to be Granger causal for EPU in the analysis in levels. Interestingly, we find the opposite result with stationary transformed setting. Again, the differences in the relations  found indicate that transforming variables to stationarity may delete useful information about predictability.
\begin{figure}
\centering
\includegraphics[width = \textwidth]{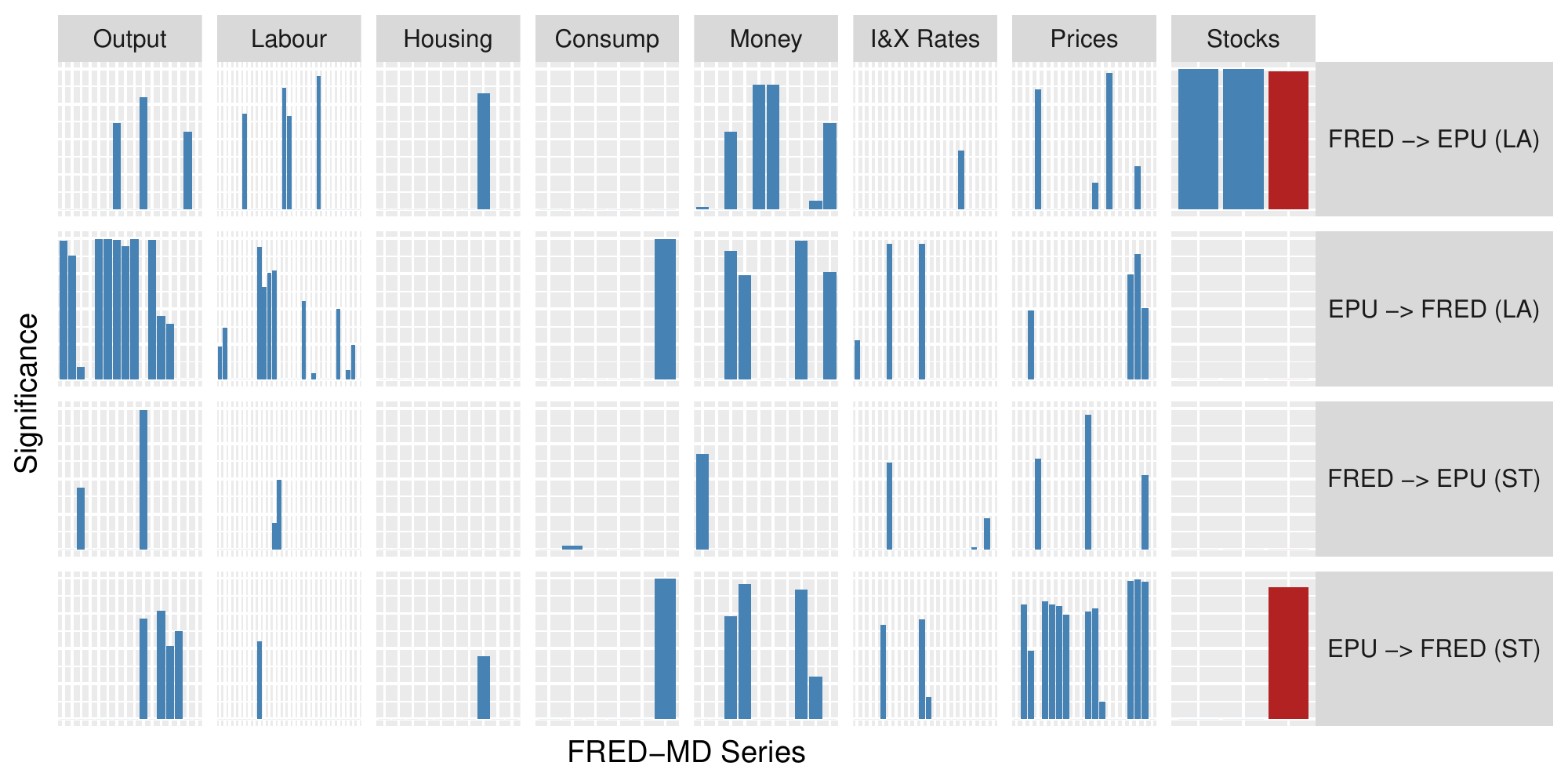}
\caption{ $p$-values for lag-augmented (LA) and stationary transformed (ST) Granger causality tests between FRED-MD and US-EPU. Full bars equal a $p$-value of 0, empty bars a $p$-value above 10\%}\label{grouped_1}
\end{figure}

Our analysis suggests that uncertainty accompanying a wide variety of global events, if measured in terms of expected volatility on the financial market, is primarily a cause rather than an effect of variations in the economic activity. This is in line with the recent work of \citet{ludvigson2021uncertainty} who find uncertainty about financial markets to be a source of output fluctuations. In addition, our analysis shows that even if different measures of uncertainty may be closely related, they are not identical and the choice of which one to use can affect further analysis. It may therefore be worthwhile to consider multiple measures of uncertainty in empirical applications.

\section{Conclusion}\label{sec:conclusion}
We propose an inferential procedure for Granger causality testing in high-dimensional non-stationary VAR models which avoids any knowledge or pre-tests about integration or cointegration in the data. To do so we adapt the \citet{toda1995statistical} approach of augmenting the lag length of the system and we show that by reducing this augmentation to only the variables of interest for the test, we are able to minimize parameter proliferation in high dimensions. We develop a post-double selection LM test which is based on penalized least squares estimators to partial-out those variables having no influence on the variables tested on while safeguarding from omitted variable bias using a double-selection mechanism.

We prove that the augmentation of the Granger causing variables has no effect on the null hypothesis tested, yet provides an automatic differencing mechanism letting the OLS estimator have standard asymptotic results. Also, we extend the relevant assumptions needed for the post-double selection estimator to work in the context of potential unit roots. We derive the asymptotics of the post-selection, augmented estimator, showing it attains standard asymptotic normality hence allowing for a valid test with standard $\chi^2$ limiting distribution.

Our proposed test shows good finite sample properties over different DGPs. We also give practical recommendations on both the optimal augmentation $d$ and on how to estimate the lag-length $p$. We argue that $d=2$ lags is the optimal augmentation in order to take into account possible I(2) as well as near I(2) variables that could compromise the size of the test. In order to estimate the lag length $p$ we propose to reduce the original VAR to a diagonal VAR. This reduces the high-dimensional system to a sequence of low-dimensional autoregressions, to which an information criteria is applied in order to select the correct lag length. 

Finally, we investigate how our test performs in practice by analysing the causes and effects of economic uncertainty. Using the FRED-MD dataset directly, without needing to apply their recommended stationary transformations, we compare two different ways of measuring economic uncertainty (VXO and EPU) and their relationship with all macroeconomic variables within the dataset. We also compare the analysis with the stationary transformation case, highlighting how such transformation can profoundly impact the results and give a very different picture of the causal structure. Our results suggest that uncertainty is primarily a cause rather than an effect of variations in the economic activity.

\bibliography{literature.bib}

\begin{appendices}
\numberwithin{table}{section}
\numberwithin{equation}{section}
\numberwithin{lemma}{section}
\numberwithin{assumption}{section}
\numberwithin{figure}{section}

\section{Main Theoretical Results} \label{sec:app_theory}
\subsection{Justification of Lag-Augmentation} \label{sec:LA}

By adding a ``free'' lag of $x_t$, we in essence allow the system to ``difference itself'' into the correct order to remove the unit roots. We will now make this argument more precise using some algebra manipulations similar to \citet{toda1995statistical}. We will focus on the bivariate case, but give some pointers where needed to extend the approach to multiple variables of interest. In the following, we will generally focus on the case where $d\leq 2$. While one can allow for higher orders $d$ in the estimation procedure, this would typically not be required for most economic applications of interest, where it is commonly accepted that the maximum order of integration of economic data hardly ever exceeds 2. Hence, we confine our presentation up to the case of $I(2)$ series.

Let us introduce the following $(p+d)\times (p+d)$ upper-triangular transformation matrices:
$$\boldsymbol{R}=\begin{bmatrix}
1&1&1&\cdots&1\\
0&1&1&\cdots&1\\
0&0&1&\cdots&1\\
\vdots&\vdots&\vdots&\ddots&\vdots\\
0&0&0&\cdots&1
\end{bmatrix}, \quad \bP_d=\bR^d=\begin{bmatrix}
1&2&3&\cdots&p+d\\
0&1&2&\cdots&p+d-1\\
0&0&1&\cdots&p+d-2\\
\vdots&\vdots&\vdots&\ddots&\vdots\\
0&0&0&0&\cdots&1
\end{bmatrix}.$$
$\bP_d$ can be considered as an ``integration matrix'' which integrates the vector it is multiplied with to order $d$. Similarly, we can consider the inverse matrix $\bP_d^{-1}$ as a difference matrix, which takes differences $d$ times. Now define the transformed data and parameters
\begin{equation}
\begin{split}
\label{eq_dsubs}
&\begin{bmatrix}
\bbeta_{+}^{*}\\ \bdelta
\end{bmatrix} = 
\begin{bmatrix}
\bP_d& \bzero \\
\bzero & \bI_{p(K-1)} 
\end{bmatrix} 
\begin{bmatrix}
\bbeta_{+}\\ \bdelta_{2} 
\end{bmatrix},\qquad
\begin{bmatrix}
\bX_{\la}^*\\ \bV 
\end{bmatrix}
=
\begin{bmatrix}
\bX_{\la}\\
\bV 
\end{bmatrix}
\begin{bmatrix}
\boldsymbol{P}_d &\bzero \\
\bzero & \boldsymbol{I}_{p(K-1)} 
\end{bmatrix}^{-1}.
\end{split}
\end{equation}
Then we can rewrite \eqref{eq:lag_augmented} as
\begin{equation}
\begin{aligned}
\label{eq_varpostalgebra}
\by &= \bX_{\la}\boldsymbol{P}_d^{-1}\boldsymbol{P}_d \bbeta_{+} + \bV \boldsymbol{P}_d^{-1}\boldsymbol{P}_d \bdelta + \bu, =\\
&= \bX_{\la}^* \bbeta_{+}^* + \bV \bdelta + \bu.
\end{aligned}
\end{equation}
$\bX_{\la}^* = \bP_d^{-1} \bX_{\la}$ now contains $d$-th order differences for the first $p$ entries, which are $I(0)$ if the original data are $I(d)$. Furthermore, performing the test of whether the first $p$ elements of $\bbeta_+$ are zero has a one-to-one correspondence to a test on the first $p$ elements of $\bbeta_{+}^*$. This then allows us to transform the test of Granger non-causality on $I(d)$ into an equivalent test on $I(0)$ data, thus leading to a test with a standard limit distribution regardless of the order of integration of the data.

We will now show that the null hypothesis of Granger non-causality in \eqref{eq_varpostalgebra} has a one-to-one correspondence to a test on the first $p$ elements of $\bbeta_{+}^*$, denoted as $\bbeta_p^*$. 
\begin{lemma} \label{lem:Ho_equiv}
Consider testing \eqref{eq:H0} in \eqref{eq:dgp_beta}. This is equivalent to testing
\begin{equation}\label{eq_H0rev}
H_0: \bbeta_p^*= \bzero \quad \text{against}\quad H_1: \bbeta_p^* \neq \bzero
\end{equation}
in \eqref{eq_varpostalgebra}
\end{lemma}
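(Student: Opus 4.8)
The plan is to reduce the statement to a purely linear-algebraic fact about the non-singular transformation $\bP_d$, so that essentially no asymptotics or probability is involved. First I would recall two elementary reductions. Since the augmented lags in \eqref{eq:lag_augmented} carry zero coefficients, $\bbeta_{+} = (\bbeta', \bzero_d')'$, and testing $H_0:\bbeta=\bzero$ in \eqref{eq:dgp_beta} is identical to testing that the first $p$ coordinates of $\bbeta_{+}$ vanish in \eqref{eq:lag_augmented}. Moreover, \eqref{eq_varpostalgebra} is obtained from \eqref{eq:lag_augmented} through the exact reparametrisation \eqref{eq_dsubs}, in which the regressor block is rotated by $\bP_d^{-1}$ and the coefficient block by $\bP_d$ while the controls $\bV,\bdelta_2$ are left untouched; because $\bP_d$ is invertible this is a genuine change of parametrisation of one and the same model, so it only remains to check that the restriction ``the first $p$ coordinates of $\bbeta_{+}$ are zero'' corresponds, bijectively, to ``$\bbeta_p^* = \bzero$''.

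The key step is then to exploit the block-triangular structure of $\bP_d = \bR^d$. As a power of an upper-triangular matrix, $\bP_d$ is itself upper triangular with unit diagonal; partitioning it conformably with the split of the $p+d$ lags into the first $p$ and the last $d$, write
\[
\bP_d = \begin{bmatrix} \bP_{11} & \bP_{12} \\ \bzero & \bP_{22} \end{bmatrix},
\]
where $\bP_{11}$ is the leading $p\times p$ principal submatrix, $\bP_{12}$ is $p\times d$ and $\bP_{22}$ is $d\times d$, and the $(2,1)$ block vanishes precisely because $\bP_d$ is upper triangular. Substituting $\bbeta_{+} = (\bbeta', \bzero_d')'$ into $\bbeta_{+}^* = \bP_d\bbeta_{+}$ then gives $\bbeta_{+}^* = ((\bP_{11}\bbeta)', \bzero_d')'$; in particular the augmented coordinates of $\bbeta_{+}^*$ remain zero (so the transformed model still has redundant augmented lags, as it must), and $\bbeta_p^* = \bP_{11}\bbeta$.

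Finally, since $\bP_{11}$ is upper triangular with ones on the diagonal it is non-singular, so $\bbeta_p^* = \bzero$ if and only if $\bbeta = \bzero$; chaining this with the two reductions of the first paragraph yields the equivalence of \eqref{eq:H0} and \eqref{eq_H0rev}, and indeed shows that the induced tests have identical acceptance regions rather than merely asymptotically equivalent ones. I do not expect a real obstacle here: the only points that need a line of care are verifying that the upper-triangularity of $\bR$ (hence of $\bP_d$) is inherited by the leading $p\times p$ block so that the zero pattern of $\bbeta_{+}$ is preserved, and keeping track that the correspondence is an honest ``if and only if'' at the level of the parameter restriction.
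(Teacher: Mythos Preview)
Your proposal is correct and follows essentially the same route as the paper: both arguments exploit the block upper-triangular structure of $\bP_d=\bR^d$ to conclude that $\bbeta_p^*$ equals an invertible $p\times p$ matrix times $\bbeta$, whence the two null hypotheses coincide. The only difference is cosmetic: the paper works with the selection matrix $\bM$ and explicitly identifies your $\bP_{11}$ as $\bR_{p\times p}^d$ (the $d$-th power of the leading $p\times p$ block of $\bR$), an identification that is reused in the proof of the subsequent Lemma~\ref{lem:OLSequiv}; your argument simply calls this block $\bP_{11}$ and notes it is unit upper triangular, which is all that is needed for the present lemma.
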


\begin{proof}[Proof of Lemma \ref{lem:Ho_equiv}]
Define the selection matrix $\bM = \left(\begin{matrix}\bI_{p}& \bzero_{p\times d}\\ \bzero_{d\times p} & \bzero_{d\times d} \end{matrix}\right)$.
Let $\bR_{p\times p}$ denote the upper left $p \times p$ block of $\bR$, and note that
\begin{equation*}
\bM \bP_d = \bM \bR^d = \left(\begin{matrix}
\bR_{p\times p} & \bzero_{{p\times d}}\\ 
\bzero_{d\times p} & \bzero_{d\times d}
\end{matrix}\right)^d = (\bM \bR)^d.
\end{equation*}
Therefore, we can write
\begin{equation}\label{eq:bMPR}
\left(\begin{matrix}
\bbeta_p^* \\
\bzero_{d\times 1}
\end{matrix}\right)
= \bM \bbeta_+^* = \bM \bP_d \bbeta_+ = (\bM \bR)^d \bbeta_+ = \left(\begin{matrix}
\bR_{p\times p}^d \bbeta \\
\bzero_{d\times 1}
\end{matrix}\right).
\end{equation}
As $\bR_{p\times p}$ is invertible, it follows that the test of $\bbeta_p=\bzero \Longleftrightarrow \bbeta_p^* = \bR_{p\times p} \bzero = \bzero$.
\end{proof}

The following examples clarify how the lag augmentation allows the interest variables to be re-expressed as difference-stationary. Let $p=d=1$ and let $\mathbb{x}_t^*$ be a row of $\bX_{\la}^*$. Then
\begin{equation*}
    \bR=\boldsymbol{P}_1=\begin{bmatrix} 1&1\\
    0&1 \end{bmatrix};\; \boldsymbol{P}_1^{-1}=\begin{bmatrix*}[r] 1&-1\\
    0&1 \end{bmatrix*};\; \bP_{1}^{-1}\mathbb{x}_t^{*'}=\begin{bmatrix*}[r] 1&-1\\
    0&1 \end{bmatrix*}\begin{bmatrix}
    x_{t-1}\\ x_{t-2}
    \end{bmatrix}=\begin{bmatrix}
    \Delta x_{t-1}\\ x_{t-2}
    \end{bmatrix}.
\end{equation*}
For $p=2, d=1$ we get
\begin{equation*}
   \boldsymbol{R}=\bP_1=\begin{bmatrix}
   1&1&1\\0&1&1\\0&0&1
   \end{bmatrix};\;  \boldsymbol{P}_1^{-1}=\begin{bmatrix*}[r] 1&-1&0\\
    0&1&-1\\
    0&0&1\end{bmatrix*};\; \bP_1^{-1}\mathbb{x}_t^{*'}=\begin{bmatrix*}[r] 1&-1&0\\
    0&1&-1\\
    0&0&1\end{bmatrix*}\begin{bmatrix}
    x_{t-1}\\x_{t-2}\\x_{t-3}
    \end{bmatrix}=\begin{bmatrix}
    \Delta x_{t-1}\\ \Delta x_{t-2}\\x_{t-3}
    \end{bmatrix}.
\end{equation*}
As a final example, let $p=d=2$:
\begin{equation*}\begin{aligned}
   &\boldsymbol{R}=\begin{bmatrix}
   1&1&1&1\\0&1&1&1\\0&0&1&1\\0&0&0&1
   \end{bmatrix};\; \boldsymbol{P}_2=\bR \bR=\begin{bmatrix}
   1&2&3&4\\0&1&2&3\\0&0&1&2\\0&0&0&1
   \end{bmatrix};\;  \boldsymbol{P}_2^{-1}=\begin{bmatrix*}[r]
   1&-2&1&0\\0&1&-2&1\\0&0&1&-2\\0&0&0&1
   \end{bmatrix*};\; \\
   &\bP_2^{-1}\mathbb{x}_t^{*'}=\begin{bmatrix*}[r]
   1&-2&1&0\\0&1&-2&1\\0&0&1&-2\\0&0&0&1
   \end{bmatrix*}\begin{bmatrix}
    x_{t-1}\\x_{t-2}\\x_{t-3}\\x_{t-4}
    \end{bmatrix}=\begin{bmatrix}
\Delta^2 x_{t-1}\\ \Delta^2 x_{t-2}\\x_{t-3}-2x_{t-4}\\x_{t-4}
    \end{bmatrix}.
    \end{aligned}
\end{equation*}

We now show the numerical equivalence of the Wald and LM test statistic for testing \eqref{eq:H0} in \eqref{eq:lag_augmented} and \eqref{eq_H0rev} in \eqref{eq_varpostalgebra}. Let $\bW_S$ denote a subset $S$ of the columns of $\bW_{-p}$. In the low-dimensional case $S$ may simply contain all columns of $\bW_{-p}$, whereas in the high-dimensional case $S$ is determined through some sort of (double) selection procedure. Let $\bV_S = (\bY_{-p}, \bW_S)$ and $\bdelta_S = (\bdelta_1, \bdelta_{2,S})$. Let the coefficients of the regression
\begin{equation} \label{eq:regrS}
\bY = \bX_{\la} \bbeta_+ + \bV_S \bdelta_S + \bu,
\end{equation}
be denoted as $(\hat{\bbeta}_+', \hat{\bdelta}_S')'$ and the coefficients of the regression
\begin{equation} \label{eq:regrSstar}
\bY = \bX_{\la}^* \bbeta_+^* + \bV_S \bdelta_S + \bu,
\end{equation}
be denoted as $(\hat{\bbeta}_+^{*\prime}, \hat{\bdelta}_S^{*\prime})'$.

\begin{lemma} \label{lem:OLSequiv}
Let $\hat{\bbeta}_p$ denote the first $p$ elements of $\hat{\bbeta}_+$, the OLS estimator of $\bbeta_+$ in \eqref{eq:regrS}, and $\hat{\bbeta}_p^*$ denote the first $p$ elements of $\hat{\bbeta}_+^*$, the OLS estimator of $\bbeta_+^*$ in \eqref{eq:regrSstar}. Also define 
\begin{equation*}
\hat{\bu}_+ = \by - \bX_{\la} \hat{\bbeta}_+ - \bV_{S} \hat{\bdelta}_{S}, \qquad \hat{\bu}_+^* = \by - \bX_{\la}^* \hat{\bbeta}_+^* - \bV_S \hat{\bdelta}_{S}^*,
\end{equation*} 
as the corresponding residuals. Then consider the Wald and LM statistics as defined in Algorithm \ref{alg:pdslm} (with the generic set $S$ replacing $\hS$). Then $\hat{\bbeta}_p^* = \bR_{p\times p}^d \hat{\bbeta}_p$, $\Wald = \Wald^*$ and $\LM = \LM^*$.
\end{lemma}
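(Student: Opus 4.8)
The plan is to recognize that \eqref{eq:regrSstar} is exactly \eqref{eq:regrS} after a nonsingular linear reparametrization that acts only on the $\bX_{\la}$ block of regressors (replacing it by $\bX_{\la}^{*}=\bX_{\la}\bP_d^{-1}$) and leaves the nuisance block $\bV_S$ and its coefficients untouched, and then to invoke the classical invariance of OLS fits and of the Wald and score (LM) statistics under such reparametrizations. Writing the two designs as $[\bX_{\la},\bV_S]$ and $[\bX_{\la}^{*},\bV_S]=[\bX_{\la},\bV_S]\,\mathrm{diag}(\bP_d^{-1},\bI)$, invertibility of $\bP_d=\bR^d$ shows the two regressions have the same column space, hence identical fitted values and residuals, $\hat{\bu}_+^{*}=\hat{\bu}_+$; since the number of regressors is the same, the residual degrees of freedom, and therefore $\hat\sigma^2$, also coincide. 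The full coefficient vectors are related by $\hat{\bbeta}_+^{*}=\bP_d\hat{\bbeta}_+$ and $\hat{\bdelta}_S^{*}=\hat{\bdelta}_S$.

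For the first claim I would take the leading $p$ coordinates of $\hat{\bbeta}_+^{*}=\bP_d\hat{\bbeta}_+$ and use the (block-)triangular structure of $\bP_d=\bR^d$ --- precisely the manipulation $\bM\bP_d=(\bM\bR)^d$ together with $(\bR^d)_{[1:p,1:p]}=\bR_{p\times p}^d$ already carried out in the proof of Lemma \ref{lem:Ho_equiv}, but now applied to the unrestricted estimator rather than to $\bbeta_+=(\bbeta',\bzero')'$ --- to conclude $\hat{\bbeta}_p^{*}=\bR_{p\times p}^d\hat{\bbeta}_p$.

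For the Wald statistic I would note that, by Frisch--Waugh--Lovell, $\widehat{\var}(\hat{\bbeta}_+)=\hat\sigma^2(\bX_{\la}'\bM_{\bV_S}\bX_{\la})^{-1}$ with $\bM_{\bV_S}$ the annihilator of $\bV_S$, and likewise $\widehat{\var}(\hat{\bbeta}_+^{*})=\bP_d\,\widehat{\var}(\hat{\bbeta}_+)\,\bP_d'$ since $\hat\sigma^2$ is unchanged; restricting to the leading $p\times p$ block gives $\widehat{\var}(\hat{\bbeta}_p^{*})=\bR_{p\times p}^d\,\widehat{\var}(\hat{\bbeta}_p)\,(\bR_{p\times p}^d)'$. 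Substituting this and $\hat{\bbeta}_p^{*}=\bR_{p\times p}^d\hat{\bbeta}_p$ into $\Wald^{*}$ and cancelling the nonsingular factors $\bR_{p\times p}^d$ against their inverses yields $\Wald^{*}=\Wald$ --- the familiar invariance of the Wald statistic under a nonsingular linear reparametrization of the tested coefficients. For the LM statistic I would argue directly with the auxiliary regressions of Algorithm \ref{alg:pdslm}: the restricted regression (Step [3]) of $\by$ on $\bV_{+,S}$ has the same regressor column space in both parametrizations, because the ``nuisance'' columns of $\bX_{\la}^{*}$ span exactly $\mathrm{span}\{\bx_{-(p+1)},\dots,\bx_{-(p+d)}\}$ (a consequence of the triangular form of $\bP_d^{-1}$), so $\hat{\bxi}$ is identical; the unrestricted regression (Step [4]) adds the treatment columns, whose span is also unchanged since $\mathrm{span}(\bX_{\la}^{*})=\mathrm{span}(\bX_{\la})$; hence $\hat{\bnu}$, $R^2$ and $\LM=TR^2$ are literally the same, and the same holds verbatim for the $F$-form in Step [5b]. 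Therefore $\LM=\LM^{*}$.

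The main obstacle is not any estimate but the bookkeeping: one must check that the reparametrization is compatible with the treatment/nuisance split --- that the augmented-lag columns and the ``nuisance'' columns of $\bX_{\la}^{*}$ span the same subspace --- so that the restricted models defining the LM test, and the partitioned-regression variance entering the Wald test, genuinely coincide; and one must keep careful track of the transposes and of the orientation of the triangular transform ($\bP_d$ versus $\bP_d^{-1}$) so that the leading $p$ coordinates of $\hat{\bbeta}_+^{*}$, and their variance block, decouple cleanly from the augmented-lag coordinates. Once the triangular structure of $\bR$ (hence of $\bP_d=\bR^d$ and $\bP_d^{-1}$) is exploited as above, the remainder is routine OLS / Frisch--Waugh--Lovell algebra combined with the standard reparametrization-invariance of the Wald and score tests.
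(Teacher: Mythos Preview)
Your proposal is correct and follows essentially the same route as the paper's own proof. Both arguments rest on the observation that the two designs differ by the invertible block-triangular transformation $\mathrm{diag}(\bP_d^{-1},\bI)$, which (i) leaves the column space, fitted values, residuals and $\hat\sigma^2$ invariant, (ii) via the triangular structure of $\bP_d$ links the leading $p$ coefficients by an invertible $p\times p$ factor, and (iii) leaves the span of the augmented-lag columns $\bx_{-(p+1)},\ldots,\bx_{-(p+d)}$ unchanged so that $\M(\bV_{+,S})=\M(\bV_{+,S}^*)$. The only stylistic differences are that the paper carries out the Wald cancellation by writing out the full quadratic form and inserting the identity $\bM_-\bP_d=\bR_{p\times p}^d\bM_-$ explicitly, whereas you invoke the reparametrization invariance of the Wald statistic via the variance transformation; and the paper obtains $\LM=\LM^*$ by writing $\LM=\Wald\cdot\tilde\sigma_u^2/\hat\sigma_u^2$ and showing $\tilde\sigma_u^2=\tilde\sigma_u^{*2}$, whereas you argue directly that the restricted and unrestricted auxiliary regressions in Algorithm~\ref{alg:pdslm} have identical column spaces in the two parametrizations. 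Your closing caveat about the orientation of the triangular transform is well placed: the clean decoupling of the leading $p$ coordinates (and of their variance block) from the augmented-lag coordinates hinges on the transformation acting so that the last $d$ columns of $\bX_{\la}^*$ depend only on $\bx_{-(p+1)},\ldots,\bx_{-(p+d)}$; this is exactly the triangular direction used in the paper's examples and in its LM argument, and must be kept consistent throughout.
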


\begin{proof}[Proof of Lemma \ref{lem:OLSequiv}]
Define the residual-maker $\M(\bX) := \bI - \bX (\bX'\bX)^{-1} \bX'$. Then,
\begin{equation*}
\begin{aligned}
\left(\begin{matrix}
\hat{\bbeta}_p^* \\
\bzero_{d\times 1}
\end{matrix}\right)
&= \bM \hat{\bbeta}_{+}^* = \bM \left(\bX_{\la}^{*\prime} \M(\bV_S) \bX_{\la}^* \right)^{-1}\bX_{\la}^{*\prime} \M(\bV_S)\by\\
&=\bM \bP_d \left(\bX_{\la}' M(\bV_S) \bX_{\la}\right)^{-1}\bX_{\la}' \M(\bV_S) \by =\bM \bP_d \hat{\bbeta}_{+} = \left(\begin{matrix}
\bR_{p\times p}^d \hat{\bbeta}_p \\
\bzero_{d\times 1}
\end{matrix}\right)
.
\end{aligned}
\end{equation*}
It therefore follows that $\hat{\bbeta}_p^* = \bR_{p\times p}^d \hat{\bbeta}_p$.

Next, let $\bZ_+ := (\bX_{\la}, \bV_S)$ and $\bZ_+^* := (\bX_{\la}^*, \bV_S)$, and define 
\begin{equation*}
\bP_d^* =\left(\begin{matrix}
\bP_d & \bzero_{(p+d)\times (K-1)p} \\
\bzero_{(K-1)p\times (p+d)} & \bI_{(K-1)p}
\end{matrix} \right).
\end{equation*}
It then follows directly that $\bZ_+^* = \bZ_+ \bP_d^*$ and consequently
\begin{equation*}
\begin{split}
\hat{\bu}_+^* = M(\bZ_+^*) \by = \left(\bI - \bZ_+ \bP_d^* (\bP_d^{*\prime} \bZ_+' \bZ_+ \bP_d^*)^{-1} \bP_d^{*\prime} \bZ_+ \right) \by = M(\bZ_+)\by = \hat{\bu}_+.
\end{split}
\end{equation*}

Letting $\bM_- = (\bI_p, \bzero_{p \times d})$, then it follows directly from \eqref{eq:bMPR} that $\bM_- \bP_d = \bR_{p\times p}^d \bM_-$. The test statistic can be written as
\begin{small}
\begin{equation*}
\begin{split}
\Wald^* &= \by' \M(\bV_S) \bX_{\la}^* \left(\bX_{\la}^{*\prime} \M(\bV_S) \bX_{\la}^* \right)^{-1} \bM_-' \left[\hat{\sigma}_u^{*2} \left(\bM_-\bX_{\la}^{*\prime} \M(\bV_S) \bX_{\la}^* \bM_-' \right)^{-1} \right]^{-1} \\
&\quad\times \bM_- \left(\bX_{\la}^{*\prime} \M(\bV_S) \bX_{\la}^* \right)^{-1} \bX_{\la}^{*\prime} \M(\bV_S) \bX_{\la}^* \by\\
&= \hat{\sigma}_u^{-2} \by' \M(\bV_S) \bX_{\la} \left(\bX_{\la}^{\prime} \M(\bV_S) \bX_{\la} \right)^{-1} \bP_d' \bM_-' \left(\bM_- \bP_d^{-1\prime} \bX_{\la}^{\prime} \M(\bV_S) \bX_{\la} \bP_d^{-1} \bM_-' \right) \\
&\quad\times\bM_- \bP_d \left(\bX_{\la}^{\prime} M(\bV_S) \bX_{\la} \right)^{-1} \bX_{\la}^{\prime} \M(\bV_S) \bX_{\la} \by \\
&= \hat{\sigma}_u^{-2} \by' \M(\bV_S) \bX_{\la} \left(\bX_{\la}^{\prime} \M(\bV_S) \bX_{\la} \right)^{-1} \bM_-' \bR_{p\times p}^{d\prime}\\
&\quad\times \left(\bR_{p\times p}^{d\prime}\right)^{-1} \bM_- \bX_{\la}^{\prime} \M(\bV_S) \bX_{\la} \bM_-' \left(\bR_{p\times p}^d \right)^{-1} \\
&\quad\times \bR_{p\times p}^d\bM_- \left(\bX_{\la}^{\prime} \M(\bV_S) \bX_{\la} \right)^{-1} \bX_{\la}^{\prime} \M(\bV_S) \bX_{\la} \by\\
&= \hat{\sigma}_u^{-2} \by' \M(\bV_S) \bX_{\la} \left(\bX_{\la}^{\prime} \M(\bV_S) \bX_{\la} \right)^{-1} \bM_-' \bM_- \bX_{\la}^{\prime} \M(\bV_S) \bX_{\la} \bM_-' \\
&\quad\times \bM_- \left(\bX_{\la}^{\prime} \M(\bV_S) \bX_{\la} \right)^{-1} \bX_{\la}^{\prime} \M(\bV_S) \bX_{\la} \by = \hat{\bbeta}_p^{\prime} \left( \widehat{\var} \hat{\bbeta}_p \right)^{-1} \hat{\bbeta}_p = \Wald.
\end{split}
\end{equation*}
\end{small}

Let $\bV_+ = (\bx_{-(p+1)}, \ldots, \bx_{-(p+d)}, \bV)$ and $\bV_+^* = (\bx_{-(p+1)}^*, \ldots, \bx_{-(p+d)}^*, \bV)$. Defining $\tilde{\sigma}_u^2 = \hat{\bxi}' \hat{\bxi}$, we can write the LM test as
\begin{equation*}
\begin{split}
\LM &= \tilde{\sigma}_u^{-2} \by^\prime \M(\bV_{+,S}) \bX_{-p} \left[\bX_{-p}^{\prime} \M(\bV_{+,\hS}) \bX_{-p} \right]^{-1} \bX_{-p}^{\prime} \M(\bV_{+,\hS}) \by \\
&= \Wald \frac{\tilde{\sigma}_u^{2}} {\hat{\sigma}_u^{2}}.
\end{split}
\end{equation*}
As we have already shown that $\Wald= \Wald^*$ and $\hat{\sigma}_u^{2} = \hat{\sigma}_u^{*2}$, it remains to show that $\tilde{\sigma}_u^{2} = \tilde{\sigma}_u^{*2}$. From the definition of $\bP_d$ we have that $\bx_{-(p+1)}^*, \ldots, \bx_{-(p+d)}^*$ are a linear combination of only $\bx_{-(p+1)}, \ldots, \bx_{-(p+d)}$. Therefore, $\M(\bV_{+,S}) \by = \M(\bV_{+,S}^*) \by$ from which the result follows. This completes the proof of the lemma.
\end{proof}

\subsection{Proofs of Main Results}\label{sec:main_proofs}

\begin{proof}[Proof of Theorem \ref{th:pds_cons}]
Let $\P(\bA) = \bA (\bA^\prime \bA)^{-1} \bA^\prime$ denote the projection on the space spanned by $\bA$ and let $\M (\bA) = I - \P(\bA)$ denote the corresponding residual-maker matrix. For any matrix $\bA$, let the norm $\norm{\cdot}_p$ represent the induced $l_p$-matrix norm $\norm{\bA}_p = \sup_{x\neq 0} \norm{\bA x}_p / \norm{x}_p$.

We are working on the model
\begin{equation}\label{dgp}
\by= \bX_{-p} \bbeta + \bV \bdelta + \bu = \bX_{\la} \bbeta_+ + \bV \bdelta + \bu = \bX_{\la}^* \bbeta^* + \bV \bdelta + \bu.
\end{equation}
Define the set $\hSmj$ as the union of $\hS$ and all lags of $\bx$ except the $j$-th, such that the matrix $\bZ_{\hSmj} = (\bx_{-1}, \ldots, \bx_{-j+1}, \bx_{-j-1},\ldots, \bx_{-(p+d)}, \bV_{\hS})$ contains all variables used in in the second-stage regression except $\bx_{-j}$. Similarly, define the set $\Smj$ as the union of $\S$ and all lags of $\bx$ except the $j$-th, such that the matrix $\bZ_{\Smj} = (\bx_{-1}, \ldots, \bx_{-j+1}, \bx_{-j-1},\ldots, \bx_{-(p+d)}, \bV_{\S})$ contains all variables used in an `oracle' version of the second-stage regression.
For the $j$-th coefficient in $\hat{\bbeta}_+$ ($\bbeta_+$), we can then write
\begin{equation*}
\begin{split}
\sqrt{T} (\hat{\beta}_{j,+} - \beta_{j+}) &=  \underbrace{\left(T^{-2a}(\bx_{-j}^{\prime}\mathcal{M}(\bZ_{\hSmj}) \bx_{-j}) \right)^{-1}}_{A_T^{-1}} \underbrace{T^{1/2-2a} \bx_{-j}^{\prime} \mathcal{M}(\bZ_{\hSmj})\left[\bZ_{-p\backslash\{-j\}} \boeta_{0,-j} + \bu\right]}_{B_T},
\end{split}
\end{equation*}
where $\boeta_{0,-j}$ is equal to $\boeta_0$ but with a zero inserted for the coefficient of $\bx_{-j}$, and $a=1/2$ if $\bx$ is I(0), $a=1$ if $\bx$ is I(1) and  $a=2$ if $\bx$ is I(2).

We are now going to show in turn that 
\begin{equation}\label{proofA}
\begin{aligned}
A_{T,j} = T^{-2a}\bx_{-j}^{\prime} \mathcal{M}(\bZ_{\Smj}) \bx_{-j} + o_p(1), \qquad j = 1, \ldots, p+d,\\
\end{aligned}
\end{equation}
and 
\begin{equation}\label{proofB}
\begin{aligned}
B_{T,j} = T^{1/2-2a}  \bx_{-j}^{\prime} \mathcal{M}(\bZ_{\Smj}) \left[\bV \bdelta + \bu \right] + o_p(1),\qquad j = 1, \ldots, p+d.
\end{aligned}
\end{equation}

We write the first-stage regressions as
\begin{equation*}
\begin{split}
\by &= \bZ_{-p} \boeta_0 + \be_0, \qquad
\bx_{-j} = \bZ_{-p} \boeta_j + \be_j, \quad j = 1, \ldots, p.
\end{split}
\end{equation*}
Note that here for notational convenience, in a slight abuse of notation we insert $\bx_j$ on the right-hand side into $\bZ_{-p}$ instead of considering $\bZ_{-p,\{-j\}}$ as we do in the actual regression. Thereby, we implicitly increase the vector $\boeta_j$ with one entry for $\bx_{-j}$, for which we insert a zero.  

We now prove \eqref{proofA}. We have that
\begin{equation*}
\begin{aligned}
A_{T,j} &= T^{-2a} \left(\bZ_{-p} \boeta_j + \be_j \right)^{\prime} \left[\M(\bZ_{\hSmj})-\M(\bV_{\Smj})\right] \left(\bZ_{-p} \boeta_j + \be_j \right)\\
&= \underbrace{T^{-2a} \boeta_j'\bZ_{-p}' \left[\M(\bZ_{\hSmj})-\M(\bZ_{\Smj})\right] \bZ_{-p}\boeta_j}_{A_{T,j,1}} \\
&\quad +\underbrace{T^{-2a} \be_j' \left[\M(\bZ_{\hSmj})-\M(\bZ_{\Smj})\right] \be_j }_{A_{T,j,2}} + \underbrace{2T^{-2a} \boeta_j'\bZ_{-p}' \left[\M(\bZ_{\hSmj}) - \M(\bZ_{\Smj})\right] \be_j}_{A_{T,j,3}}.
\end{aligned}
\end{equation*}

For $A_{T,j,1}$, observe how $\boeta_j$ is zero outside the active set $\S_j$ which is in turn a subset of $\Smj$, such that we may write $\bZ_{-p} \boeta_j = \bZ_{\Smj} \boeta_{j,\Smj}$. It then directly follows that $\M(\bZ_{\Smj}) \bZ_{-p} \boeta_j = \bzero$. Hence, $A_{T,j,1}$ reduces to
\begin{equation*}
\begin{aligned}
\abs{A_{T,j,1}} &= \abs{\boeta_j' \bZ_{-p}' \M(\bZ_{\hSmj}) \bZ_{-p} \boeta_j} = \norm{\M(\bZ_{\hSmj}) \bZ_{-p} \boeta_j}_2^2
\leq \norm{\M(\bZ_{\hS_j}) \bZ_{-p} \boeta_j}_2^2 \\
&\leq \min_{\underline{\boeta}: \underline{\boeta}_m = 0, m \notin \hS_j} \norm{\bZ_{-p} \boeta_j - \bZ_{\hS_j} \boeta}_2^2 
\leq \norm{\bZ_{-p} (\boeta_j - \hat{\boeta}_j) }_2^2\leq \delta_T^2 T^{1/2},
\end{aligned}
\end{equation*}
which follows from observing that $\hS_j \subseteq \hSmj$, and that the constraint in the minimization is satisfied given $\hS_j = \{m: \hat{\eta}_{j,m} = 0\}$ and where the last inequality follows with probability $1 - \Delta_T$ from Assumption \partref{as:hlev}{as:cons}. 

For $A_{T,j,2}$, note that the difference between the residuals is equal to the difference between the corresponding projections, such that we can simplify the expression to
\begin{equation*}
\begin{aligned}
\abs{A_{T,j,2}} &= \abs{\be_j'\left(\P(\bZ_{\Smj})-\P(\bZ_{\hSmj})\right)\be_j} \leq \underbrace{\abs{\be_j'\left(\P(\bZ_{\Smj}))\right)\be_j}}_{A_{T,j,2}^1} + \underbrace{\abs{\be_j'\left(\P(\bZ_{\hSmj})\right)\be_j}}_{A_{T,j,2}^2}.
\end{aligned}
\end{equation*}
For $j=1, \ldots, p$, we then find that
\begin{align*}
A_{T,j,2}^1 &= \abs{\be_j'\bZ_{\Smj} \bD_{T,\Smj}^{-1} \left[\bD_{T,\Smj}^{-1}\bZ_{\Smj}' \bZ_{\Smj} \bD_{T,\Smj}^{-1} \right]^{-1}  \bD_{T,\Smj}^{-1} \bZ_{\Smj}' \be_j}\\
&\overset{(I)}{\leq} \sqrt{\bar{s}_T} \norm{\bD_{T,\Smj}^{-1} \bZ_{\Smj}' \be_j}_{\infty} \norm{\left[\bD_{T,\Smj}^{-1} \bZ_{\Smj}' \bZ_{\Smj} \bD_{T,\Smj}^{-1} \right]^{-1}  \bD_{T,\Smj}^{-1} \bZ_{\Smj}' \be_j}_2\\
&\overset{(II)}{\leq} \bar{s}_T \norm{\bD_{T,\bZ}^{-1} \bZ_{-p}' \be_j}_{\infty} \norm{\left[\bD_{T,\Smj}^{-1} \bZ_{\Smj}' \bZ_{\Smj} \bD_{T,\Smj}^{-1} \right]^{-1} }_2 \norm{\bD_{T,\Smj}^{-1} \bZ_{\Smj}' \be_j}_\infty\\
&\overset{(III)}{\leq} \bar{s}_T \bkappa_{T,\min}^{-1} \norm{\bD_{T,\bZ}^{-1} \bZ_{-p}' \be_j}_{\infty}^2 \overset{(IV)}{\leq} \bar{s}_T \bar{\gamma}_T^2 \bkappa_{T,\min}^{-1} \leq \delta_T^2 T^{1/2},
\end{align*}
where (I) follows from the dual norm inequality and bounding the $\ell_1$ by the $\ell_2$ norm at the expense of the $\sqrt{\bar{s}_T}$ term; (II) follows from the fact that the columns of $\bD_{T,\bZ}^{-1}\bZ_{-p}$ contain all columns of $\bD_{T,\Smj}^{-1} \bZ_{\Smj}$, the definition of the induced norm and bounding the $\ell_2$ norm by the $\ell_{\infty}$ norm. (III) then follows from the eigenvalue condition in Assumption \ref{as:hlev}(\ref{as:rsev}) while (IV) invokes the empirical process bound of Assumption \ref{as:hlev}(\ref{as:ep}); the last inequality then follows from Assumption \ref{as:hlev}(\ref{as:rates}). For $A_{T,j,2}^2$ we can follow the same steps as above to again find that $A_{T,j,2}^2\leq \delta_T^2 T^{1/2}$.

We finally consider $A_{T,j,3}$. By the same arguments as for $A_{T,j,1}$ we get that $A_{T,j,3}$ reduces to $A_{T,j,3}=\abs{\boeta_j'\bZ_{-p}' \M(\bZ_{\hSmj})\be_j}$. Let us define the noiseless least squares estimator
\begin{equation*}
\tilde{\boeta}_{j,\hSmj} = \argmin_{\boeta:\eta_m = 0, m \notin \hSmj} \norm{\bZ_{-p} \boeta_j - \bZ_{-p} \boeta}_2^2, \qquad j = 1,\ldots,p,
\end{equation*}
such that $\M(\bZ_{\hSmj}) \bZ_{-p} \boeta_j = \bZ_{-p} (\boeta_j - \tilde{\boeta}_{j,\hSmj})$. Then, with probability $1- \Delta_T$,
\begin{align*}
A_{T,j,3} &= \abs{(\boeta_j - \tilde{\boeta}_{j,\hSmj})' \bD_{T,\bZ} \bD_{T,\bZ}^{-1} \bZ_{-p}' \be_j}\leq \norm{\bD_{T,\bZ}^{-1} \bZ_{-p}' \be_j}_\infty \norm{\bD_{T,\bZ} \left(\tilde{\boeta}_{j,\hSmj} - \boeta_j\right) }_1 \\
&\leq \bar{\gamma}_T \bar{s}_T \bkappa_{T, \min}^{-1} \norm{\bZ_{-p} \left(\tilde{\boeta}_{j,\hSmj} - \boeta_j\right) }_2 
\leq \bar{\gamma}_T \bar{s}_T \bkappa_{T, \min}^{-1} \norm{\bZ_{-p} \left(\hat{\boeta}_{j} - \boeta_j\right) }_2
\leq \delta_T^2 T^{1/2}.
\end{align*}
The first inequality follows from the dual norm inequality. The second inequality follows from combining the sparsity and the restricted eigenvalue conditions in Assumption \ref{as:hlev}(\ref{as:spar}) and \ref{as:hlev}(\ref{as:rsev}). In particular, letting $\tilde{\boeta}_{j,\hSmj}^{\dagger} = \bD_{T,\bZ} \tilde{\boeta}_{\hSmj}$ and $\boeta_{j}^{\dagger}= \bD_{T,\bZ} \boeta_j$, we bound $\norm{\tilde{\boeta}_{j,\hSmj}^{\dagger} - \boeta_{j}^{\dagger}}_2 \leq \bar{s}_T \norm{\bZ_{-p} \bD_{T,\bZ} \left(\tilde{\boeta}_{j,\hSmj}^{\dagger} - \boeta_{j}^{\dagger}\right)}_1$ as $\bD_{T,\bZ}$ is diagonal and therefore does not affect the sparsity. We then use that $\tilde{\boeta}_{j,\hSmj}$ was constructed to minimize the $\ell_2$ norm of the residuals. Finally, the last inequality follows the consistency Assumption \ref{as:hlev}(\ref{as:cons}) and the rate Assumption \ref{as:hlev}(\ref{as:rates}). 

Combining all results, we find that for all $j=1,\ldots,p$, we get that $\abs{A_{T,j}} \leq \delta_T^2 T^{1/2 - 2a} \leq \delta_T^2 T^{-1/2}$, as sufficient for establishing \eqref{proofA}.

Let us now prove \eqref{proofB}. By re-arranging terms we have
\begin{equation*}
\begin{aligned}
B_{T,j} &= T^{1/2-2a}  \underbrace{\bx_{-j}^{\prime} \left[\mathcal{M}(\bZ_{\hSmj}) - \mathcal{M}(\bZ_{\Smj}) \right] \bZ_{-p\backslash\{-j\}} \boeta_{0,-j}}_{B_{T,j,1}}\\
&\quad + T^{1/2-2a} \underbrace{\bx_{-j}^{\prime} \left[\mathcal{M}(\bZ_{\hSmj}) - \mathcal{M}(\bZ_{\Smj}) \right] \bu}_{B_{T,j,2}}.
\end{aligned}
\end{equation*}
We first deal with $B_{T,j,1}$. As $\boeta_0$ is only non-zero on $\S$, it follows directly that $\M(\bZ_{\Smj}) \bZ_{-p\backslash\{-j\}} \boeta_{0,-j} = \bzero$ and $B_{T,j,1}$ reduces to $B_{T,j,1} = \bx_{-j}^{\prime} \mathcal{M}(\bZ_{\hSmj}) \bZ_{-p} \boeta_0$. As above for \eqref{proofA}, using the projection $\bx_{-j} = \bZ_{-p} \boeta_j + \be_j$ we then get the following expression
\begin{equation}
\begin{aligned}
\label{second_eq}
B_{T,j,1} &= \underbrace{\boeta_j' \bZ_{-p}' \mathcal{M}(\bZ_{\hSmj}) \bZ_{-p\backslash\{-j\}} \boeta_{0,-j}}_{B_{T,j,1}^1} + \underbrace{\be_j' \mathcal{M}(\bZ_{\hSmj}) \bZ_{-p\backslash\{-j\}} \boeta_{0,-j}}_{B_{T,j,1}^2}.
\end{aligned}
\end{equation}
$B_{T,j,1}^1$ then follows analogously to $A_{T,j,1}$:
\begin{equation*}
\begin{aligned}
\abs{B_{T,j,1}^1} &\leq \sqrt{\norm{\M(\bZ_{\hSmj}) \bZ_{-p} \boeta_j}_2 \norm{\M(\bZ_{\hSmj}) \bZ_{-p\backslash\{-j\}} \boeta_{0,-j}}_2^2}\\
&\leq \sqrt{\norm{\M(\bZ_{\hS_j}) \bZ_{-p} \boeta_j}_2 \norm{\M(\bZ_{\hS_0}) \bZ_{-p\backslash\{-j\}} \boeta_{0,-j}}_2^2} \\
&\leq \sqrt{\min_{\boeta: \eta_{j,m} = 0, m \notin \hS_j} \norm{\bZ_{-p} \boeta_j - \bZ_{\hS_j} \boeta}_2^2 \min_{\boeta: \eta_{0,m} = 0, m \notin \hS_0} \norm{\bZ_{-p\backslash\{-j\}} \boeta_{0,-j} - \bZ_{\hS_0} \boeta}_2^2} \\
&\leq \norm{\bZ_{-p} (\boeta_j - \hat{\boeta}_j) }_2 \norm{\bZ_{-p\backslash\{-j\}} (\boeta_{0,-j} - \hat{\boeta}_0) }_2
\leq \delta_T T^{1/4} \norm{\bZ_{-p\backslash\{-j\}} (\boeta_{0,-j} - \hat{\boeta}_0) }_2.
\end{aligned}
\end{equation*}
Furthermore,
\begin{equation*}
\begin{aligned}
&\norm{\bZ_{-p\backslash\{-j\}} (\boeta_{0,-j} - \hat{\boeta}_0) }_2 \leq \norm{\bZ_{-p} (\boeta_{0} - \hat{\boeta}_0) }_2 + \norm{\bx_{j} (\eta_{0,j} - \hat{\eta}_{0,j}) }_2\\
&\quad\leq \delta_T T^{1/4} + \abs{\hat{\eta}_{0,j} - \eta_{0,j}} \sqrt{\sum_{t=p+d+1-j}^{T-j} x_{t-j}^2} \leq \delta_T T^{1/4} + \delta_T T^{-1/4} O_p (T^a),
\end{aligned}
\end{equation*}
by the consistency of $\hat{\boeta}_0$ in Assumption \partref{as:hlev}{as:cons} and since $\sqrt{\sum_{t=p+d+1-j}^{T-j} x_{t-j}^2} = O_p (T^a)$. We may then construct a sequence, say $\eta_T$, such that $\eta_T\to \infty$ but $\eta_T \delta_T \rightarrow 0$. It then follows that with probability $1-\Delta_T$, we have that $\sqrt{\sum_{t=p+d+1-j}^{T-j} x_{t-j}^2} \leq C \eta_T T^a$ and it then follows that $B_{T,j,1}^1 \leq \delta_T^2 T^{1/2} + \delta_T^2 \eta_T T^{a} \leq 2 \delta_T T^a$.

$B_{T,j,1}^2$ can be treated analogously to $A_{T,j,3}$. For $j = 1,\ldots,p$, define the estimators
\begin{equation*}
\tilde{\boeta}_{0,\hSmj} = \argmin_{\boeta:\eta_m = 0, m \notin \hSmj} \norm{\bZ_{-p\backslash\{-j\}} \boeta_{0,-j} - \bZ_{-p\backslash\{-j\}} \boeta}_2^2,
\end{equation*}
such that $\M(\bZ_{\hSmj}) \bZ_{-p\backslash\{-j\}} \boeta_{0,-j} = \bZ_{-p\backslash\{-j\}} (\boeta_{0,-j} - \tilde{\boeta}_{0,\hSmj})$. Then, with probability $1- \Delta_T$,
\begin{align*}
\abs{B_{T,j,1}^2} &= \abs{(\boeta_{0,-j} - \tilde{\boeta}_{0,\hSmj})' \bD_{T,\bZ,-j} \bD_{T,\bZ,-j}^{-1} \bZ_{-p\backslash\{-j\}}' \be_j}\\
&\leq \norm{\bD_{T,\bZ,-j}^{-1} \bZ_{-p\backslash\{-j\}} ' \be_j}_\infty \norm{\bD_{T,\bZ,-j} \left(\tilde{\boeta}_{0,\hSmj} - \boeta_{0,-j}\right) }_1 \\
&\leq \bar{\gamma}_T \bar{s}_T \bkappa_{T, \min}^{-1} \norm{\bZ_{-p} \left(\hat{\boeta}_{0} - \boeta_0\right) }_2 \leq \delta_T^2 T^{1/2}.
\end{align*}

For $B_{T,j,2}$, analogously to $A_{T,j,2}$ we have $\abs{B_{T,j,2}} \leq \underbrace{\abs{\be_j'\left(\P(\bZ_{\Smj})\right)\be_j}}_{B_{T,j,2}^1} + \underbrace{\abs{\be_j'\left(\P(\bZ_{\hSmj})\right)\be_j}}_{B_{T,j,2}^2}$.
Then
\begin{align*}
B_{T,j,2}^1 &\leq \bar{s}_T \norm{\bD_{T,\bZ}^{-1} \bZ_{-p}' \be_j}_{\infty} \norm{\bD_{T,\bZ}^{-1} \bZ_{-p}' \bu}_{\infty} \norm{\left[\bD_{T,\Smj}^{-1} \bZ_{\Smj}' \bZ_{\Smj} \bD_{T,\Smj}^{-1} \right]^{-1} }_2 \\
&\leq \bar{s}_T \bkappa_{T,\min}^{-1} \bar{s}_T \bar{\gamma}_T^2 \bkappa_{T,\min}^{-1} \leq \delta_T^2 T^{1/2},
\end{align*}
and it directly follows that $B_{T,j,2}^2 \leq \delta_T^2 T^{1/2}$ as well.

It then follows directly that $\abs{B_{T,j}} \leq \delta_T T^{1/2 - a} \leq \delta_T^2$, thereby establishing \eqref{proofB} for all $j=1,\ldots,p$.
\end{proof}

\begin{proof}[Proof of Theorem \ref{th:asyGC}]
We first show that the second stage regression delivers asymptotically normal estimators due to the lag augmentation. Let $\bV_{+,\hS} = (\bx_{-(p+1)}, \ldots, \bx_{-(p+d)}, \bV_{\hS})$ and $\bV_{+,\S} = (\bx_{-(p+1)}, \ldots, \bx_{-(p+d)}, \bV_{\S})$ and similarly let $\bV_{+,\hS}^*$ and $\bV_{+,\S}^*$ denote their counterparts with $\bx_{-(p+1)}^*, \ldots, \bx_{-(p+d)}^*$ replacing $\bx_{-(p+1)}, \ldots, \bx_{-(p+d)}$. By virtue of Lemmas \ref{lem:Ho_equiv} and \ref{lem:OLSequiv} as well as equations \eqref{proofA} and \eqref{proofB}, we have that
\begin{equation}\label{deviation_2}
\begin{split}
\sqrt{T} (\hat{\bbeta}_{p}^* - \bbeta_{p}^*)  &= \sqrt{T} \bR_{p\times p} (\hat{\bbeta}_{p} - \bbeta) = \sqrt{T} \bR_{p\times p} \left(\bX_{-p}^{\prime} \M(\bV_{+,\hS}) \bX_{-p} \right)^{-1} \bX_{-p}^{\prime} \M(\bV_{+,\hS})\bu \\
&= \sqrt{T} \bR_{p\times p} \left(\bX_{-p}^{\prime} \M(\bV_{+,\S}) \bX_{-p} \right)^{-1} \bX_{-p}^{\prime} \M(\bV_{+,\S})\bu + o_p(1) \\
&=\big(\underbrace{T^{-1} \bX_{-p}^{*\prime} \M(\bV_{+,\S}^*) \bX_{-p}^* \big)^{-1} }_{\bA_{T}} \underbrace{T^{-1/2} \bX_{-p}^{*\prime} \M(\bV_{+,\S}^*)\bu}_{\bb_{T}} + o_p(1).
\end{split}
\end{equation}
Let $\bS = \bV_{+, \S} \bQ_{\S}$ and $\bG_{T,\bQ} = \bQ_{\S} \bD_{T,\bS}$, , where $\bD_{T,\bS}$ is the matrix with $T^{1/2}, T$ and $T^2$ on the diagonal for the I(0), I(1) and I(2) elements, respectively. Then we can write
\begin{equation*}
\begin{split}
\bA_T &= T^{-1} \bX_{-p}^{*\prime} \bX_{-p}^* - T^{-1/2} \bX_{-p}^{*\prime} \bV_{+,\S}^* \bG_{T,\bQ}^{-1} \big(\bG_{T,\bQ}^{-1} \bV_{+,\S}^{*\prime} \bV_{+,\S}^* \bG_{T,\bQ}^{-1}\big)^{-1} \bG_{T,\bQ}^{-1} \bV_{+,\S}^{*\prime} \bX_{-p}^* T^{-1/2} \\
&= T^{-1} \bX_{-p}^{*\prime} \bX_{-p}^* - T^{-1/2} \bX_{-p}^{*\prime} \bS \bD_{T,\bS}^{-1} \big(\bD_{T,\bS}^{-1} \bS' \bS \bD_{T,\bS}^{-1} \big)^{-1} \bD_{T,\bS}^{-1} \bS' \bX_{-p}^* T^{-1/2}.
\end{split}
\end{equation*}
Let $\bS_{-0} = (\bS_1, \bS_2)$ and $\bD_{T,\bS_{-0}}$ the corresponding submatrix of $\bD_{T,\bS}$. It follows from inverting block matrices that
\begin{equation*}
\begin{aligned}
\bA_T &= \underbrace{T^{-1} \bX_{-p}^{*\prime} \M(\bS_{0}) \bX_{-p}^*}_{\bC_{T,1}} - \underbrace{T^{-1/2} \bX_{-p}^{*\prime} \M(\bS_{0}) \bS_{-0} \bD_{T,\bS_{-0}}^{-1}}_{\bR_{T,2}'} \\
& \quad \times \big(\underbrace{\bD_{T,\bS_{-0}}^{-1} \bS_{-0}' \M(\bS_{0}) \bS_{-0} \bD_{T,\bS_{-0}}^{-1} }_{\bR_{T,1}} \big)^{-1} \underbrace{\bD_{T,\bS_{-0}}^{-1} \bS_{-0}' \M(\bS_{0}) \bX_{-p}^* T^{-1/2}}_{\bR_{T,2}}
\end{aligned}
\end{equation*}
Then note that
\begin{equation*}
\begin{aligned}
\lambda_{\min} (\bR_{T,1}) &\geq \lambda_{\min} \left(\bD_{T,\bS_{-0}}^{-1} \bS_{-0}' \bS_{-0} \bD_{T,\bS_{-0}}^{-1} \right)\\
&\quad - \lambda_{\max} \left(\bD_{T,\bS_{-0}}^{-1} \bS_{-0}' \bS_{0} \bD_{T,\bS_{0}}^{-1} \big( \bD_{T,\bS_{0}}^{-1} \bS_{0}' \bS_{0} \bD_{T,\bS_{0}}^{-1} \big)^{-1} \bD_{T,\bS_{0}}^{-1} \bS_{0}' \bS_{-0} \bD_{T,\bS_{-0}}^{-1} \right) \\
&\geq \lambda_{\min} \left(\bD_{T,\bS_{-0}}^{-1} \bS_{-0}' \bS_{-0} \bD_{T,\bS_{-0}}^{-1} \right)\\
&\quad - \norm{\bD_{T,\bS_{-0}}^{-1} \bS_{-0}' \bS_{0} \bD_{T,\bS_{0}}^{-1} }_2^2 / \lambda_{\min}( \bD_{T,\bS_{0}}^{-1} \bS_{0}' \bS_{0} \bD_{T,\bS_{0}}^{-1} \big),
\end{aligned}
\end{equation*}
where the final inequality uses that, for compatible matrices $\bA$ and $\bB$, we have $\lambda_{\max} (\bA' \bB^{-1} \bA) \leq \lambda_{\max} (\bA' \bA) \lambda_{\max}(\bB^{-1}) = \norm{\bA}_2^2 / \lambda_{\min} (\bB)$. It then follows directly from Assumption \ref{ass:URproperties} that \begin{align*}&\lambda_{\min} \left(\bD_{T,\bS_{-0}}^{-1} \bS_{-0}' \bS_{-0} \bD_{T,\bS_{-0}}^{-1} \right) \geq \kappa_{T,12}, \quad \norm{\bD_{T,\bS_{-0}}^{-1} \bS_{-0}' \bS_{0} \bD_{T,\bS_{0}}^{-1} }_2^2 \leq \phi_{T,01}^2 + \phi_{T,02}^2,\\ &\lambda_{\min}( \bD_{T,\bS_{0}}^{-1} \bS_{0}' \bS_{0} \bD_{T,\bS_{0}}^{-1} \big) \geq \kappa_{T,0}.\end{align*} We may then conclude that
\begin{equation*}
\lambda_{\min} (\bR_{T,1}) \geq \kappa_{T,12} - (\phi_{T,01}^2 + \phi_{T,02}^2) / \kappa_{T,0} \geq \kappa_{T,-0}.
\end{equation*}

In similar fashion we can bound
\begin{equation*}
\begin{aligned}
\norm{\bR_{T,2}}_2 &\leq \norm{\bD_{T,\bS_{-0}}^{-1} \bS_{-0}' \bX_{-p}^* T^{-1/2}}_2 + \norm{\bD_{T,\bS_{-0}}^{-1} \bS_{-0}' \bS_{0} \bD_{T,\bS_{0}}^{-1}}_2 \\
&\quad \times \norm{ \bD_{T,\bS_{0}}^{-1} \bS_{0}' \bX_{-p}^* T^{-1/2}}_2 / \lambda_{\min}( \bD_{T,\bS_{0}}^{-1} \bS_{0}' \bS_{0} \bD_{T,\bS_{0}}^{-1} \big) \\
&\leq \phi_{T,01} + \phi_{T,02} + 2(\phi_{T,01}^2 + \phi_{T,02}^2) / \kappa_{T,0} \leq \phi_{T,-0}.
\end{aligned}
\end{equation*}
Letting 
\begin{equation}
\Sigma_{xx} := \E \left( T^{-1} \bX_{-p}^{*\prime} \bX_{-p}^* \right),\qquad
\Sigma_{sx} := \E \left(T^{-1} \bS_{0}' \bX_{-p}^* \right), \qquad
\Sigma_{ss} := \E \left(T^{-1} \bS_{0}' \bS_{0} \right),
\end{equation}
it follows directly from Assumption \ref{ass:URproperties} that $\norm{\bA_{T,1} - \bSigma_{x|w}}_2 \leq \delta_T$ with probability $1 - \Delta$, where 
\begin{equation}
\bSigma_{x|s} := \bSigma_{xx} - \bSigma_{sx}' \bSigma_{ss}^{-1} \bSigma_{sx}.
\end{equation}
Putting everything together, we then find that $\bA_T = \bSigma_{x|s} + o_p(1)$. 

Next we consider the numerator
\begin{equation*}
\begin{split}
\bb_T &= T^{-1/2}\bX_{-p}^{*\prime} \bu - T^{-1/2}\bX_{-p}^{*\prime} \bS \bD_{T,\bS}^{-1} \big(\bD_{T,\bS}^{-1} \bS' \bS \bD_{T,\bS}^{-1} \big)^{-1} \bD_{T,\bS}^{-1} \bS' \bu\\
&= \underbrace{T^{-1/2}\bX_{-p}^{*\prime} \M(\bS_{0}) \bu}_{\bb_{T,1}} - \underbrace{T^{-1/2} \bX_{-p}^{*\prime} \M(\bS_{0}) \bS_{-0} \bD_{T,\bS_{-0}}^{-1} }_{\bR_{T,2}}\\
&\quad\times \big(\underbrace{\bD_{T,\bS_{-0}}^{-1} \bS_{-0}' \M(\bS_{0}) \bS_{-0} \bD_{T,\bS_{-0}}^{-1}}_{\bR_{T,1}} \big)^{-1} \underbrace{\bD_{T,\bS_{-0}}^{-1} \bS_{-0}' \M(\bS_{0}) \bu}_{\br_{T, 3}}\\
\end{split}
\end{equation*}
The terms $\bR_{T,1}$ and $\bR_{T,2}$ were already treated for the denominator. For $\br_{T,3}$ we get
\begin{equation*}
\begin{aligned}
\norm{\br_{T,3}}_2 &\leq \norm{\bD_{T,\bS_{-0}}^{-1} \bS_{-0}' \bu}_2 + \norm{\bD_{T,\bS_{-0}}^{-1} \bS_{-0}' \bS_{0} \bD_{T,\bS_{0}}^{-1}}_2 \\
&\quad \times \norm{ \bD_{T,\bS_{0}}^{-1} \bS_{0}' \bu}_2 / \lambda_{\min}( \bD_{T,\bS_{0}}^{-1} \bS_{0}' \bS_{0} \bD_{T,\bS_{0}}^{-1} \big) \\
&\leq \gamma_{T,u,1} + \gamma_{T,u,2} + (\phi_{T,01} + \phi_{T,02}) \gamma_{T,u,0} / \kappa_{T,0} \leq \gamma_{T,u}.
\end{aligned}
\end{equation*}
Finally, it directly follows from Assumption \ref{ass:URproperties} that
\begin{equation*}
\norm{\bb_{T,1} - T^{-1/2}\bX_{-p}^{*\prime} \bu - \bSigma_{wx}' \bSigma_{ww}^{-1} \bS_{0}^{\prime} \bu}_2 \leq \delta_T. 
\end{equation*}
Then we can invoke the final statement of Assumption \ref{ass:URproperties} to conclude that
\begin{equation*}
\bb_{T,1} = T^{-1/2} 
\begin{bmatrix}
\bI_p & - \bSigma_{sx}' \bSigma_{ss}^{-1}
\end{bmatrix}
\begin{bmatrix}
\bX_{-p}^{*\prime} \bu \\ \bS_{0}^{\prime} \bu 
\end{bmatrix}
+o_p(1)\xrightarrow{d} N(\bzero, \sigma^2 \bSigma_{x|s}).
\end{equation*}

Plugging everything back into \eqref{deviation_2}, it then directly follows from the results above that
\begin{equation*}
\begin{split}
\sqrt{T} (\hat{\bbeta}_p^* - \bbeta_p^*) &\xrightarrow{d} N \left(\bzero, \sigma_u^2 \bSigma_{x|s}^{-1} \right).
\end{split}
\end{equation*}

We finally consider the test statistics. For the Wald test we get that
\begin{equation*}
\begin{split}
\Wald^* &= \hat{\sigma}_u^{*-2} \underbrace{T^{-1/2} \by'  \M(\bV_{+, \hS}^*) \bX_{-p}^* }_{\bF_{T}'}\big(\underbrace{T^{-1} \bX_{-p}^{*\prime} \M(\bV_{+,\hS}^*) \bX_{-p}^* \big)^{-1} }_{\bG_{T}} \underbrace{T^{-1/2} \bX_{-p}^{*\prime} \M(\bV_{+,\hS}^*)\by}_{\bF_{T}},
\end{split}
\end{equation*}
where $\hat{\sigma}_u^{*2} = \hat{\bu}_+^{*\prime} \hat{\bu}_+^*/T = \hat{\bu}_+^{\prime} \hat{\bu}_+/T$. It follows directly from the results obtained in the proof of Theorem \ref{th:pds_cons} as well as the current proof that $\bG_T = \bSigma_{x|w} + o_p(1)$ and $\bF_T \xrightarrow{d} N(\bzero, \sigma^2 \bSigma_{x|s})$. Furthermore,
\begin{equation} \label{eq:hatsigma}
\begin{split}
\hat{\sigma}_u^{*2} &= T^{-1} \bu' \bu - T^{-1} \bu' \M(\bV_{+,\hS}) \bX_{-p} (\hat{\bbeta}_p - \bbeta) + T^{-1} \bu' \M(\bV_{+,\hS}) \bV \bdelta \\
&\quad - T^{-1} (\hat{\bbeta}_p - \bbeta)' \bX_{-p}' \M(\bV_{+,\hS}) \bu + T^{-1} (\hat{\bbeta}_p - \bbeta)' \bX_{-p}' \M(\bV_{+,\hS}) \bX_{-p} (\hat{\bbeta}_p - \bbeta)\\
&\quad  + T^{-1} \bu' \M(\bV_{+,\hS}) \bV \bdelta - T^{-1} \bdelta' \bV' \M(\bV_{+,\hS}) \bu\\
&\quad  + T^{-1} \bdelta' \bV' \M(\bV_{+,\hS}) \bX_{-p} (\hat{\bbeta}_p - \bbeta) + T^{-1} \bdelta' \bV' \M(\bV_{+,\hS}) \bV \bdelta.
\end{split}
\end{equation}
All terms -- except for the leading $T^{-1} \bu'\bu$ term -- can be treated exactly as in the proof of Theorem \ref{th:pds_cons} to show that we may replace $\hS$ with $\S$, and as in the current proof to show their asymptotic negligibility. It then follows directly from the weak law of large numbers that $\hat{\sigma}_u^{*2} \xrightarrow{p} \sigma_u^2$. It then immediately follows that $\Wald \to \chi^2_p$.

The result for the LM test follows identically, except that we replace $\hat{\sigma}_u^{*2}$ by $\tilde{\sigma}_u^{*2} = \by' \M(\bV_{+,\hS}) \by$. As $\bbeta = \bzero$ under the null hypothesis, we can decompose $\tilde{\sigma}_u^{*2}$ as in \eqref{eq:hatsigma} with the terms involving  set to 0. It then directly follows that $\tilde{\sigma}_u^{*2} \xrightarrow{p} \sigma_u^2$ and $\LM \xrightarrow{d} \chi^2_p$.
\end{proof}

\section{
Verification of Assumption \ref{ass:URproperties}} \label{sec:verifyUR}
In this section we show how Assumption \ref{ass:URproperties} is satisfied using standard results from the unit root and cointegration literature. We will work under a set of assumptions commonly encountered in the cointegration literature \citep[see e.g.]{johansen1995likelihood}.\footnote{For simplicity we omit any deterministic components such as intercepts or linear trend terms in the analysis below. With suitable modifications however, the analysis goes through in a similar way if they are included.}

\begin{assumption}\label{ass:Johansen} 
Let $\bz_t$ denote an $n$-dimensional process, generated as $\bA(L) \bz_t = \bu_t$, where $\bA(z) = \bI - \sum_{j=1}^p \bA_j z^j$. Then $\bA(L)$ satisfies the following conditions:
\begin{itemize}
\item[(a)] $\abs{\boldsymbol{A}(z)}=0$ implies $|z|>1$ or $z=1$;
\item[(b)] Let $\bPi = -\bA(1)$, then we can write $\Pi = \boldsymbol{\mathcal{A}}\boldsymbol{B}'$, for some $\boldsymbol{\mathcal{A}}$ and $\boldsymbol{B}$, where $\boldsymbol{\mathcal{A}}$ and $\boldsymbol{B}$ are $n \times r$ matrices of rank $r$.
\end{itemize}
In addition, for an $n \times r$ matrix $\bC$ let $\bC_{\perp}$ be an $n\times (n-r)$ matrix of rank $n-r$ such that $\bC^{\prime} \bC_{\perp}=\bzero$, and define $\bar{\bC}_{\perp}=\bC_{\perp}\left(\bC_{\perp}^{\prime} \bC_{\perp}\right)^{-1}$. Then, one of the following conditions holds:
\begin{itemize}
\item[(c)] $\boldsymbol{\mathcal{A}}_{\perp}^{\prime} \tilde{\bA}(1) \boldsymbol{B}_{\perp}$ is nonsingular, where $\tilde{\bA}(z) = \bI_n -\sum_{j=1}^{p-1} \tilde{\bA}_j z^j$ and $\tilde{\bA}_j = \sum_{i=j+1}^{p} \bA_i$;
\item[(c$^\prime$)] $\bar{\boldsymbol{\mathcal{A}}}_{\perp}^{\prime} \tilde{\bA}(1) \bar{\boldsymbol{B}}_{\perp}=\bF \bG^{\prime}$, where and $\bF$ and $\bG$ are $(n-r)\times r'$ matrices of rank $r'$ ($0\leq r' < n - r$).
\end{itemize}
\end{assumption}

The rationale for these assumptions comes from rewriting the VAR model as a vector error correction model (VECM), such that
\begin{equation}\label{ecm}
\Delta \bz_t = \bPi \bz_{t-1} + \sum_{j=1}^{p-1} \tilde{\bA}_j \Delta z_{t-j} +  + \bu_t,
\end{equation}
where $\tilde{\bA}_j = \sum_{i=j+1}^{p} \bA_i$ and $\bPi = \bA(1)$. Assumption \ref{ass:Johansen}(a) rules out explosive processes but allows for unit roots; \ref{ass:Johansen}(b) allows for a reduced rank in $\bPi$, i.e., cointegration is present. If \ref{ass:Johansen}(c) holds, $\bz_t$ is $I(1)$ and linear combinations $\bB' \bz_t$ are $I(0)$. If \ref{ass:Johansen}(c') holds, $\bz_t$ is $I(2)$ and appropriate linear combinations are $I(1)$ or even $I(0)$. This can be seen by again rewriting the VECM as
\begin{equation}\label{ecm2}
\Delta^2 \bz_t = \bPi \bz_{t-1} + \bPhi \Delta \bz_{t-1} + \sum_{j=1}^{p-2} \bA^{*}_j \Delta^2 \bz_{t-j} + \bu_t,
\end{equation}
where $\bPhi = \tilde{\bA}(1)$ and $\bA_j^* = \sum_{i=j+1}^{p-1} \tilde{\bA}_i$.\footnote{Formally, we need another assumption to rule out that the process is $I(3)$, see \citet[Theorem 3]{johansen1992representation} for details.} 

Under these assumptions, we can find appropriate rotation matrices $\bQ$ such that $\bzeta_t = \bQ \bz_t$ can be partitioned into variables ``full-rank'' $\bzeta_{0,t}, \bzeta_{1,t}$ and $\bzeta_{2,t}$ that are $I(0)$, $I(1)$ and $I(2)$ respectively (where some sets may be empty). Limit results for appropriate products of variables with different orders of integration are well-known and have been derived by a variety of papers -- often under weaker assumptions than Assumptions \ref{ass:basic} and \ref{ass:Johansen} -- see for example \citet{phillips1986multiple}, \citet{park1988statistical}, \citet{sims1990inference}, \citet{phillips1992asymptotics}, or \citet{hamilton1994time} for an overview. We therefore state these results here without proof.

\begin{lemma} \label{lem:URfixed_n}
Let $\bz_t$ satisfy Assumptions \ref{ass:basic} and \ref{ass:Johansen} and let $\bzeta_t = \bQ \bz_t$. Let $\tilde{\bzeta}_t = (\bzeta_{0,t}, \Delta \bzeta_{1,t}, \Delta^3 \bzeta_{2,t})$ and define $\bSigma := \E \tilde{\bzeta}_t \tilde{\bzeta}_t'$, and partition it conformably such that $\bSigma_{ij} = \E \tilde{\bzeta}_{i,t} \tilde{\bzeta}_{j,t}'$. Let $\bLambda = \sum_{j=1}^{\infty}\E \tilde{\bzeta}_t \tilde{\bzeta}_{t+j}'$ and $\bOmega = \bSigma + \bLambda + \bLambda'$, where $\bOmega_{ij}$ refers to the same submatrix as $\bSigma_{ij}$, and assume that both $\bSigma$ and $\bOmega$ are full rank. Let $\bB(\cdot)$ denote an $n$-dimensional Brownian motion with covariance matrix $\bOmega$, which can be partitioned conformably with $\bzeta_{0,t}, \bzeta_{1,t}$ and $\bzeta_{2,t}$ as $\bB(\cdot) = (\bB_0(\cdot)', \bB_1(\cdot)', \bB_2(\cdot)')'$, $\Bar{\bB}_i(s)=\int_0^s \bB_i(u)du$ and $B_u(\cdot)$ a univariate Brownian motion independent of $\bB(\cdot)$. Then we have the following results (where convergence applies jointly if needed):
\begin{align*}
&(a) \quad T^{-1}\sum_{t=1}^T \bzeta_{0,t} \bzeta_{0,t}'\overset{p}{\to}\bSigma_{00}; &
&(b) \quad T^{-1/2}\sum_{t=1}^T \bzeta_{0,t-1} u_t \overset{d}{\to} N(\bzero, \sigma_u^2 \bSigma_{00}); \\
&(c) \quad T^{-1}\sum_{t=1}^T \bzeta_{1,t-1} u_t \overset{d}{\to}\int_0^1 \bB_1(s)'d\bB_{u}(s); &
&(d) \quad T^{-1}\sum_{t=1}^T \bzeta_{1,t} \bzeta_{0,t}' \overset{d}{\to} \int_0^1 \bB_1(s) d\bB_{0}(s)' + \bSigma_{10} + \bLambda_{10};\\
&(e) \quad T^{-2}\sum_{t=1}^T \bzeta_{1,t} \bzeta_{1,t}' \overset{d}{\to} \int_0^1 \bB_1(s) \bB_1(s)' ds; &
&(f) \quad T^{-2}\sum_{t=1}^T \bzeta_{2,t-1} u_t \overset{d}{\to}\int_0^1 \Bar{\bB}_2(s)'d\bB_{u}(s);\\
&(g) \quad T^{-2}\sum_{t=1}^T \bzeta_{2,t} \bzeta_{0,t}' \overset{d}{\to} \int_0^1 \Bar{\bB}_2\bB_{0}(s)'&
&(h) \quad T^{-3}\sum_{t=1}^T \bzeta_{2,t} \bzeta_{1,t}' \overset{d}{\to} \int_0^1 \Bar{\bB}_2 \bB_{1}(s)'\\
&(i) \quad T^{-4}\sum_{t=1}^T \bzeta_{2,t} \bzeta_{3,t}' \overset{d}{\to} \int_0^1 \Bar{\bB}_2 \Bar{\bB}_2(s)' 
\end{align*}
\end{lemma}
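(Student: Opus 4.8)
\textbf{Proof proposal for Lemma~\ref{lem:URfixed_n}.} The plan is to reduce all nine statements to a single joint functional central limit theorem (FCLT) for the partial sums of the stationary building blocks of $\bzeta_t$, and then read off each result either by the continuous mapping theorem (for the terms that are purely quadratic in the integrated levels) or by a convergence-to-stochastic-integrals theorem for dependent, martingale-driven arrays (for the terms that pair an integrated level with an innovation or with a stationary increment). Each individual limit is classical --- see \citet{phillips1986multiple}, \citet{park1988statistical}, \citet{sims1990inference}, \citet{phillips1992asymptotics}, and the textbook treatment in \citet{hamilton1994time} --- so the only genuine work is to check that Assumptions~\ref{ass:basic} and \ref{ass:Johansen} supply the primitive conditions under which those theorems apply, and to keep track of the scaling matrices and bias terms.

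First I would invoke the Granger representation theorem implied by Assumption~\ref{ass:Johansen} (see \citet[Ch.~7]{lutkepohl2005new}, and \citet[Theorem~3]{johansen1992representation} for the $I(2)$ case) to write $\bzeta_{0,t}$ as a mean-zero stationary linear process, $\bzeta_{1,t}$ as a partial sum of a stationary linear process plus a stationary remainder, and $\bzeta_{2,t}$ as a double partial sum plus lower-order ($I(1)$ and $I(0)$) pieces. Collecting the stationary increments into $\tilde{\bzeta}_t$ as in the statement, Assumption~\ref{ass:basic}(a) (an mds with $2+\delta$ moments and positive-definite covariance) together with the $\ell_1$-summability of the moving-average coefficients --- inherited from the roots of $\det \bA(z)$ lying on or outside the unit circle --- gives a Beveridge--Nelson decomposition $\tilde{\bzeta}_t = \bC \bu_t + \Delta(\text{stationary})$ with a martingale-difference leading term. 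A martingale functional central limit theorem then yields the joint invariance principle
\begin{equation*}
T^{-1/2} \sum_{t=1}^{\lfloor T \cdot \rfloor} \begin{pmatrix} \tilde{\bzeta}_t \\ u_t \end{pmatrix} \Rightarrow \begin{pmatrix} \bB(\cdot) \\ B_u(\cdot) \end{pmatrix},
\end{equation*}
a Brownian motion with the long-run covariance $\bOmega$ of the statement, jointly with $T^{-1/2} \bzeta_{1,\lfloor T \cdot \rfloor} \Rightarrow \bB_1(\cdot)$ and $T^{-3/2} \bzeta_{2,\lfloor T \cdot \rfloor} \Rightarrow \bar{\bB}_2(\cdot)$, the latter obtained by a second partial summation / Riemann-sum argument under Assumption~\ref{ass:Johansen}(c$^\prime$).

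With the FCLT in hand the rest is bookkeeping. Part (a) is the ergodic/weak law of large numbers for the stationary $\bzeta_{0,t} \bzeta_{0,t}'$; part (b) is a martingale CLT for $T^{-1/2} \sum_t \bzeta_{0,t-1} u_t$, which is a martingale-difference array since $\bzeta_{0,t-1}$ is $\mathcal{F}_{t-1}$-measurable and $u_t$ is an mds with $\E(u_t^2 \mid \mathcal{F}_{t-1}) = \sigma_u^2$, so its conditional variance averages to $\sigma_u^2 \bSigma_{00}$ by part (a). Parts (e), (h) and (i) --- both factors integrated --- follow from the continuous mapping theorem applied to the rescaled partial-sum processes, e.g.\ $T^{-2} \sum_t \bzeta_{1,t} \bzeta_{1,t}' = \int_0^1 (T^{-1/2}\bzeta_{1,\lfloor Ts\rfloor})(T^{-1/2}\bzeta_{1,\lfloor Ts\rfloor})'\,ds + o_p(1) \Rightarrow \int_0^1 \bB_1 \bB_1'$. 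Parts (c), (d), (f) and (g) --- an integrated level paired with $u_t$ or with $\bzeta_{0,t} = \Delta(\cdot)$ --- are handled by the convergence-to-stochastic-integrals theorem for dependent heterogeneous arrays (\citet{phillips1992asymptotics}; see also \citet{hamilton1994time}); in (d) the extra term $\bSigma_{10} + \bLambda_{10}$ is precisely the one-sided long-run covariance $\sum_{j \geq 0} \E(\Delta \bzeta_{1,t} \bzeta_{0,t+j}')$ that this theorem produces when the increments of the two factors are contemporaneously and serially correlated.

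The main obstacle is the joint convergence-to-stochastic-integrals step with the \emph{correct} bias correction, as in (d): one needs not merely marginal weak convergence but joint weak convergence of the partial-sum process and the martingale sums, together with a uniform-integrability / maximal-inequality argument showing that the endpoint and ``off-diagonal'' remainder terms vanish --- all under only $2+\delta$ moments and mds (rather than i.i.d.) innovations. Securing the underlying martingale FCLT at this moment level through the Beveridge--Nelson decomposition is the companion technical point; once both are in place, (a)--(i) follow by the continuous mapping theorem and routine algebra.
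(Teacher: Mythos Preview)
Your sketch is correct and follows the standard route through the Granger representation theorem, Beveridge--Nelson decomposition, a martingale FCLT, and the convergence-to-stochastic-integrals results of \citet{phillips1992asymptotics} and related work. This is precisely the machinery developed in the references the paper cites. Note, however, that the paper itself does \emph{not} prove this lemma: it simply states the nine convergence results and writes ``We therefore state these results here without proof,'' referring the reader to \citet{phillips1986multiple}, \citet{park1988statistical}, \citet{sims1990inference}, \citet{phillips1992asymptotics}, and \citet{hamilton1994time}. So there is no ``paper's proof'' to compare against --- your proposal is in fact more detailed than what the authors provide, and it faithfully reconstructs the arguments one would find in those sources.
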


Using these results, we can now determine the appropriate rates in Assumption \ref{ass:URproperties}. Let $\eta_T$ denotes a sequence such that $\eta_T \to \infty$ at an arbitrarily slow rate.  First, it follows from (d), (f) and (h) that $\phi_{T,01} = \phi_{T,02} = \eta_T T^{-1/2}$. Next, we may directly deduce from (a) that $\kappa_{T,0} = \eta_T^{-1}$. Similarly, it follows from (e), (g), (h) and (i) that $\kappa_{T,12} = \eta_T^{-1}$. This was shown for $\kappa_{T,1}$ in \citet{phillips1990statistical}, but the same reasoning can be applied to the whole matrix. It follows from (b), (c) and (f) that $\gamma_{T,i} = \eta_T$ for $i=0,1,2$, and finally result (d) of Assumption \ref{ass:URproperties} follows from (a). 

It then remains to check the rates. As we chose $\eta_T$ growing arbitrarily slowly, it follows that $\eta_T^3 T^{-1/2} \to 0$. We then have that
\begin{enumerate}[(i)]
\item $\kappa_{T,0} -  (\phi_{T,01}^2 + \phi_{T,02}^2) / \kappa_{T,0} = \eta_T^{-1} - 2 \eta_T^{3} T^{-1} \geq \eta_T^{-1} / 2 = \kappa_{T,-0}$;
\item $\phi_{T,01} + \phi_{T,02} + 2(\phi_{T,01}^2 + \phi_{T,02}^2) / \kappa_{T,0} = 2\eta_T T^{-1/2} + 2 \eta_T^3 T^{-1} \leq 2 \eta_T T^{-1/2} = \phi_{T,-0}$;
\item $\gamma_{T,u,1} + \gamma_{T,u,2} + (\phi_{T,01} + \phi_{T,02}) \gamma_{T,u,0} / \kappa_{T,0} = 2 \eta_T + 2 \eta^3 T^{-1/2} \leq 2\eta_T = \gamma_{T,u}$;
\item $\phi_{T,-0}(\phi_{T,-0} + \gamma_{T,u}) / \kappa_{T,-0} \leq \eta_T^3 T^{-1/2} \leq \delta_T$.
\end{enumerate}
We complete this verification by noting that the final result simply follows from result (b).

\section{
Additional Empirical Analysis} \label{sec:vix}
Here we extend the analysis of Section \ref{sec:emp_appl} using the latest vintage of FRED-MD at the moment of writing, namely that of September $2022$ (453 observations). As mentioned, this vintage does include all latest crises too and VXOCLSx is replaced by VIXCLSx.
\begin{figure}[H]
\begin{minipage}{0.5\textwidth}
\centering
\includegraphics[width=0.95\textwidth, trim = {3.4cm 3cm 2.2cm 2.6cm},clip]{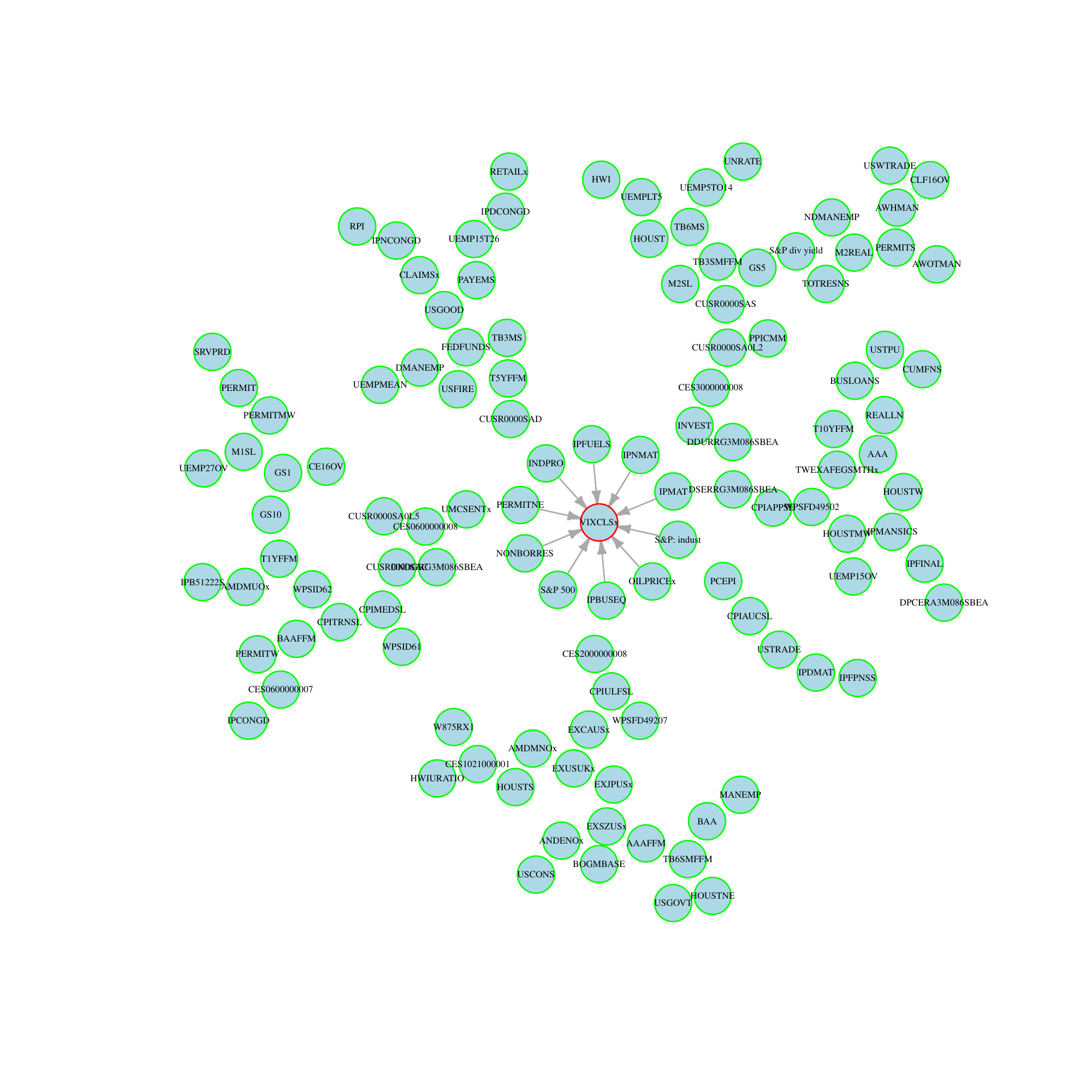}
\caption{PDS-LA-LM, Causes of VIXCLSx, $\alpha=0.01$}
\label{figura_vix1}
\end{minipage}
\begin{minipage}{.5\textwidth}
        \centering
\includegraphics[width=0.95\textwidth, trim = {3.4cm 3cm 2.2cm 2.6cm},clip]{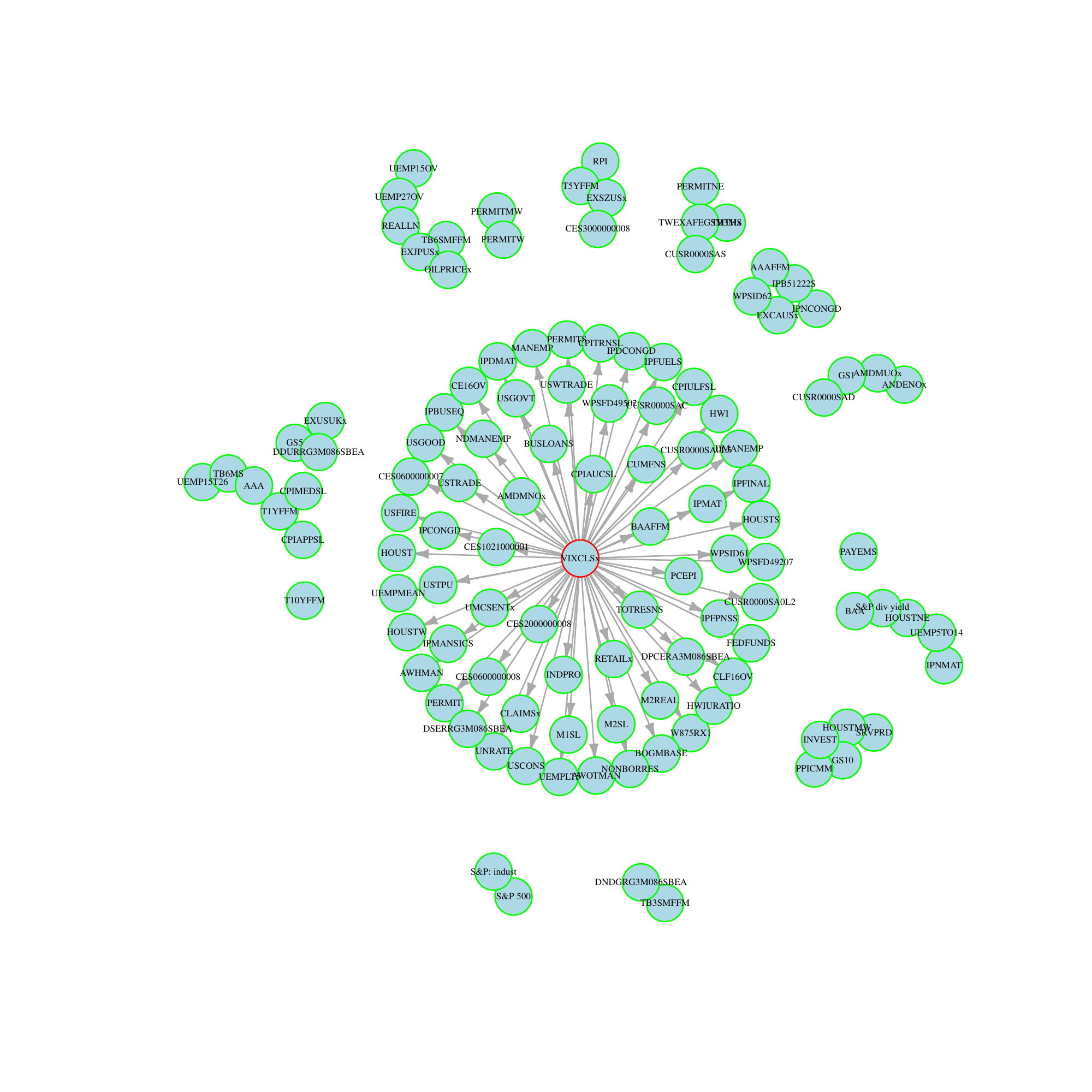}
\caption{PDS-LA-LM, Caused by VIXCLSx, $\alpha=0.01$}
\label{figura_vix2}
\end{minipage}
\end{figure}

Interestingly, a comparison of Figure \ref{figura_vox1} and Figure \ref{figura_vix1} shows how all the connections FRED$\rightarrow$VXOCLSx in the November 2019 FRED-MD vintage are found also in the FRED$\rightarrow$VIXCLSx October 2022 vintage plus two more: IPFUELS and OILPRICEx. This tells us two things: first, the change from VXOCLSx to VIXCLSx has not changed the causal infrastructure among the FRED-MD variables. This is sensible as VIXCLSx has been built from VXOCLSx and precisely with the aim to replace it. Second, the various global crises occurred between 2019 and 2022 have also not impacted the amount of insourcing Granger causal connections onto VIXCLSx but have increased substantially more the outsourcing amount of connections, as clear from Figure \ref{figura_vix2}. From 41 connections of VXOCLSx$\rightarrow$FRED in Figure \ref{figura_vox2} for the 2019 vintage, the amount of connections increased in the latest vintage to a staggering 65 connections (at $1\%$ significance level). Therefore, this contributes to the evidence that latest crises have increased substantially the perceived economic uncertainty and such uncertainty is outpouring towards a large number of macroeconomic variables.

\begin{figure}
\begin{minipage}{0.5\textwidth}
\centering
\includegraphics[width=0.95\textwidth, trim = {3.4cm 3cm 2.2cm 2.6cm},clip]{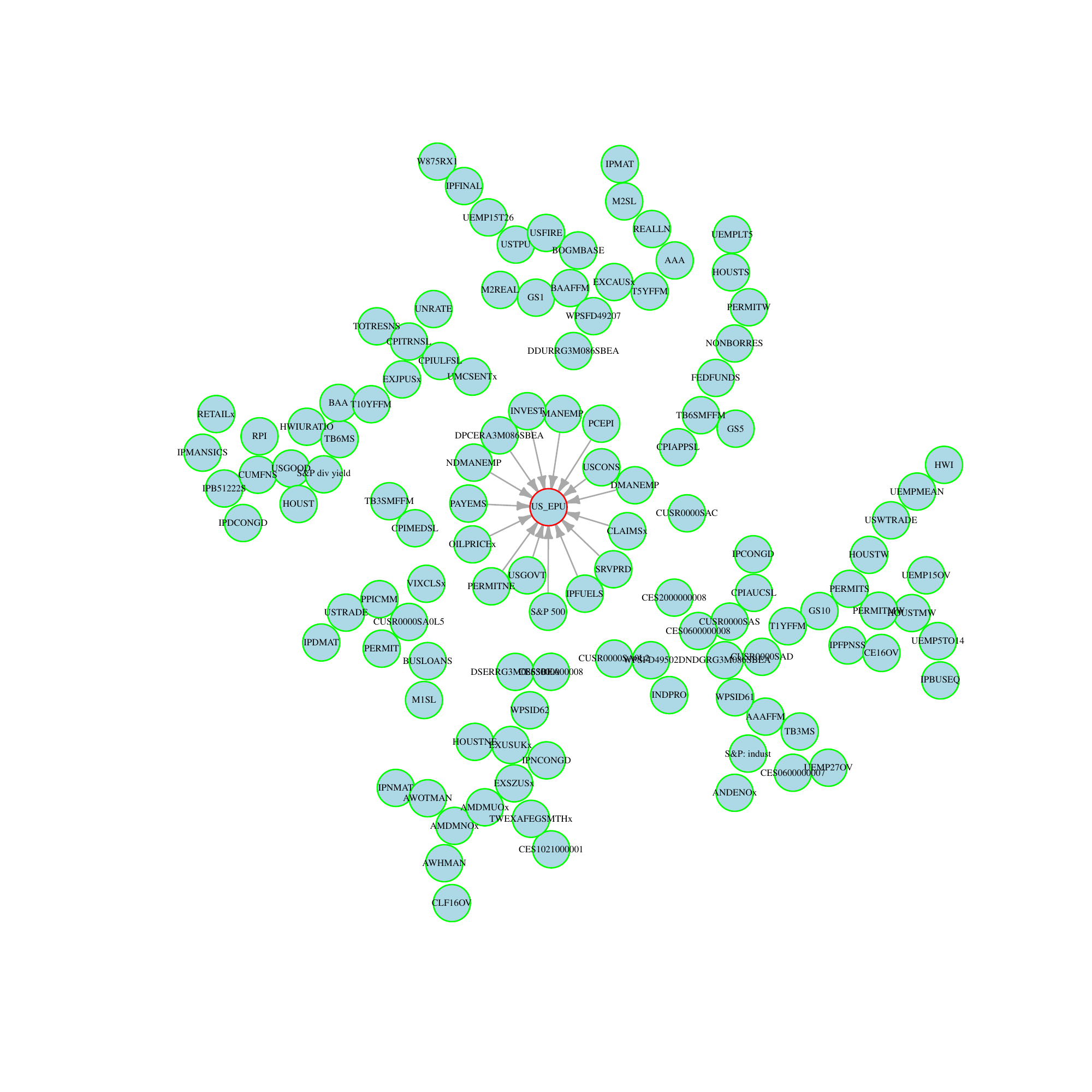}
\caption{PDS-LA-LM, Causes of US-EPU, $\alpha=0.01$}
\label{figura_vix3}
\end{minipage}
\begin{minipage}{.5\textwidth}
        \centering
\includegraphics[width=0.95\textwidth, trim = {3.4cm 3cm 2.2cm 2.6cm},clip]{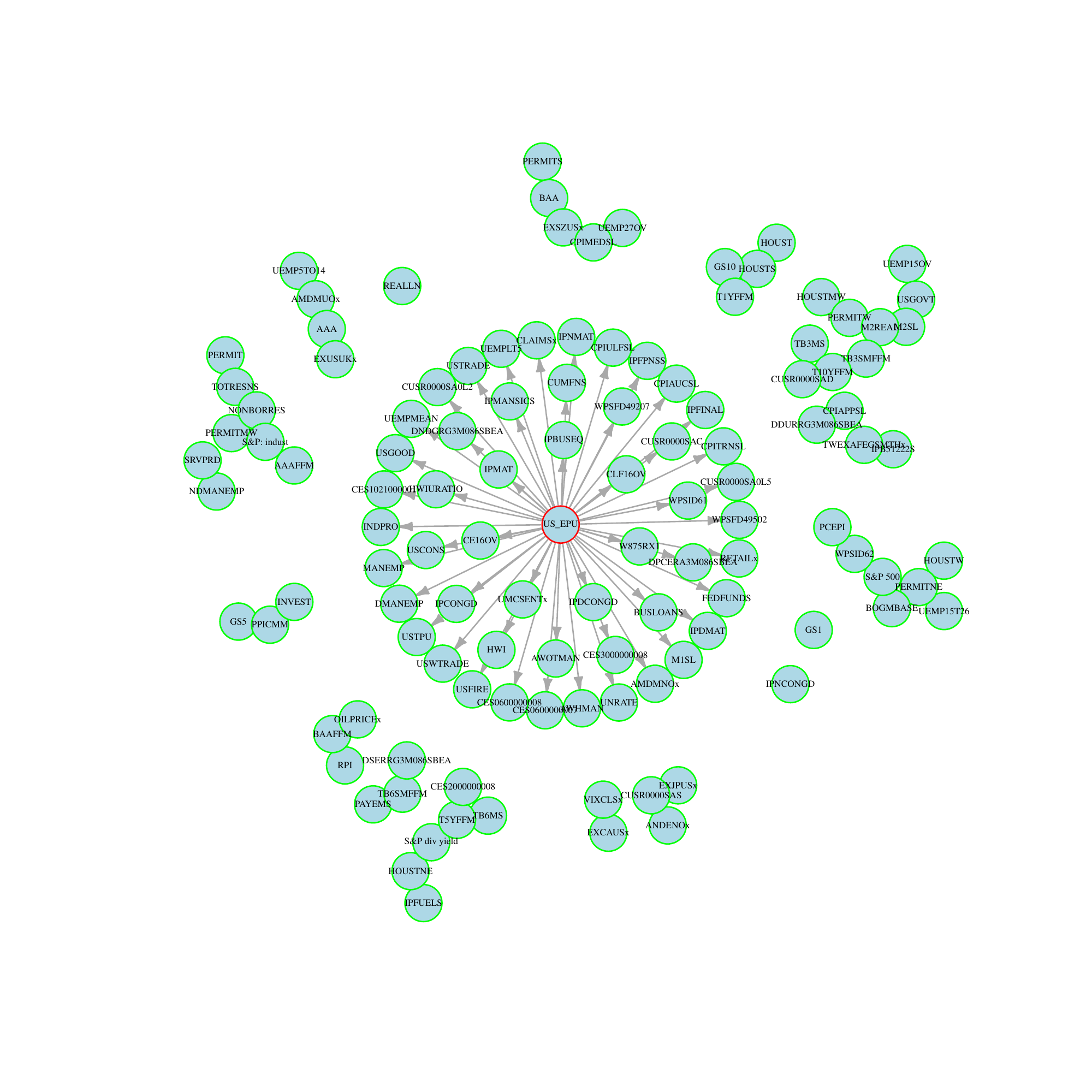}
\caption{PDS-LA-LM, Caused by US-EPU, $\alpha=0.01$}
\label{figura_vix4}
\end{minipage}
\end{figure}
\end{appendices}

\end{document}